\documentclass[11pt]{article} 

\usepackage{amsfonts, amssymb, latexsym, amsmath, amsthm, xcolor, mathtools}

\renewcommand{\theequation}{\arabic{section}.\arabic{equation}}

\arraycolsep = 0.3\arraycolsep

\def\R{\mathbb R}
\def\N{\mathbb N}

\def\supp{\mathrm{supp}\,}

\def\p0{\langle v \rangle}

\begin{document}

\sloppy

\newtheorem{theorem}{Theorem}[section]
\newtheorem{definition}[theorem]{Definition}
\newtheorem{proposition}[theorem]{Proposition}
\newtheorem{cor}[theorem]{Corollary}
\newtheorem{lemma}[theorem]{Lemma}
\newtheorem*{comment}{Comment}
\theoremstyle{remark}
\newtheorem{remark}[theorem]{Remark}

\renewcommand{\theequation}{\arabic{section}.\arabic{equation}}

\title{Oppenheimer-Snyder type collapse for a collisionless gas}

\author{H{\aa}kan Andr\'{e}asson\\
        Mathematical Sciences\\
        Chalmers University of Technology\\
        G\"{o}teborg University\\
        S-41296 G\"oteborg, Sweden\\
        email: hand@chalmers.se\\
        \ \\
        Gerhard Rein\\
        Fakult\"at f\"ur Mathematik, Physik und Informatik\\
        Universit\"at Bayreuth\\
        D-95440 Bayreuth, Germany\\
        email: gerhard.rein@uni-bayreuth.de}

\maketitle

\begin{abstract}
  In 1939, {\sc Oppenheimer} and {\sc Snyder} showed that
  the continued gravitational collapse of a self-gravitating matter
  distribution can result in the formation of a black hole, cf.~ \cite{OS}.
  In this paper, which has greatly influenced the evolution of ideas
  around the concept of a black hole, matter was modeled as dust, a fluid
  with pressure equal to zero. We prove that when the corresponding initial
  data are suitably approximated by data for a collisionless gas
  as modeled by the Vlasov equation, then a trapped surface forms
  before the corresponding solution to the Einstein-Vlasov system
  can develop a singularity and again a black hole arises. As opposed to the
  dust case the pressure does not vanish for such solutions.
  As a necessary
  starting point for the analysis, which is carried out in
  Painlev\'{e}-Gullstrand coordinates, we prove a local existence
  and uniqueness theorem for regular solutions together with a corresponding
  extension criterion. The latter result will also become useful when
  one perturbs dust solutions containing naked singularities in the Vlasov
  framework.
\end{abstract}
\section{Introduction}
\setcounter{equation}{0}
It is well known that in the context of General Relativity the gravitational
collapse of some matter distribution can result in the formation of a spacetime
singularity where the structure of spacetime breaks down. Historically the first
example where this was shown is the seminal paper \cite{OS} by
{\sc J.~R.~Oppenheimer} and {\sc H.~Snyder}.
The matter model employed in their analysis was a perfect, compressible
fluid with pressure identically equal to zero, a model often referred to
as dust; the initial data consist of a homogeneous
ball of such dust. Since such matter cannot
build up any force acting against being compressed
ad infinitum by its own gravity, 
it is not surprising that the dust ball collapses.
The ground-braking insight was what this entails for the metric
and causal structure of spacetime: A spacetime singularity
forms. The comforting aspect of the analysis is that this
singularity is hidden behind
an event horizon and cannot be observed from far away.
The Oppenheimer-Snyder solution
is therefore an example where the weak cosmic censorship hypothesis, which
was later formulated by {\sc Penrose} \cite{pen}, holds.

But the dust matter model can be criticized as being somewhat peculiar
and unrealistic, since usually matter will build up a force going counter
to the compression. In addition, it is known that for dust and a class of
inhomogeneous data,
gravitational collapse can also result in naked singularities
which are not hidden behind an event horizon and are forbidden by the
weak cosmic censorship hypothesis \cite{chrnd}.

In the present paper we 
investigate a situation analogous to the Oppenheimer-Snyder
set-up, but with a collisionless gas as matter
model instead of dust, i.e., we try to prove on analogue
to the Oppenheimer-Snyder collapse for the Einstein-Vlasov system;
the system is formulated in Section~\ref{section-ev}.
Besides the fact that a self-gravitating collisionless gas is often used in
astrophysics to model objects like galaxies or globular clusters,
cf.~\cite{BT},
there are several more specific reasons which motivate this analysis.
Firstly, dust can be viewed as a
singular special case of Vlasov and initial dust data can in
a precise sense be approximated
by Vlasov data. Secondly, Vlasov matter has, when viewed macroscopically,
non-trivial pressure which may in general prevent collapse;
for example, there exist plenty of steady states
with Vlasov matter, but there can exist none with dust;
for Vlasov matter the pressure in general is anisotropic and also
the current is non-trivial.
Thirdly, so far no naked singularities
have been shown to exist with Vlasov matter.
Finally, dust produces singularities also in the Newtonian
regime, but with Vlasov solutions exist globally in the Newtonian
case. In \cite{RT} it was analyzed what happens if Vlasov data
are pushed towards dust data in the Newtonian case. That paper
was supposed to serve as a blueprint for the present analysis,
the main result of which is the following theorem, which we state
here in a somewhat vague fashion; the precise version follows below,
cf.~Theorem~\ref{maine}.

\begin{theorem}\label{maini}
  For regular initial data which approximate Oppenheimer-Snyder
  data in a suitable way 
  the corresponding solution to the Einstein-Vlasov system approximates the
  Oppenheimer-Snyder dust solution arbitrarily well. In particular, it forms
  a trapped surface before
  it can form a spacetime singularity so that just as
  in the Oppenheimer-Snyder case
  the singularity is hidden behind an event horizon.
\end{theorem}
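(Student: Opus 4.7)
The plan is to mimic the Newtonian blueprint of \cite{RT} inside the Painlev\'{e}-Gullstrand (PG) framework that the paper has set up. I would begin by constructing a one-parameter family of regular Vlasov initial data $f_0^\varepsilon$ that shadow the Oppenheimer-Snyder dust datum: the spatial marginal $\rho_0^\varepsilon$ should be a mollification of the characteristic function of the homogeneous ball, while the momentum distribution should concentrate on the zero-momentum section as $\varepsilon\to 0$, so that $f_0^\varepsilon \rightharpoonup \rho_0^{\mathrm{OS}}(x)\,\delta(v)$ in the appropriate weak sense. The associated metric data (lapse, PG shift) are then uniquely determined by the constraint equations, and by elementary compactness they converge to the OS data.

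Next I would invoke the local existence and uniqueness theorem together with the extension criterion stated earlier in the paper to produce, for each small $\varepsilon$, a classical solution $f^\varepsilon$ on some maximal interval $[0,T^\varepsilon)$. The centerpiece of the argument is a quantitative stability estimate between $f^\varepsilon$ and the dust evolution on any closed subinterval $[0,T]$ on which the dust solution is still regular, i.e.\ before its collapse time $T_{\mathrm{coll}}$. Concretely, I would propagate a Gr\"onwall-type estimate along the PG characteristics: the momentum support of $f^\varepsilon$ stays of size $\mathcal{O}(\varepsilon)$ up to $T$, so the macroscopic current is $\mathcal{O}(\varepsilon)$ and the pressure is $\mathcal{O}(\varepsilon^2)$, which forces the PG metric coefficients generated by $f^\varepsilon$ to converge uniformly on $[0,T]$ to those of the dust solution. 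The extension criterion then rules out breakdown of $f^\varepsilon$ before $T$, so one may choose $T$ arbitrarily close to $T_{\mathrm{coll}}$.

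With uniform convergence of the metric in hand, the final step is the detection of a trapped surface. In the OS solution an explicit time $T_\ast<T_{\mathrm{coll}}$ is known at which the boundary sphere of the dust ball becomes trapped, i.e.\ the relevant PG quantity (essentially $2m(r,t)/r-1$ at the surface radius) strictly crosses zero. Picking $T\in(T_\ast,T_{\mathrm{coll}})$ in the stability step, the uniform closeness of the Hawking/PG mass functions for $f^\varepsilon$ to those of dust on $[0,T]$ forces the same inequality to hold for $f^\varepsilon$ at a slightly perturbed time, for all sufficiently small $\varepsilon$. That yields a trapped surface inside the Vlasov spacetime, strictly before any singularity can form, which is precisely the conclusion in the precise version (Theorem~\ref{maine}).

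The main obstacle I expect is the stability estimate itself. Dust is a genuine singular limit of Vlasov: the distribution concentrates on a Lagrangian submanifold of phase space, so the map $f_0\mapsto(\rho,j,S_{ij})$ is not continuous in strong norms near the dust datum. One has to work with the characteristic flow and transport the smallness of the momentum support, while simultaneously closing a fixed-point-style estimate on the PG metric; the coupling between the metric (which depends on macroscopic moments through the constraints) and the characteristics (whose coefficients involve derivatives of the metric) is what makes this delicate, and it is where most of the technical work will live.
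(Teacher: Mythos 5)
There is a genuine gap, and it sits exactly at the step you yourself flag as the ``centerpiece'': the Gr\"onwall-type stability estimate that is supposed to force the Painlev\'e-Gullstrand metric of the Vlasov solution to converge uniformly on $[0,T]$ to that of the dust solution. This is precisely the continuous-dependence statement that the paper points out (cf.\ the introduction and Remark~\ref{contdep}) is \emph{not} available: there is no framework in which both systems are well-posed, and a direct sup-norm comparison of the macroscopic quantities with their dust counterparts cannot be closed. Concretely, the dust density is discontinuous at the edge of the ball, so $\|\rho-\rho_D\|$ is of order one near the matter boundary no matter how small $\epsilon$ is; moreover, what the argument really needs is not two-sided smallness but a \emph{one-sided} comparison with a quantitative margin, since errors in the density are amplified by the collapse dynamics (the dust quantities blow up like powers of $(1-t)^{-1}$). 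The paper achieves this by a different mechanism: (a) a domain-of-dependence argument (Lemma~\ref{homcore}) showing that in a core region $r\leq r^\ast(t)$ the Vlasov solution coincides \emph{exactly} with a spatially homogeneous Einstein-Vlasov solution whose scale factor $\gamma_{V,\epsilon}$ obeys an ODE; (b) an ODE comparison of $\gamma_{V,\epsilon}$ with the dust scale factor (Lemma~\ref{gamma_comp}); and (c) a bootstrap whose key assumption is the strict inequality $\partial_r\beta<\partial_r\beta_D$, recovered through a transport formula for $\partial_r\beta$ along characteristics (Lemma~\ref{betarepchar}), near-conservation of the mass function along characteristics (Lemma~\ref{massalongchar}), and a refined density estimate in $(r,w,L)$ coordinates (Lemma~\ref{ref_rho_est}) exploiting the anisotropic decay of the momentum support. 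The data amplitude is deliberately chosen a margin $\epsilon^{1/5}$ \emph{below} the dust value so that the Vlasov solution collapses slightly more slowly than dust and all error terms can be absorbed into that deficit (Lemma~\ref{iback}); a symmetric ``closeness'' Gr\"onwall has no access to this sign structure.

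Two further points. Your quantitative scalings are off for the natural data normalization: with momentum support of size $\epsilon^{1/2}$ one gets $j=\mathrm{O}(\epsilon^{1/2})$ and $p=\mathrm{O}(\epsilon)$ (the paper only uses $\mathrm{O}(\epsilon^{1/2})$ bounds), not $j=\mathrm{O}(\epsilon)$, $p=\mathrm{O}(\epsilon^2)$; this matters because the whole bookkeeping of the bootstrap is a hierarchy of powers $\epsilon^{1/5},\epsilon^{1/4},\epsilon^{1/2}$ that must be consistent. And for the trapped-surface detection, the paper does not rely on closeness of the Hawking mass to the dust one near the collapse time; it locates the trapped surface at $r=\tfrac13$, $t=t_1\in\,]\tfrac79,\tfrac89[$ \emph{inside the exact homogeneous core}, where the inequality $\beta>1/a$ can be verified from the homogeneous solution, and then shows by a monotonicity argument for $\partial_t\bigl(r(1-A)\bigr)$ that the surface stays trapped. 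So while your global strategy (perturbed data, local existence plus continuation criterion, detect a trapped surface strictly before the dust collapse time) matches the paper, the proof of the crucial comparison is missing and the route you indicate for it would not close as stated.
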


We should add that the Hawking-Penrose singularity theorem \cite{pen}
implies that the spacetime obtained in Theorem~\ref{maini} must be geodesically
incomplete, but this does not necessarily imply that a singularity
occurs in the sense that curvature-related scalar quantities blow up,
as they do for the Oppenheimer-Snyder solution. Whether this is the case
is an interesting topic for further research.

Theorem~\ref{maini} may
at first glance look like a result on continuous dependence on initial data.
However, there exists no mathematical framework in which both the Einstein-dust
system and the Einstein-Vlasov system are well-posed. As a matter of fact,
the result we prove gives a much more specific relation between the
solutions of the two systems: In a core region the Vlasov solution
is equal to a homogeneous solution of the Einstein-Vlasov system,
and this homogeneous solution can in turn be compared to the dust solution.
This more specific relation is then accessible to a rigorous proof.
The whole analysis is restricted to a spherically symmetric, asymptotically
flat situation and employes Painlev\'{e}-Gullstrand coordinates;
the Einstein-Vlasov system in these coordinates is formulated in
Section~\ref{section_ev_pg}.
A necessary prerequisite of the analysis is a local existence
and uniqueness result for the initial value for the
Einstein-Vlasov system in Painlev\'{e}-Gullstrand coordinates, 
together with a continuation criterion which tells us exactly which
components of the solution must be controlled in order to extend it.
This result will also become important in the analysis of
Vlasov-perturbed dust solutions with naked singularities, cf.~\cite{AR_naked}. 
The local existence and uniqueness result is formulated and proven in
Sections~\ref{section_locex_form} and \ref{section_locex_proof}.
We note that comoving coordinates, which are often used in
the dust case, do not seem to be suitable for proving a local
existence result for the Einstein-Vlasov system due to problems at the
center.
In Section~\ref{sec_hom} we review homogeneous solutions in the dust
case, construct such solutions in the Vlasov case, and establish various
relations between these solutions. Section~\ref{section_OScollapse} is
then devoted to the proof of the precise version of Theorem~\ref{maini},
cf.~Theorem~\ref{maine}.

Before we proceed some further references to the literature seem in order.
In 1965, {\sc Penrose} \cite{pen} showed that once a spacetime develops
a trapped surface then it must become singular in the sense of geodesical
incompleteness. Since trapped surfaces are stable under small perturbations
of the spacetime this result extends the Oppenheimer-Snyder
singularity formation to
more general spacetimes, at the price of a rather weak definition
of ``singularity''. In addition, this result brings up the question
of which data result in the formation of trapped surfaces;
for a long time, the zero pressure
Oppenheimer-Snyder solution was the only
known example.

The present investigation is by no means the first attempt to extend the
Oppenheimer-Snyder result to a matter model with non-zero pressure.
For the fluid case this was done in \cite{ST94}. In the case of a collapsing
fluid ball a major difficulty, already in the Oppenheimer-Snyder
case, arises at the interface between matter and vacuum which is treated as
a shock wave in \cite{ST94} and across which two different coordinate systems
need to be matched in \cite{OS}. No such difficulties arise in our approach.
We use a single coordinate system to cover the whole
spacetime---Painlev\'{e}-Gullstrand coordinates---and rely on a general
local existence
result in which the metric is as regular at the boundary of the matter as
anywhere else; the boundary takes care of itself.

In 1984, {\sc Christodoulou} \cite{chrnd} showed that
with dust as matter model solutions can 
develop naked singularities which violate weak cosmic censorship.
Later he showed that naked singularities occur also
for a self-gravitating scalar field,
but that for this ``matter'' model weak cosmic censorship does hold
in a suitable sense, cf.~\cite{Chr86,Chr87,Chr91,Chr93,Chr94,Chr99a,Chr99}.

Gravitational collapse in the context of the Einstein-Vlasov system
has previously been investigated in
\cite{AAR1,AAR2,An2,AKR10,AKR11,AR06,AR10,AR10b,OChop,RRS},
both numerically and analytically. But for the analytic results the data were
specifically tailored to lead to collapse and in particular were far from
Oppenheimer-Snyder type data or ``real'' astrophysical situations;
the initial data constructed in \cite{An2} have the
important additional property that they have a complete regular past.
The Vlasov matter model is distinguished from other models mentioned
above by several features. There are classes of initial
data---small data, data with ``outgoing'' particles---which lead
to global, geodesically complete solutions of the Einstein-Vlasov system,
cf.~\cite{AKR08,FaJoSm,LiTa19,RR}. There also exists a plethora of static and stationary solutions of this system, cf.~\cite{An0,An3,AFT,AKR14,RaRe,Rein94, RR00}. Stability or instability
of these steady states has been investigated in
\cite{GueStrRe,HaLinRe,HaRe2013,HaRe2014}, and perturbations of stable
steady states can result in time-periodic oscillations,
cf.~\cite{AR06,GueStrRe}. More background on the Einstein-Vlasov system
and also its Newtonian counterpart can be found in \cite{AGS,An1,GB,rein07}.
Altogether, a self-gravitating, collisionless gas can exhibit a large
variety of dynamical behaviors, both in the Newtonian and in the relativistic
regime, and with the present investigation we intend to add a new
one to this list which has its roots in the classical paper \cite{OS} 
by {\sc Oppenheimer} and {\sc Snyder}.

\section{The Einstein-Vlasov system} \label{section-ev}
\setcounter{equation}{0}
\subsection{The Einstein-Vlasov system in general coordinates}
The Einstein-Vlasov system describes a collisionless, self-gravitating
ensemble of particles in the context of general relativity.
Let $M$ denote a smooth four-dimensional spacetime manifold and
$g_{\mu \nu}$ a Lorentz metric on $M$ with signature $(- + + +)$.
In local coordinates $x^\mu$ the corresponding line element takes the form
\[
  ds^2 = g_{\mu \nu} d x^\mu d x^\nu;
\]
Greek indices run from $0$ to $3$,
and we use the Einstein summation convention.
The particle ensemble has a density
$f=f(x^\mu, p_\nu)\geq 0$ defined on the cotangent bundle $TM^*$,
where $p_\nu = g_{\nu \gamma} p^\gamma$ and $p^\gamma$ denote the canonical momenta
corresponding to the local spacetime coordinates $x^\mu$.
All the particles are assumed to have the same rest mass,
which we normalize to $1$,
i.e., $f$ is supported on the mass shell
\[
PM^* \coloneqq
\left\{ (x^\mu, p_\nu) \in TM^* \mid  g^{\mu \nu} p_\mu p_\nu = -1, \;
  p^\mu \ \mbox{future pointing}\right\}.
\]
We assume that on $PM^*$ the component $p^0$ can be expressed by the coordinates
$(x^\mu,p_j)$ where Latin indices run from $1$ to $3$. In what follows we write
$x^\mu = (t,x^j)$ and view $t$ as a time-like coordinate.
The basic unknowns in the Einstein-Vlasov system (besides the spacetime $M$)
are the Lorentz metric $g_{\mu \nu}$
and the particle density $f$. They obey the Einstein equations
\begin{align}\label{EE}
  G_{\mu \nu} = 8 \pi T_{\mu \nu} 
\end{align}
coupled to the Vlasov equation
\begin{align}\label{Vlaseq-general}
  \partial_t f + \frac{p^i}{p^0} \partial_{x^i} f -
  \frac{1}{2p^0} \frac{\partial g^{\nu \delta}}{\partial x^i}
  \, p_\nu p_\delta \, \partial_{p_i}f = 0 .
\end{align}
Here $G_{\mu \nu}$ is the Einstein tensor induced by the Lorentz metric
$g_{\mu \nu}$. The two equations are coupled by defining
the energy momentum tensor as
\begin{align}\label{energymomentumtensor}
  T_{\mu \nu} =  |g|^{-\frac{1}{2}} \int p_\mu p_\nu f \,  \frac{dp_1 dp_2 dp_3}{p^0},
\end{align}		
where $g$ is the determinant of the metric $g_{\mu \nu}$;
we use units so that the speed of light
and the gravitational constant equal unity.
The geodesic equations
\[
\frac{dx^\mu}{d\tau} = p^\mu = g^{\mu \nu} p_\nu, \quad
\frac{dp_\mu}{d\tau} = - \frac{1}{2} \frac{\partial g^{\nu \delta}}{\partial x^\mu}
\, p_\nu p_\delta,
\]
which describe the worldlines of test particles on the spacetime manifold
$(M, g_{\mu \nu})$, correspond to the characteristic equations of the Vlasov
equation \eqref{Vlaseq-general} for $f$ supported on the mass-shell $PM^*$,
the latter being invariant under the geodesic flow.

We wish to study the Einstein-Vlasov system
\eqref{EE}, \eqref{Vlaseq-general}, \eqref{energymomentumtensor}
under the assumptions of spherical symmetry and asymptotic flatness.
\subsection{The system in Painlev\'{e}-Gullstrand coordinates}
\label{section_ev_pg}
We consider the Einstein-Vlasov system with a metric of the form
\[
ds^2 = - dt^2 + a^2 (dr + \beta dt)^2 +
r^2 \left( d\theta^2 + \sin^2\theta d\phi^2\right) 
\]
where $t\in \R$, $r\geq 0$, $\theta\in [0,\pi]$, $\phi\in [0,2 \pi]$,
and $a=a(t,r),\ \beta=\beta(t,r)$. The radial coordinate $r$ is the
area radius, i.e., a sphere given by fixed values of $t$ and $r$
and parameterized by the polar angles $\theta$ and $\phi$ has
surface area equal to $4\pi r^2$. The meaning of the time coordinate
$t$ can be seen as follows. If we consider an observer moving
towards the future along a radial geodesic given by
\[
\dot r = \frac{dr}{dt} = - \beta(t,r), 
\]
then along this curve, i.e., for such an observer, $ds^2 = - dt^2$,
so that the time coordinate $t$ coincides with the proper time measured
by this observer; time is measured by radially falling geodesic clocks,
cf.~\cite{GC}.
Asymptotic flatness and regularity at the center require that
\begin{align} \label{bc}
\beta(t,0)=\beta(t,\infty) =0,\ a(t,0) = a(t,\infty)=1.
\end{align}
Taking suitable combinations
of the Einstein equations and defining
\begin{align} \label{Adef}
  A = \frac{1}{a^2} - \beta^2
\end{align}
results in the following set of field equations:
\begin{align}
  \partial_t\big(r(1-A)\big)
  &=
  8\pi r^2\left(\beta(\rho + p) - (a\beta^2 + a^{-1})j\right),\label{ee1}\\
  \partial_r\big(r(1-A)\big)
  &=
  8\pi r^2(\rho-a\beta j),\label{ee2}\\
  \partial_t\beta-\beta\,\partial_r\beta
  &=
  \frac{1}{2r}(1-A)+4\pi rp,\label{ee3}\\
  \partial_t a-\beta\,\partial_r a
  &=
  -4\pi r a^2 j.\label{ee4}
\end{align}
The definition of the matter terms $\rho, p, j$ in terms of $f$
follows shortly, and we note that
\begin{align*}
  T_{00}
  &= \rho - 2 a \beta j + a^2 \beta^2 p,\\
  T_{01}
  &= a^2 \beta p - a j,\\
  T_{11}
  &= a^2 p,\\
  T_{22}
  &= r^2 p_T;
\end{align*}
the tangential pressure $p_T$, which does not appear in the field equations
\eqref{ee1}--\eqref{ee4}, will play its role later on.
To formulate the Vlasov equation and the matter terms we
use Cartesian coordinates
\[
(x^1,x^2,x^3) = r\,(\sin\theta\cos\phi,\sin\theta\sin\phi,\cos\theta),
\]
and non-canonical momentum variables defined by the relations
\[
v_c = p_c + \left(\frac{1}{a} -1\right) \frac{x^j p_j}{r} \frac{x_c}{r},\quad
p_c = v_c + (a-1) \frac{x\cdot v}{r} \frac{x_c}{r}.
\]
Here $p^c$ are the canonical momenta corresponding to the spatial coordinates
defined above,
$\cdot$ denotes the Euclidean scalar product in $\R^3$,
and indices of vectors
$x, v \in \R^3$ are raised and lowered with the Kronecker $\delta$;
in addition $|v|$ denotes the Euclidean norm.
Then
\[
p^0 = \sqrt{1+|v|^2} =: \p0.
\]
When using these coordinates the function $f$ must explicitly
be required to be spherically symmetric,
i.e., $f(t,A x,A v)= f(t,x,v)$ for $A\in \mathrm{SO}(3)$.
The Vlasov equation reads
\begin{align}\label{Vl-cart}
  \partial_t f
  &+
  \left(\frac{v}{a \p0}-\beta\frac{x}{r}\right)\cdot \partial_x f\notag \\
  & {}+ \left[\left(\beta\frac{\partial_r a}{a} -\frac{\partial_t a}{a}
    +\partial_r\beta\right)\frac{x\cdot v}{r} \frac{x}{r}
    +\frac{\beta}{r}\left(v - \frac{x\cdot v}{r} \frac{x}{r}\right)\right]
  \cdot\partial_v f=0.
\end{align}
The matter terms are
\begin{align}
\rho(t,r) 
&=
\rho(t,x) = \int \p0 f(t,x,v)\,dv ,\label{rhoc}\\
p(t,r) 
&=
p(t,x) = \int \left(\frac{x\cdot v}{r}\right)^2
 f(t,x,v)\frac{dv}{\p0}, \label{pc}\\
j(t,r) 
&=
j(t,x) = \int \frac{x\cdot v}{r} f(t,x,v)\, dv, \label{jc}\\
p_T(t,r) 
&=
p_T(t,x) = \frac{1}{2} \int \left|{\frac{x\times v}{r}}\right|^2
f(t,x,v) \frac{dv}{\p0}. \label{p_Tc}
\end{align}
The system \eqref{ee1}--\eqref{jc} together with the boundary conditions
\eqref{bc} and the definition \eqref{Adef} is the Einstein-Vlasov system
in Painlev\'{e}-Gullstrand coordinates.
We need to prescribe the following initial data:
\begin{align}\label{indata}
f(0) =\mathring{f},\ a(0)=\mathring{a},\
\beta(0)=\mathring{\beta};
\end{align}
here $f(0)$ stands for $f$
restricted to the initial time slice $t=0$ etc.
Besides regularity assumptions which we specify below
the data $\mathring{a}$ and $\mathring{\beta}$ must
satisfy the boundary conditions
\eqref{bc} and the constraint equation~\eqref{ee2}.

In passing we note the following fact which will become important later on.
For a characteristic curve $s\mapsto (x(s),v(s))$ of the Vlasov equation
\eqref{Vl-cart}, $v(0)=0$  implies that $v(s)=0$ for all times.
A characteristic with this property corresponds to one of the radially
falling geodesic clocks, and in general, $v/a \p0$ is the velocity relative
to the latter.
\section{The local existence result}
\label{section_locex_form} 
\setcounter{equation}{0}
\subsection{A suitable reduced system}
In order to prove a local existence result we
need to select a suitable subsystem, as the system stated above is
over-determined.
We replace the combination of derivatives of $a$
in \eqref{Vl-cart} via \eqref{ee4} and keep only
\eqref{ee3} and \eqref{ee4} together with the corresponding
matter terms to determine $a$ and $\beta$.
We use the integrated version of \eqref{ee2} to define $A$
and arrive at the following reduced system:
\begin{align}\label{Vl-red}
  \partial_t f
  &+
  \left(\frac{v}{a \p0}-\beta\frac{x}{r}\right)\cdot \partial_x f\notag \\
  & {}+ \left[4\pi r \,a\, j\, \frac{x\cdot v}{r} \frac{x}{r}
    +\left(\partial_r\beta-\frac{\beta}{r}\right)\frac{x\cdot v}{r} \frac{x}{r}
    +\frac{\beta}{r} v\right]
  \cdot\partial_v f=0.
\end{align}
\begin{align}
  \partial_t\beta-\beta\,\partial_r\beta
  &=
  \frac{1}{2r}(1-A)+4\pi rp,\label{ee-red3}\\
  \partial_t a -\beta\, \partial_r a
  &=
  - 4\pi r a^2 j,\label{ee-red4}
\end{align}
where $\rho, p, j$ are given in terms of $f$ by \eqref{rhoc}--\eqref{jc},
and
\begin{align} \label{Adef-red}
  A(t,r) = 1-\frac{8\pi}{r}\int_0^r\left(\rho - a \beta j\right)(t,s)\,s^2 ds.
\end{align}
It is important to make sure that $a$ remains positive.
This fact together with the form of \eqref{ee-red4} motivates an
auxiliary quantity defined by
\begin{align} \label{lambdadef}
  a = e^\lambda,
\end{align}
in terms of which \eqref{ee-red4} can be rewritten as
\begin{align} \label{ee-red4-lambda}
  \left(\partial_t - \beta\, \partial_r\right)\, e^{-\lambda} = 4 \pi r\, j.
\end{align}
\subsection{Initial data}\label{data}
The initial data prescribed in \eqref{indata} must satisfy the following
assumptions:
\[
\mathring{f} \in C^1_c (\R^6),\ 
\mathring{a}\in C^1([0,\infty[),\
\mathring{\beta}\in C^2([0,\infty[),
\]
where the subscript $c$ denotes compact support, $\mathring{f}$
is spherically symmetric,
\[
\mathring{f} \geq 0,\
\mathring{a}(0) = 1 = \mathring{a}(\infty),\  
\mathring{a}'(0) = 0,\
\mathring{\beta}(0) = \mathring{\beta}''(0) = 0 = \mathring{\beta}(\infty),
\]
\[
\mathring{a}(r) > 0\ \mbox{for}\ r\geq 0,
\]
and
\begin{align} \label{constr}
\mathring{A}(r)=\frac{1}{\mathring{a}(r)^2} -\mathring{\beta}(r)^2
\ \mbox{for}\ r\geq 0,
\end{align}
where $\mathring{A}$ is as in \eqref{Adef-red} with
$\rho=\mathring{\rho}$ and $j=\mathring{\jmath}$ induced by $\mathring{f}$.
\begin{remark}
  Such data can for example be obtained by taking $\mathring{a}=1$,
  $\mathring{f}$ even in $v$ so that $\mathring{\jmath}=0$,
  and then solving \eqref{constr} for $\mathring{\beta}$,
  i.e.,
  \begin{align} \label{beta0def}
  \mathring{\beta} (r)
  = \left(\frac{8\pi}{r}\int_0^r \mathring{\rho}(s)\,s^2 ds\right)^{1/2}.
  \end{align}
  Using l'Hospital's rule one can show that indeed
  \[
  \mathring{\beta}\in C^2([0,\infty[),
  \ \mathring{\beta}(0) = \mathring{\beta}''(0) =0.
  \]
  However, we must not a-priori restrict our local existence result
  to such data, because when we extend the solution to some maximal
  existence interval, $f_{|t=t_0}, a_{|t=t_0}, \beta_{|t=t_0}$
  must for any $t_0>0$ again be admissible data.
\end{remark}
\subsection{The solution concept and the local existence result}
We first specify the regularity properties which the various parts of a
solution must satisfy.
\begin{definition}\label{regular}
  Let $I\subset \R$ be an interval.
  \begin{itemize}
    \item[(a)]
      $f\colon I\times \R^6 \rightarrow [0,\infty[$ is {\em regular}, if
      $f\in C^1(I\times \R^6)$, $f(t)$ is spherically symmetric for
      $t\in I$, and $\supp f(t)$ is compact, locally uniformly in $t \in I$. 	
    \item[(b)]
      $\rho $ (or $p$, $p_T$)$\colon  I \times \R^3 \rightarrow [0,\infty[$
      is {\em regular}, if
      $\rho \in C^1 (I\times \R^3)$, $\rho(t)$
      is spherically symmetric for $t\in I$,
      and $\supp \rho(t)$ is compact, locally uniformly in $t\in I$. 		
    \item[(c)]
      $j\colon I\times \R^3 \rightarrow \R$ is {\em regular}, if
      $ j \in C(I\times \R^3) \cap C^1 (I\times \R^3 \setminus \{0\} )$,
      $j(t)$ is spherically symmetric for $t \in I$, $ \supp j(t)$ is compact,
      locally uniformly in $t \in I$, $j \in C^1(I\times[0,\infty[)$
      as a function in $(t,r)$, and $j(t,0)=0$. 
    \item[(d)]
      $a\colon I\times [0,\infty[ \; \rightarrow \; ]0,\infty[$
      is {\em regular},
      if $a \in C^1(I\times [0,\infty[)$,
      and the boundary conditions $\partial_r a(t,0)=0,\ a(t,0)=1$ hold.
    \item[(e)]
      $\beta \colon I\times [0,\infty[ \; \rightarrow \; \R$
      is {\em regular}, if
      $\beta  \in C^1(I \times [0,\infty[)$,
      $\partial_r \beta  \in C^1(I \times [0,\infty[)$,
      and the boundary conditions
      $\beta(t,0) = \partial_r^2 \beta(t,0) =0$ hold.
    \item[(f)]
      {\em The fields are locally bounded}, if
      $a^{-1}$, $r a j$, $\beta$,
      $\partial_r\beta$ 
      are bounded on $J \times [0,\infty[$ for any compact interval
      $J\subset I$.
  \end{itemize}
\end{definition}
Note that the boundary conditions at infinity which were part of \eqref{bc}
are not included in the definition above; we will take care of them after
having established a solution in the sense of the definition.
We aim to prove the following result; by $\| \cdot \|$ we denote the
sup norm with respect to $r=|x|\in [0,\infty[$ or $(x,v)\in \R^3\times\R^3$,
as the case may be.
\begin{theorem}\label{locex}
  Each set of initial data $(\mathring{f}, \mathring{a}, \mathring{\beta})$
  as specified in Section~\ref{data} launches a unique solution $(f,a,\beta)$
  of the reduced system \eqref{Vl-red}--\eqref{Adef-red} on some time interval
  $[0,T[$, the components of
  which are regular in the sense of Definition~\ref{regular}.
  The boundary conditions \eqref{bc} hold.    
  If $T>0$ is chosen maximal and
  \begin{align}\label{contcrit}
    &
    \sup\bigl\{|v| \mid (x,v)\in \supp f(t),\ 0\leq t < T\bigr\}\notag\\
    &\qquad{}
    + \sup\bigl\{\|a(t)\| + \|\partial_r \beta(t)\| \mid 0\leq t < T\bigr\}
    < \infty,
  \end{align}
  then $T=\infty$.
\end{theorem}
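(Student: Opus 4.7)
The plan is to construct the solution via a Picard-type iteration. Starting from $(f^0,a^0,\beta^0)=(\mathring{f},\mathring{a},\mathring{\beta})$, each iterate $(f^{n+1},a^{n+1},\beta^{n+1})$ is obtained in three steps. First compute the matter terms $\rho^n,p^n,j^n$ from \eqref{rhoc}--\eqref{jc} and set $A^{n+1}$ via \eqref{Adef-red} with data $(\rho^n,j^n,a^n,\beta^n)$. Second solve \eqref{ee-red3} and \eqref{ee-red4-lambda} as scalar transport equations along the curves $\dot r=-\beta^n(t,r)$ with the prescribed initial data, which yields $\beta^{n+1}\in C^1$ and $a^{n+1}=e^{\lambda^{n+1}}>0$ in a manifestly positive form. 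Third, integrate the Vlasov equation \eqref{Vl-red} along its characteristics using the new fields to obtain $f^{n+1}\in C^1$ with compact $v$-support. Spherical symmetry is preserved by each step, and the seemingly singular factors $\beta/r$, $(1-A)/(2r)$ and $x/r$ are in fact regular because the boundary conditions in Definition~\ref{regular} force $\beta$ and $1-A$ to vanish linearly at $r=0$.

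I would then establish uniform bounds on a common interval $[0,T]$ independent of $n$. Writing $P^n(t):=\sup\{|v|\colon(x,v)\in\supp f^n(t)\}$, the $v$-characteristics of \eqref{Vl-red} give a Gronwall-type estimate on $P^{n+1}$ in terms of $\|r a^{n+1} j^n\|$, $\|\partial_r\beta^{n+1}\|$ and $\|\beta^{n+1}/r\|$; the matter terms are controlled by $(P^n)^3\|\mathring{f}\|_\infty$, and \eqref{ee-red3}, \eqref{ee-red4-lambda}, \eqref{Adef-red} close the loop by bounding the fields in terms of the matter on a short time. Subtracting successive iterates and comparing their characteristics yields
\[
D^{n+1}(t)\le C\int_0^t D^n(s)\,ds,
\]
where $D^n$ measures the $L^\infty$ distance of $(f^n,a^n,\beta^n,\partial_r\beta^n)$ from its predecessor; this gives a contractive sequence converging to a limit $(f,a,\beta)$. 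Upgrading the convergence to $C^1$ is done by differentiating the characteristic flow and repeating the contraction estimate on the derivatives.

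The chief technical obstacle is maintaining the boundary regularity at $r=0$ through the iteration, in particular the conditions $\partial_r a(t,0)=0$ and $\partial_r^2\beta(t,0)=0$. These require Taylor expansions of \eqref{ee-red3} and \eqref{Adef-red} at $r=0$, exploiting that $A(t,0)=1$, that $\rho-a\beta j$ is smooth there, and the special status of $j$ in Definition~\ref{regular}(c) ($j\in C^1$ jointly in $(t,r)$ but only continuous in $(t,x)$ at the origin). Once the limit is constructed, the boundary conditions at infinity in \eqref{bc} follow from the compact spatial support of $f$: outside $\supp f(t)$ the matter vanishes, $A$ reduces to $1-2m(t)/r$, and \eqref{ee-red3}, \eqref{ee-red4-lambda} force $\beta(t,\infty)=0$ and $a(t,\infty)=1$. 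Uniqueness is a direct contraction estimate between any two solutions sharing the same initial data.

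For the continuation criterion I would argue by contradiction. Assume $T<\infty$ and that the sup in \eqref{contcrit} is bounded by some $C_0$. The bound on $|v|$ on $\supp f(t)$ together with $|\dot x|\le 1+\|\beta\|$ confines $\supp f(t)$ to a fixed ball on $[0,T)$, whence $\rho,p,j,p_T$ are uniformly bounded in $L^\infty$. Integrating \eqref{ee-red4-lambda} along $\dot r=-\beta$ gives $\|a^{-1}(t)\|\le\|\mathring{a}^{-1}\|+4\pi T\sup\|rj\|$, so $a$ stays uniformly away from zero; \eqref{Adef-red} then bounds $\|A\|$, and $\beta$ is recovered from $\partial_r\beta$ and $\beta(t,0)=0$. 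With all coefficients in \eqref{Vl-red} uniformly bounded on $[0,T)$, a Gronwall estimate on the $\partial_{x,v}$-derivatives of the characteristic flow furnishes uniform bounds on $\|\partial_{x,v}f(t)\|$. All ingredients required to restart the local existence iteration are thus controlled up to $t=T$, so the solution extends beyond $T$, contradicting maximality and forcing $T=\infty$.
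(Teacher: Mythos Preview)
Your overall architecture matches the paper's, but you have glossed over the two places where the real work lies, and a naive execution of what you wrote would fail.

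First, the iteration must produce $\beta^{n+1}$ with $\partial_r\beta^{n+1}\in C^1$, not merely $\beta^{n+1}\in C^1$; this is part of Definition~\ref{regular}(e) and is needed because $\partial_r\beta$ appears in the characteristic vector field $F_2$ of \eqref{Vl-red}. Differentiating your transport formula for $\beta^{n+1}$ twice in $r$ brings up $\partial_r(r p^n)$, which by \eqref{pr} contains $r\int\frac{x\cdot v}{r}\frac{v}{\langle v\rangle}\cdot\partial_x f^n\,dv$. You cannot bound or differentiate this directly without already controlling $\partial_x f^n$, which in turn needs the $C^1$ regularity of the previous fields --- a circularity. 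The paper breaks this by using the Vlasov equation for $f^n$ to trade the spatial derivative for a $D_t$-derivative along the $\beta$-characteristics, then integrating by parts in $s$; see the manipulation leading to \eqref{drbeta_n3} and \eqref{drbeta_n3crit}. This is not a routine step and your proposal does not mention it.

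Second, and more seriously, the sentence ``upgrading the convergence to $C^1$ is done by differentiating the characteristic flow and repeating the contraction estimate on the derivatives'' hides a genuine obstruction. Differentiating $\dot Z_n=F_{n-1}(s,Z_n)$ in $z$ gives a linear system whose coefficient $\partial_z F_{n-1}$ contains $\partial_r^2\beta_{n-1}$ and $\partial_r j_{n-1}$. These quantities are uniformly \emph{bounded} (loop~2 in the paper), but they are \emph{not} shown to converge, and there is no evident contraction estimate for them. The paper's resolution is the ``Magic Lemma'' of Section~\ref{magic}: one passes to new variables $\xi=\partial_{z_j}X_n$ and $\eta=\partial_{z_j}V_n-(\ldots)\xi$, chosen so that the only non-convergent coefficient is the specific combination $c_n^\ast$ in \eqref{c_n5}, namely $4\pi r a_n^2(\beta_n\partial_r j_n-\partial_t j_n)+a_n(\partial_r^2\beta_n-\partial_r\partial_t\beta_n)$. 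Using the iterated field equation for $\beta_n$ and Lemma~\ref{dtrhodtj}, this combination collapses to already-convergent terms plus products of bounded quantities with vanishing differences $\beta_{n-1}-\beta_n$, $a_n/a_{n-1}-1$. Without this cancellation the $C^1$ convergence of $Z_n$, and hence the regularity of the limiting $f$, does not follow. The same mechanism is needed again in the continuation argument to bound $\|\partial_z f(t)\|$ up to $T$, since a direct Gronwall on $\partial_z Z$ would require a bound on $\partial_r^2\beta$ that you have not yet secured at that stage.
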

This theorem will be proven by setting up an iteration scheme which generates
a sequence of approximate solutions converging to the desired solution
of the reduced system.
\subsection{Recovering the full system}
It is important that a sufficiently regular solution of the subsystem
is indeed a solution to the full set of equations.
\begin{proposition}\label{fullsystem}
  Assume that $(f,a,\beta)$ is a solution of the subsystem as obtained in
  Theorem~\ref{locex}. Then the equations \eqref{Adef}, \eqref{ee1}
  and the Vlasov equation in the form \eqref{Vl-cart} hold as well.
  The full set of the Einstein equations \eqref{EE} is satisfied.
\end{proposition}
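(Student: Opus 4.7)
The plan is to recover the full system in four steps: the Vlasov equation \eqref{Vl-cart}, a local matter conservation law, \eqref{ee1}, and finally the algebraic identity \eqref{Adef}; once all of these are in hand, the full Einstein equations \eqref{EE} are automatic because, under our metric ansatz and the assumptions of spherical symmetry and asymptotic flatness, they reduce precisely to \eqref{Adef} together with \eqref{ee1}--\eqref{ee4} and the matter-term definitions \eqref{rhoc}--\eqref{p_Tc}.

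First, \eqref{Vl-cart} follows from \eqref{Vl-red} essentially for free: the coefficient of $\partial_v f$ in \eqref{Vl-red} was obtained by substituting \eqref{ee-red4} (in the form $-\partial_t a/a + \beta\,\partial_r a/a = 4\pi r a j$) into the corresponding coefficient of \eqref{Vl-cart}, and since \eqref{ee-red4} holds for a reduced solution this substitution is reversible.

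The main technical ingredient is a local matter conservation law. I would take $v$-moments of \eqref{Vl-cart} weighted by $\p0$ (for the evolution of $\rho$) and by $x\cdot v/r$ (for the evolution of $j$), integrate by parts in $v$ so that the $\partial_v f$ term contributes to the spatial divergence, use the compact support of $f$ to dispose of the boundary terms, and use \eqref{ee-red3}--\eqref{ee-red4} to rewrite the resulting time derivatives of $a$ and $\beta$. The expected outcome is
\[
\partial_t(\rho - a\beta j) \;=\; \frac{1}{r^2}\,\partial_r\!\bigl[r^2\bigl(\beta(\rho+p) - (a\beta^2 + a^{-1})j\bigr)\bigr],
\]
which is exactly the consistency condition between \eqref{ee1} and \eqref{ee2}. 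This is the step I expect to be the main obstacle: the non-canonical momentum variables $v$ and the geometric factors $x/r$, $x\cdot v/r$ in \eqref{Vl-cart} make the bookkeeping delicate, and spherical symmetry must be invoked carefully so that the rank-two moment $\int v_iv_j\,f/\p0\,dv$ splits cleanly into the radial pressure $p$ and the tangential pressure $p_T$ of \eqref{pc}, \eqref{p_Tc}. Integrating this conservation law from $0$ to $r$ (the integrand $s^2(\cdots)$ vanishes at $s=0$) and differentiating \eqref{Adef-red} under the integral sign then yields \eqref{ee1} directly.

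Finally, to establish \eqref{Adef}, set $\tilde A := 1/a^2 - \beta^2$. A direct calculation using only \eqref{ee-red3} and \eqref{ee-red4} gives
\[
(\partial_t - \beta\,\partial_r)\tilde A \;=\; \frac{8\pi r j}{a} - \frac{\beta(1-A)}{r} - 8\pi r\beta p,
\]
and the parallel calculation, now using \eqref{ee1} together with \eqref{ee2} (the $r$-derivative of \eqref{Adef-red}), yields the same expression for $(\partial_t - \beta\,\partial_r)A$. Hence $H := A - \tilde A$ satisfies the homogeneous transport equation $(\partial_t - \beta\,\partial_r)H = 0$ with $H(0,\cdot) = 0$ by the constraint \eqref{constr}; since $\beta \in C^1$ with $\beta(t,0) = 0$, the characteristics $\dot r = -\beta(t,r)$ are well-defined, leave $r=0$ invariant, and all characteristics through a point $(t,r)$ with $t>0$ reach the initial slice, so uniqueness gives $H \equiv 0$, which is \eqref{Adef}.
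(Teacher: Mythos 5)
Your outline founders at the step you yourself flag as the main obstacle, and in a way that makes the overall logic circular. To compute $\partial_t(\rho-a\beta j)$ you must use \eqref{ee-red3} for $\partial_t\beta$, and in the reduced system the quantity $A$ appearing there is the one \emph{defined} by \eqref{Adef-red}, not $1/a^2-\beta^2$. Carrying out the moment computation honestly (this is Lemma~\ref{dtrhodtj} plus \eqref{ee-red3}, \eqref{ee-red4}) one does not get your clean conservation law but
\[
\partial_t(\rho-a\beta j)
=\frac{1}{r^2}\,\partial_r\!\bigl[r^2\bigl(\beta(\rho+p)-(a\beta^2+a^{-1})j\bigr)\bigr]
+\tfrac12\,a j\,\partial_r\Bigl(\frac{1}{a^2}-\beta^2-A\Bigr),
\]
i.e.\ there is an error term proportional to the radial derivative of the constraint violation $d=A-a^{-2}+\beta^2$; the clean law is equivalent to the reduced system only \emph{after} \eqref{Adef} is known. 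Hence your chain ``conservation law $\Rightarrow$ \eqref{ee1} $\Rightarrow$ transport equation for $A$ $\Rightarrow$ \eqref{Adef}'' presupposes at its first step the conclusion of its last step. Your closing comparison of the transport equations for $A$ and $\tilde A=a^{-2}-\beta^2$ is a nice observation and the computation is correct, but it uses \eqref{ee1}, which is exactly what is not yet available. Repairing the argument means keeping the error term; one then obtains for $d$ a transport identity whose source is (nonlocally) linear in $d$, and a Gronwall argument gives $d\equiv0$, after which \eqref{ee1} follows --- this is precisely the paper's proof, cf.\ \eqref{dtA} and the equation for $d$ there, so the repaired version is not an alternative route but the original one.

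There is a second gap: the claim that, once \eqref{Adef} and \eqref{ee1}--\eqref{ee4} hold, the full Einstein equations \eqref{EE} are ``automatic''. The four field equations are suitable combinations of the components $E_{00}$, $E_{01}$, $E_{11}$ only; the tangential component $E_{22}=0$ (equivalently $E_{33}=0$) is an additional, nontrivial equation involving second derivatives of $a$ and $\beta$ and the tangential pressure $p_T$, and it is not a consequence of symmetry or of the other components without further argument. The paper verifies it explicitly (equation \eqref{EE22}) by substituting the $r$-derivative of \eqref{ee3} and the $t$- and $r$-derivatives of \eqref{ee4} and invoking Lemma~\ref{dtrhodtj}; one could alternatively try to extract it from the contracted Bianchi identity combined with $\nabla_\mu T^{\mu\nu}=0$, but some such argument must be supplied, and your proposal supplies none.
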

In order to prove this result, the following auxiliary results will be useful.
\begin{lemma}\label{dtrhodtj}
  Let $f$ be a regular solution of the Vlasov equation \eqref{Vl-red}
  corresponding to a given, regular set of fields
  $(\tilde a,\tilde \jmath,\tilde \beta)$. Then
  \begin{align*}
    \partial_t \rho
    &=
    \tilde \beta \partial_r \rho
    -\frac{1}{\tilde a} \partial_r j - \frac{2}{r \tilde a} j
    +
    \frac{2 \tilde \beta}{r}(\rho + p_T)
    +
    \left(\partial_r \tilde \beta +4\pi r \tilde a \tilde\jmath\right)
    (\rho + p), \\
    \partial_t j
    &=
    \tilde \beta \partial_r j + \frac{2\tilde \beta}{r} j
    -\frac{1}{\tilde a}
    \left(\partial_r p + \frac{2}{r} p - \frac{2}{r} p_T\right)
    +
    2 \left(\partial_r \tilde \beta +4\pi r \tilde a \tilde\jmath\right) j .
  \end{align*}
\end{lemma}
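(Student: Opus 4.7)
The plan is to derive both identities by the standard procedure of differentiating the matter moments under the integral sign, substituting the Vlasov equation \eqref{Vl-red} for $\partial_t f$, and then performing integrations by parts in $v$ so that all remaining $v$-integrals can be recognized as $\rho, p, p_T$ or $j$. Concretely, I would write \eqref{Vl-red} as $\partial_t f = - D_x \cdot \partial_x f - D_v \cdot \partial_v f$ with
\[
D_x = \frac{v}{\tilde a\p0} - \tilde\beta\,\frac{x}{r},\qquad
D_v = \alpha\,\frac{x\cdot v}{r}\,\frac{x}{r} + \gamma\, v,
\]
where $\alpha = 4\pi r\,\tilde a\,\tilde\jmath + \partial_r\tilde\beta - \tilde\beta/r$ and $\gamma = \tilde\beta/r$, and then compute $\partial_t\rho = \int \p0 \partial_t f\,dv$ and $\partial_t j = \int (x\cdot v/r)\,\partial_t f\,dv$ term by term.

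For the transport term in $x$, the weights $\p0$ and $x\cdot v/r$ are $x$-independent (apart from the factor $x/r$ in the second case), so $\partial_x$ can be pulled out of the $v$-integral. Using spherical symmetry of $\rho, j$ one converts $\partial_{x^i}$-derivatives of radial vector fields via the standard formula $\partial_{x^i}(x^i h(r)) = 3h + r h'$ to radial derivatives. This produces the terms $\tilde\beta\,\partial_r\rho - \tilde a^{-1}(\partial_r j + 2j/r)$ in $\partial_t\rho$ and the analogous expressions in $\partial_t j$; in the latter case the intermediate integral $\int v_i v_j f/\p0 \,dv$ has to be split into its radial and tangential pieces using the identity $|v|^2 = (x\cdot v/r)^2 + |x\times v/r|^2$, so that the result involves $p$ and $p_T$.

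For the $v$-transport term, an integration by parts in $v$ yields $\int f \,\mathrm{div}_v(w\,D_v)\,dv$ with $w=\p0$ or $w = x\cdot v/r$. A direct computation gives $\mathrm{div}_v D_v = \alpha + 3\gamma$, and the two remaining quantities $\p0\, D_v \cdot \partial_v \p0$-type contributions reduce via $\partial_v \p0 = v/\p0$ to linear combinations of $(x\cdot v/r)^2/\p0$, $|v|^2/\p0$ and $(x\cdot v/r)\cdot v$. Using $|v|^2/\p0 = p + 2p_T$-weighted identities and $|v|^2 = (x\cdot v/r)^2 + |x\times v/r|^2$, all integrals collapse to multiples of $\rho, p, p_T, j$. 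Collecting and simplifying—crucially the cancellation $(\alpha + \gamma) = 4\pi r\,\tilde a\,\tilde\jmath + \partial_r\tilde\beta$ for the $p$-coefficient and $(\alpha + 3\gamma) = 4\pi r\,\tilde a\,\tilde\jmath + \partial_r\tilde\beta + 2\tilde\beta/r$ for the $\rho$-coefficient—produces exactly the two formulas stated in the lemma.

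The only genuine obstacle is bookkeeping: keeping the radial/tangential decomposition consistent, handling the factor $1/r$ carefully (the formulas for $\partial_t j$ involve $(2/r)(p-p_T)$, which arises precisely from the radial-versus-tangential split after integration by parts), and justifying the operations near $r=0$. The last point is covered by the regularity assumptions in Definition~\ref{regular}: $f(t)$ has compact support, $j\in C^1$ as a function of $(t,r)$ with $j(t,0)=0$, and $\tilde a, \tilde\beta$ are $C^1$, so all differentiations under the integral sign and integrations by parts in $v$ are justified, and the identities, which are pointwise in $(t,r)$ for $r>0$, extend continuously to $r=0$.
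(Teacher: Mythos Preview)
Your proposal is correct and follows essentially the same route as the paper: differentiate \eqref{rhoc} and \eqref{jc} in $t$, substitute $\partial_t f$ from the Vlasov equation \eqref{Vl-red}, integrate the $\partial_v f$ terms by parts, and identify the $\partial_x f$ integrals as radial derivatives of the source terms $\rho,\,j,\,p,\,p_T$. The paper simply quotes the four identities $\int v\cdot\partial_x f\,dv = \partial_r j + \tfrac{2}{r}j$, etc., from \cite[Lemmas~6.6,~6.8]{GB} rather than rederiving them, whereas you sketch how they arise from the radial/tangential split $|v|^2=(x\cdot v/r)^2+|x\times v/r|^2$; the algebraic bookkeeping with $\alpha+\gamma$ and $\alpha+3\gamma$ that you describe is exactly what is needed to collect terms.
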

\begin{proof}
  First we differentiate the formulas \eqref{rhoc} and \eqref{jc} with respect
  to $t$. Then we substitute $\partial_t f$ via the Vlasov equation
  \eqref{Vl-red}. The terms containing $\partial_v f$ can be integrated
  by parts. The terms containing $\partial_x f$ can be related to
  (radial derivatives of) various source terms,
  more precisely,
  \begin{align*}
    \int v \cdot \partial_x f dv
    &=
    \partial_r j + \frac{2}{r} j,\\ 
    \int \p0 \frac{x}{r} \cdot \partial_x f dv
    &=
    \partial_r \rho,\\
    \int \frac{x\cdot v}{r} \frac{x}{r}\cdot \partial_x f dv
    &=
    \partial_r j,\\
    \int \frac{x\cdot v}{r} \frac{v}{\p0}\cdot \partial_x f dv
    &=
    \partial_r p + \frac{2}{r} p - \frac{2}{r} p_T,
  \end{align*}
   cf.~\cite[Lemma 6.6, Lemma 6.8]{GB}. Collecting terms yields the result.
\end{proof}
  
Notice that in the proof only the Vlasov equation is used,
the field equations play no role. This will become important
in the next section, where we will apply this
result to the iterates in an iteration scheme.
An analogous comment applies to the second auxiliary result.
\begin{lemma}\label{lemmabetaeq}
  Let $T>0$ and let $\tilde \beta$ be
  regular on $[0,T[\times [0,\infty[$ in the sense of
  Definition~\ref{regular}~(e), and bounded on
  $[0,T']\times [0,\infty[$ for any $T'<T$.
  \begin{itemize}        
  \item[(a)]
    For any $r\geq 0$ and $t\in [0,T[$ there exists an unique $C^1$ solution
    $[0,T[\,\ni s\mapsto R(s,t,r)$ of the equation
    \[        
    \dot r = -\tilde \beta(s,r)
    \]
    with $R(t,t,r)=r$. Moreover, $R \in C^1([0,T[^2\times [0,\infty[)$ and
    $\partial_r R \in C^1([0,T[^2\times [0,\infty[)$ with
    $R(s,t,0) = 0 = \partial_r^2 R(s,t,0)$.
  \item[(b)]
    For $\mathring{b}\in C^1([0,\infty[)$
    and $q\in C^1([0,T[\times [0,\infty[)$,
    \[
    b(t,r) = \mathring{b}(R(0,t,r)) + \int_0^t q(s,R(s,t,r))\, ds
    \]
    defines the unique solution $b\in C^1([0,T[\times[0,\infty[)$ of
    the initial value problem
    \[        
    \partial_t b-\tilde \beta(t,r)\, \partial_r b = q(t,r),\
    b(0,r) =\mathring{b}(r).
    \]
  \end{itemize}
\end{lemma}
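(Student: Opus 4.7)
\medskip

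\noindent\textbf{Proof plan.} The plan is to treat (a) as a standard ODE statement and then build (b) on top of it by the method of characteristics.

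For part (a), I would first invoke the Picard--Lindel\"of theorem: the right-hand side $-\tilde\beta(s,r)$ is continuous in $s$ and $C^1$ in $r$ (by Definition~\ref{regular}(e)), so local existence and uniqueness of $R(\cdot,t,r)$ hold. To extend the maximal interval of existence to all of $[0,T[$, I would use that $\tilde\beta$ is bounded on $[0,T']\times[0,\infty[$ for every $T'<T$: this prevents blow-up of $r\mapsto R(s,t,r)$ in finite time, and together with $\tilde\beta(s,0)=0$ it also confines the trajectory to $[0,\infty[$. Joint $C^1$ dependence of $R$ on $(s,t,r)$ follows from the standard theorem on differentiable dependence on parameters and initial data, applied to the augmented system for $(R,\partial_r R)$, where $\partial_r R$ satisfies the linear variational equation $\partial_s\partial_r R = -\partial_r\tilde\beta(s,R)\,\partial_r R$; then $\partial_r R\in C^1$.

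The boundary values at $r=0$ require the special structure of $\tilde\beta$ at the center. Since $\tilde\beta(s,0)=0$ for all $s$, the curve $r\equiv 0$ is a solution, so $R(s,t,0)=0$ by uniqueness. Differentiating the characteristic equation twice in $r$ and evaluating at $r=0$, and using $\partial_r^2\tilde\beta(s,0)=0$, I get the linear homogeneous ODE
\[
\partial_s\bigl(\partial_r^2 R(s,t,0)\bigr)
=-\partial_r\tilde\beta(s,0)\,\partial_r^2 R(s,t,0),
\]
with initial condition $\partial_r^2 R(t,t,0)=0$ (since $R(t,t,r)=r$); hence $\partial_r^2 R(s,t,0)=0$ for all $s$.

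For part (b), I would verify by direct computation that the proposed formula solves the transport equation. The key identity is the flow relation $R(s,\sigma,R(\sigma,t,r))=R(s,t,r)$; differentiating in $\sigma$ at $\sigma=t$ (or in $t$ directly using the group property) yields
\[
\partial_t R(s,t,r)=\tilde\beta(t,r)\,\partial_r R(s,t,r).
\]
Applying $\partial_t-\tilde\beta\,\partial_r$ to $b(t,r)=\mathring b(R(0,t,r))+\int_0^t q(s,R(s,t,r))\,ds$ and using this identity, every term involving derivatives of $\mathring b$ or of $q$ in its spatial argument cancels, leaving only the boundary term $q(t,R(t,t,r))=q(t,r)$. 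The initial condition $b(0,r)=\mathring b(r)$ is immediate from $R(0,0,r)=r$, and the $C^1$ regularity of $b$ follows from that of $R$, $\mathring b$ and $q$. Uniqueness is standard: the difference of two solutions is constant along the characteristics $s\mapsto R(s,t,r)$ and vanishes at $s=0$.

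The only genuinely delicate point is the second assertion in (a), namely $\partial_r^2 R(s,t,0)=0$; everything else is a direct application of standard ODE theory and the method of characteristics, with the flow identity $\partial_t R=\tilde\beta\,\partial_r R$ doing all the work in (b).
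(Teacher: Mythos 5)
Your proposal is correct and follows essentially the paper's route: standard ODE theory plus the variational equation for part (a), and the method of characteristics for part (b), using exactly the structural facts $\tilde\beta(t,0)=0$ and $\partial_r^2\tilde\beta(t,0)=0$ at the center. The only cosmetic differences are that the paper first extends $\tilde\beta$ oddly via $\tilde\beta(t,r)\coloneqq-\tilde\beta(t,-r)$ so that all standard existence, uniqueness and $C^1$-dependence theorems apply on an open domain and $r=0$ needs no separate boundary discussion (your direct half-line argument should acknowledge this or an equivalent device), and it reads off both $\partial_r R\in C^1$ and $\partial_r^2 R(s,t,0)=0$ from the explicit representation $\partial_r R(s,t,r)=\exp\bigl(\int_s^t\partial_r\tilde\beta(\tau,R(\tau,t,r))\,d\tau\bigr)$ rather than from the twice-differentiated variational equation, while part (b), which the paper dismisses as standard first-order PDE theory, is exactly the computation you carry out.
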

\begin{proof}
  In order to prove part (a) we note that the assumptions on $\tilde \beta$
  allow us to extend this function
  via $\tilde \beta(t,r) \coloneqq - \tilde \beta(t,-r)$ to
  $\tilde \beta\in C^1([0,T[\times \R)$ with $\tilde \beta(t,0)=0$.
  For this extended characteristic equation the existence 
  of $R\in C^1([0,T[^2\times \R)$ is standard; notice that by assumption,
  $\tilde \beta$ is bounded on compact subintervals of $[0,T[$ so that
  the characteristics do exist
  on $[0,T[$. Since $\tilde \beta(t,0)=0$ it follows that $R(s,t,0)=0$,
  and no characteristic
  can cross the line $r=0$. Hence we can consistently restrict $R$ to
  $[0,T[^2\times [0,\infty[$.        
  Since
  \[
  \frac{d}{ds} \partial_r R(s,t,r)
  = - \partial_r \tilde \beta(s,R(s,t,r)) \partial_r R(s,t,r),\
  \partial_r R(t,t,r) = 1
  \]
  it follows that
  \begin{align}\label{drRrep}
    \partial_r R(s,t,r) =
    \exp\left(\int_s^t \partial_r \tilde \beta(\tau,R(\tau,t,r))\,d\tau\right),
  \end{align}
  and since $\partial_r \tilde \beta$ is continuously
  differentiable, 
  $\partial_r R \in C^1([0,T[^2\times \R)$.
  Moreover, $\partial_r^2 R(s,t,0)=0$, since
  $\partial_r^2 \tilde \beta(t,0) =0$.  
  Part (b) is standard theory of first order PDEs.
\end{proof}
We are now ready to prove Proposition~\ref{fullsystem}.

\noindent
{\em Proof of Proposition~\ref{fullsystem}}.
Assume that we are given a regular solution to the subsystem
\eqref{Vl-red}--\eqref{Adef-red}. Then by \eqref{ee-red4}, the original
Vlasov equation \eqref{Vl-cart} holds as well. The key points are to see that
$A$, which in the subsystem is given by \eqref{Adef-red}, indeed satisfies
the relation \eqref{Adef} and the field equation \eqref{ee1};
the field equation \eqref{ee2} follows directly from \eqref{Adef-red}.

We first compute the left hand side of \eqref{ee1},
using the definition \eqref{Adef-red}
and Lemma~\ref{dtrhodtj}. This results in terms containing $\partial_r \rho$,
$\partial_r p$, or $\partial_r j$ under the radial integral.
If we integrate these terms by parts, we find that
\begin{align}\label{dtA}
  \partial_t (r(1-A))
  &=
  8\pi r^2 \left(\beta (\rho + p) - (a \beta^2 +1/a) j\right)\notag\\
  &
  {} + 4\pi
  \int_0^r s^2 a j \,\partial_r \left(\frac{1}{a^2} -\beta^2 -A\right)(t,s)\, ds;
\end{align}
this is indeed \eqref{ee1}, provided $A$ satisfies the relation \eqref{Adef}.

Let us define
\[
d \coloneqq A - \frac{1}{a^2} + \beta^2.
\]
Using \eqref{dtA}, an integration by parts with respect to the radial variable,
and \eqref{Adef-red} it follows that
\[
\partial_t d - \beta \, \partial_r d = 4 \pi r a j d
- \frac{4\pi}{r}
\int_0^r \partial_r(a j \sigma^2)(t,\sigma) d(t,\sigma) \, d\sigma.
\]
Now the assumption on the data guarantees that $d(0,r)=0$, and we can use
Lemma~\ref{lemmabetaeq} to conclude that
\[
d(t,r)= 4\pi\int_0^t \left[(r a j) - \frac{1}{r}
\int_0^{r} \partial_r(a j \sigma^2)(t,\sigma) d(t,\sigma)\, d\sigma\right]
(s,R(s,t,r))\, ds.
\]
In the present context the function $d$ is bounded with respect to $r$,
locally uniformly in $t$. The previous equation thus implies that on
any compact time interval $[0,T']\subset [0,T[$,
\[
\|d(t)\| \leq C \int_0^t \|d(s)\|\, ds
\]
which implies that $d=0$. Hence \eqref{Adef} and by
\eqref{dtA} also \eqref{ee1} hold.

To conclude this proof we must recall that \eqref{ee1}--\eqref{ee4}
together with \eqref{Adef}, which now all hold, are not the Einstein
equations $E_{\mu\nu}\coloneqq G_{\mu\nu}- 8\pi T_{\mu\nu} = 0$, but suitable
combinations of some of them. Indeed, \eqref{ee3} is, up to an overall
factor $2 r a^2$, the equation $E_{11}=0$. Next we observe that
$a E_{01} - a\beta E_{11} = 0$ is equivalent to the equation \eqref{ee4},
but since we already know that $E_{11}=0$
it follows that $E_{01}=0$ as well.
Equation \eqref{ee2} arises as the combination
$E_{00} - \beta E_{01} =0$ from which we conclude that $E_{00}=0$.
It remains to show that $E_{22}=0$; the also nontrivial equation
$E_{33}=0$ is just a multiple of this, and all
the other components of the field equations vanish by symmetry.
The equation $E_{22}=0$ explicitly reads as follows:
\begin{align}\label{EE22}
\frac{r}{a^{3}}
\Bigl[r\, a^{3} \partial_r\partial_t\beta  - r\, a^{2} \partial^2_{t} a 
- r \,a^{2} \beta^{2} \partial^2_{r} a  
-  r \,a^{3}\beta \partial^2_{r} \beta
+ 2 r\, a^{2}\beta  \partial_r \partial_t a 
&\nonumber \\
+ 2 r\,   a^{2} \partial_t a \partial_r \beta  + r\,a^{2} \partial_t\beta
 \partial_r a 
- r\, a^{3}\left( \partial_r \beta\right)^{2}-  a^{2} \beta^{2} \partial_r a 
&\nonumber\\ 
- 2\, a^{3}\beta \partial_r \beta
- \partial_r a + a^{2}\beta  \partial_t a    
-3 r \, a^{2}\beta  \partial_r a \partial_r \beta 
+ a^{3}\partial_t\beta\Bigr]
&=
8\pi r^2 p_T.
\end{align}
We differentiate \eqref{ee3} with respect to $r$,
and \eqref{ee4} with respect to
$t$ or with respect to $r$. If we substitute the resulting expressions for
$\partial_{r}\partial_t\beta$, $\partial^2_{t} a$, and $\partial_r \partial_t a$
into the left hand side of \eqref{EE22}, observe Lemma~\ref{dtrhodtj},
and collect terms we see that \eqref{EE22} holds. \hspace*{\fill} $\Box$

\begin{remark}\label{areg}
  Strictly speaking, the regularity of $a$ which is specified in
  Definition~\ref{regular}
  is not sufficient for the above derivation of \eqref{EE22}.
  However, this is only
  a technical point, and after having proven Theorem~\ref{locex}
  we will see that
  $a$ is indeed as regular as $\beta$, provided
  $\mathring{a}\in C^2([0,\infty[)$,
  cf.\ Corollary~\ref{corareg} below.
  For the proof of Theorem~\ref{locex} it is useful not to
  require more regularity
  than needed in order to make sense of the subsystem.
  One reason for discussing
  \eqref{EE22} already at this point is the fact that it
  motivates a certain maneuver
  which leads to control of first order derivatives of $f$
  in Section~\ref{magic} below.
\end{remark}
\section{Proof of the local existence result}
\label{section_locex_proof}
\setcounter{equation}{0}
In this section we prove Theorem~\ref{locex}.
The proof is split into a number of steps
which we deal with in the various subsections which follow.
\subsection{The Vlasov equation and the characteristic system}
We will deal with the modified Vlasov equation \eqref{Vl-red} via the method of
characteristics. When constructing the iterative scheme below it is important
that we study \eqref{Vl-red} for a prescribed set of fields
$\tilde a,\tilde \beta, \tilde\jmath$.
For $s\in [0,T[$, $x,v\in \R^3$ we define
\begin{align}
  &
  F_1(s,x,v)\coloneqq
  \frac{v}{\tilde a(s,r) \p0}-\tilde\beta(s,r)\frac{x}{r}, \label{F1-def}\\
  &
  F_2(s,x,v)\coloneqq
  4\pi r \,(\tilde a\, \tilde\jmath)(s,r)\, \frac{x\cdot v}{r} \frac{x}{r}
  +\left(\partial_r\tilde \beta
  -\frac{\tilde \beta}{r}\right)(s,r)\frac{x\cdot v}{r} \frac{x}{r}
    +\frac{\tilde \beta(s,r)}{r} v. \label{F2-def}
\end{align}
In what follows we often abbreviate $z=(x,v)\in \R^3\times\R^3$.
The characteristic system of the modified Vlasov equation \eqref{Vl-red}
can now be written in the form
\begin{align}\label{modcharsys}
  \dot z = F(s,z).
\end{align}
\begin{lemma} \label{modcharsyssolve}
  Assume that $\tilde a,\tilde \beta, \tilde\jmath$
  are regular and locally bounded
  on the time interval $[0,T[$ as specified
  in Definition~\ref{regular}.
  \begin{itemize}
  \item[(a)]
    For any $t\in [0,T[$ and $z\in \R^3\times\R^3$ there exists a unique
    solution $[0,T[\,\ni s \mapsto Z(s,t,z)=(X,V)(s,t,x,v)$ of
    \eqref{modcharsys} with
    $Z(t,t,z)=z$. Moreover,
    $Z\in C^1 ([0,T[\times [0,T[\times\R^3\times\R^3)$.
  \item[(b)]
    By $f(t,z)\coloneqq\mathring{f}(Z(0,t,z))$ for $t\in [0,T[$ and
    $z\in \R^3\times\R^3$
    the unique, regular solution to the modified Vlasov equation \eqref{Vl-red}
    with $f(0)=\mathring{f}$ is defined.
  \end{itemize}
\end{lemma}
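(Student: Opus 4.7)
The plan is to prove part (a) by Picard-Lindelöf applied to the characteristic ODE $\dot z = F(s,z)$, and then to obtain part (b) by the classical method of characteristics.

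For part (a), the first step is to verify that $F = (F_1, F_2)$ is continuous on $[0,T[\,\times\R^6$ and locally Lipschitz in $z$, uniformly on slabs $[0,T']\times\R^6$ with $T' < T$. The potentially singular ingredients in \eqref{F1-def}, \eqref{F2-def} are $\tilde\beta(s,r)/r$, $(\partial_r\tilde\beta - \tilde\beta/r)(s,r)$, and $r\tilde a\tilde\jmath$. The boundary conditions $\tilde\beta(s,0) = \partial_r^2\tilde\beta(s,0) = 0$, $\tilde\jmath(s,0)=0$, $\tilde a(s,0)=1$ from Definition~\ref{regular}, together with the $C^1$ regularity of $\tilde\beta$ and $\partial_r\tilde\beta$ in $r$ and the $C^1$ radial regularity of $\tilde\jmath$, force $\tilde\beta/r$ to extend to a function of $r$ which is $C^1$ with vanishing first derivative at $0$, so that $(\tilde\beta/r)\,x$ is $C^1$ in $x$ on all of $\R^3$; analogous reasoning handles the remaining terms, and local Lipschitz continuity in $v$ is immediate from the explicit formulas. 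Picard-Lindelöf then yields a unique $C^1$ characteristic on a maximal subinterval.

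To extend $Z$ to all of $[0,T[$, I would invoke the local boundedness in Definition~\ref{regular}(f) (of $\tilde a^{-1}$, $r\tilde a\tilde\jmath$, $\tilde\beta$, $\partial_r\tilde\beta$) to derive a linear-growth estimate $|F(s,z)| \leq C_{T'}(1+|v|)$ on $[0,T']\times\R^6$. Gronwall's inequality then gives $|V(s,t,z)| \leq (|v|+1)e^{C_{T'}|s-t|}$, excluding finite-time blow-up; the $C^1$ regularity of $Z \in C^1([0,T[^2\times\R^6)$ in all its arguments is the standard smooth dependence of ODE solutions on parameters and initial data applied once more.

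For part (b), setting $f(t,z) := \mathring f(Z(0,t,z))$ produces a $C^1$ function by composition with $\mathring f \in C^1_c(\R^6)$. The flow identity $Z(0,s,Z(s,t,z)) = Z(0,t,z)$ shows that $f$ is constant along characteristics $s\mapsto Z(s,t,z)$, which is equivalent to $f$ solving \eqref{Vl-red} with $f(0)=\mathring f$. Spherical symmetry of $f(t,\cdot)$ follows from $\mathrm{SO}(3)$-equivariance of $F$ (evident from its definition: all occurrences of $x$ and $v$ enter through the scalars $r$, $x\cdot v$, $|v|$ and through the equivariant combinations $x$, $v$, $(x\cdot v/r)(x/r)$) combined with spherical symmetry of $\mathring f$. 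The Gronwall bound on $|V|$ together with the compact support of $\mathring f$ confines $\mathrm{supp}\, f(t)$ to a set bounded locally uniformly in $t$. Uniqueness is then immediate, since any regular solution of \eqref{Vl-red} must be constant along characteristics.

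The main obstacle is the careful verification that the singular-looking quantities $\tilde\beta/r$ and $(\partial_r\tilde\beta - \tilde\beta/r)/r$, acting on the non-smooth angular factors $x/r$ and $(x\cdot v/r)(x/r)$, actually assemble into a vector field which is locally Lipschitz (indeed $C^1$) on all of $\R^6$, rather than merely continuous there; the boundary conditions in Section~\ref{data} and Definition~\ref{regular} are tailored precisely for this, and the cleanest argument appears to proceed by Taylor expansion of $\tilde\beta$ at $r=0$.
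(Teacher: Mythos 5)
Your proposal is correct and follows essentially the same route as the paper: the key point in both is that the boundary conditions in Definition~\ref{regular} force the apparently singular coefficients ($\tilde\beta/r$, $\partial_r\tilde\beta-\tilde\beta/r$, $r\tilde a\tilde\jmath$ combined with $x/r$) to assemble into a field $F$ that is $C^1$ up to the center (the paper does this via l'Hospital, you via Taylor expansion, which is the same computation), after which linear growth in $|v|$ plus standard ODE theory gives global-in-$[0,T[$ characteristics, $C^1$ dependence, and part (b) by the method of characteristics. The paper compresses the part (b) assertions (constancy along characteristics, symmetry, support, uniqueness) into "standard," which you spell out correctly.
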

\begin{proof}
  We claim that $F\in C^1([0,T[\times\R^3\times\R^3)$. The crucial point here is
  the regularity at the spatial origin. The behavior of $a,\ j,\ \beta$
  at $r=0$ together with l'Hospital's rule imply that
  \[
  \lim_{r\to 0}\left(\partial_r\beta-\frac{\beta}{r}\right)(s,r)
  = 0 = \lim_{r\to 0}\left(\frac{r \partial_r\beta-\beta}{r^2}\right)(s,r).
  \]
  Hence the limits
  \[
  \lim_{r\to 0} \partial_{x_i}\left(\beta\frac{x_j}{r}\right)
  = \partial_r\beta(t,0)\delta_{ij},
  \]
  \[
  \lim_{r\to 0} \partial_{x_i}\left(r a j\frac{x\cdot v}{r}\frac{x_j}{r}\right)
  = 0 =
  \lim_{r\to 0} \partial_{x_i}
  \left(\left(\partial_r\beta-\frac{\beta}{r}\right)
  \frac{x\cdot v}{r}\frac{x_j}{r}\right)
  \]
  exist, and $F$ is $C^1$, also at the center.
  Since for any $0<T'<T$ the function $F$ is bounded with respect to
  $s\in [0,T']$ and $x\in \R^3$ and
  grows at most linearly in $|v|$, any solution of the
  characteristic system exists
  for $s\in [0,T[$. The remaining assertions are standard.
\end{proof}
\subsection{The iteration scheme---the set-up}
\subsubsection{The $0\,$th iterates}
We define the $0\,$th iterate of the fields on $[0,\infty[\times [0,\infty[$
as
\[
a_0(t,r) \coloneqq 1,\ j_0(t,r) \coloneqq 0,\ \beta_0(t,r) \coloneqq 0.
\]
These iterates need not fit the given initial data, but it is
important that they satisfy \eqref{ee-red4}.
For technical reasons we also define
$f_0 \coloneqq 0$.
\subsubsection{Induction assumption on the $(n-1)\,$st iterates}
We assume that $a_{n-1}, j_{n-1}, \beta_{n-1}$ are defined on
$[0,T_{n-1}[\times [0,\infty[$ for some $T_{n-1}>0$,
and that they are regular and locally bounded
in the sense of Definition~\ref{regular}. In addition, we
assume that
\begin{align}\label{ee4-red-n-1}
  \partial_t a_{n-1} -\beta_{n-1}\, \partial_r a_{n-1} = - 4\pi r a_{n-1}^2 j_{n-1};
\end{align}
notice that these assumptions hold for the $0\,$th iterates, i.e., for $n=1$.
\subsubsection{$f_n$ and its characteristics}
We replace
$\tilde a,\ \tilde\jmath,\ \tilde \beta$ in \eqref{F1-def}, \eqref{F2-def}
by the above $(n-1)\,$st
iterates and denote the resulting function by $F_{n-1}$. Then we can apply
Lemma~\ref{modcharsyssolve} to obtain corresponding iterates for the
characteristics $Z_n=(X_n,V_n)$ as the solution of
$\dot z = F_{n-1}(s,z)$ and $f_n(t,z)\coloneqq\mathring{f}(Z_n(0,t,z))$
as the $n\,$th iterate
for $f$, which is defined on $[0,T_{n-1}[\times\R^3\times\R^3$.
As desired, $f_n$ is regular in the sense of Definition~\ref{regular}~(a).
\subsubsection{The $n\,$th iterates for the source terms}
We define $\rho_n, p_n, j_n, p_{T,n}$
by \eqref{rhoc}--\eqref{p_Tc} with $f_n$ instead of $f$. These functions
are regular in the sense of Definition~\ref{regular}~(b), (c);
for the details of this assertion we refer to \cite[Lemma 6.8]{GB}.
\subsubsection{The $n\,$th iterates for the metric quantities---$A_n$}
We define
\begin{align}\label{Andef}
  A_n(t,r) = 1 - \frac{8\pi}{r}
  \int_0^r \left(\rho_n - a_{n-1}\beta_{n-1} j_n\right)(t,s)\, s^2 ds.
\end{align}
It is easily seen that the function
$(1-A_n)/(2r)$ is continuously differentiable on $[0,T_{n-1}[\times [0,\infty[$
and twice continuously differentiable with respect to $r$ on this set
with  $[(1-A_n)/(2r)]_{|r=0} = 0 =\partial_r^2 [(1-A_n)/(2r)]_{|r=0}$.
\subsubsection{The $n\,$th iterates for the metric quantities---$\beta_n$}
\label{beta_nsection}
This is the crucial step.
We define $\beta_n$ as
the solution of the initial value problem
\begin{align}\label{betaeq}
 \partial_t\beta-\beta_{n-1}\, \partial_r\beta =
 \frac{1}{2r}(1-A_n)+4\pi r p_n,\ 
\beta(0,r) =\mathring{\beta}(r).
\end{align}
By the induction assumption on the previous iterates
we can apply Lemma~\ref{lemmabetaeq}~(a) with
$b=\beta_{n-1}$.
This implies that for any $t\in [0,T_{n-1}[$ and $r\geq 0$
there exists a unique solution
$[0,T_{n-1}[\,\ni s\mapsto R_n(s,t,r)$ of the characteristic equation
\[
\dot r = - \beta_{n-1}(s,r)
\]
with $R_n(t,t,r)=r$, and $R_n$ has the regularity properties stated
in Lemma~\ref{lemmabetaeq}~(a).
Following part (b) of that lemma we define the next iterate for $\beta$ as
\begin{align}\label{betandef}
  \beta_n(t,r)
  &\coloneqq
  \mathring{\beta}(R_n(0,t,r)) +
  \int_0^t\left(\frac{1}{2r}(1-A_n)\right)(s,R_n(s,t,r))\, ds
  \nonumber\\
  & \qquad {}
  + 4\pi \int_0^t(r p_n)(s,R_n(s,t,r))\, ds\nonumber\\
  &=:
  \beta_{n,1}(t,r) + \beta_{n,2}(t,r) + \beta_{n,3}(t,r).
\end{align}
The difficulty is to recover for $\beta_n$ the regularity properties
which we assumed for $\beta_{n-1}$. To do so we first observe that analogously
to \eqref{drRrep},
\begin{align}\label{drRnrep}
\partial_r R_n(s,t,r) =
\exp\left(\int_s^t \partial_r\beta_{n-1}(\tau,R_n(\tau,t,r))\,d\tau\right).
\end{align}
Clearly,
$\beta_{n,1}\in C^1([0,T_{n-1}[ \times [0,\infty[)$ with 
$\partial_r \beta_{n,1}\in C^1([0,T_{n-1}[ \times [0,\infty[)$.
The same is true for $\beta_{n,2}$ by the regularity properties
of  $(1-A_n)/(2r)$ stated above.
The problematic term is $\beta_{n,3}$. 
Obviously,
\[
\partial_r \beta_{n,3}(t,r) = \int_0^t \partial_r (r p_n)(s,R_n(s,t,r))
\partial_r R_n(s,t,r)\, ds.
\]
From \cite[p.~98]{GB} we see that
\begin{align}\label{pr}
\partial_r (r p_n)(s,r) = 2 p_{T,n}(s,r) - p_n(s,r)
+ r \int \frac{x\cdot v}{r} \frac{v}{\p0} \cdot \partial_x f_n(s,x,v)\, dv.
\end{align}
Only the last term, which we denote by $\sigma_n(s,x)$,
needs to be investigated further.
To do so we need to exploit the fact that
$f_n$ solves the ``previous'' Vlasov equation.
We define a
Cartesian version of the characteristic $R_n(s,t,r)$, namely
$s\mapsto X_n(s,t,x)$ is the solution to
\begin{align}\label{X_nbeta}
\dot x = - \beta_{n-1}(s,r)\frac{x}{r}
\end{align}
with $X_n(t,t,x) = x$;
note that the right hand side of this differential equation
is in $C^1([0,T_{n-1}[\times \R^3)$ and vanishes at the center $x=0$. In
particular, $X_n(s,t,0)=0$, and no other characteristic can cross the center.
It is easy to see that $|X_n(s,t,x)|= R_n(s,t,r)$ where $r=|x|$.
Moreover,
\[
\frac{X_n(s,t,x)}{|X_n(s,t,x)|} = \frac{x}{r}
\]
for $x\neq 0$; notice that $\sigma_n(s,0)=0$.
It is interesting to note that $s\mapsto (X_n(s,t,x),0)$
is actually a characteristic
of the Vlasov equation, namely $(X_n(s,t,x),0)=(X_n(s,t,x,0),V_n(s,t,x,0))$.
Let
\[
D_t f_n(t,x,v) \coloneqq \partial_t f_n(t,x,v) -
\beta_{n-1}(t,r) \frac{x}{r}\cdot \partial_x f_n (t,x,v).
\]
In what follows we often abbreviate
$X_n(s) = X_n(s,t,x)$ and $R_n(s)=R_n(s,t,r)$.
Then the chain rule and the Vlasov equation for $f_n$ imply that
\begin{align*}
\frac{d}{ds} f_n(s,X_n(s),v)
&=
(D_t f_n)(s,X_n(s),v) =
-\left(\frac{v}{a_{n-1} \p0}\cdot \partial_x f_n\right)(s,X_n(s),v)\\
&\quad
{} - \left(F_{n-1,2} \cdot \partial_v f_n\right)(s,X_n(s),v);
\end{align*}
$F_{n-1} = (F_{n-1,1},F_{n-1,2})$ is defined analogously to
\eqref{F1-def}, \eqref{F2-def}. This implies that
\begin{align*}
\left(\frac{x\cdot v}{r} \frac{v}{\p0} \cdot
\partial_x f_n\right)(s,X_n(s),v)
&=
- \left(a_{n-1} \frac{x\cdot v}{r} D_t f_n\right)(s,X_n(s),v)\\
&\quad
{}- \left(a_{n-1} \frac{x\cdot v}{r} F_{n-1,2}
\cdot \partial_v f_n\right)(s,X_n(s),v).
\end{align*}
Integration with respect to $v$ and an integration by parts imply that
\begin{align*}
  &
  \left(\int \frac{x\cdot v}{r} \frac{v}{\p0} \cdot \partial_x f_n dv\right)
  (s,R_n(s))\\
  &=
  - a_{n-1}(s,R_n(s)) \frac{d}{ds} j_n(s,R_n(s))\\
  &
  {}+ 2 a_{n-1}(s,R_n(s))
  \left(4\pi \,a_{n-1}\, j_{n-1}\, j_n
  +\left(\partial_r\beta_{n-1} + \frac{\beta_{n-1}}{r}\right)
  j_n\right) (s,R_n(s)).
\end{align*}
We substitute this into the formula for $\partial_r \beta_{n,3}$ to obtain
\begin{align}\label{drbeta_n3}
  &
  \frac{1}{4\pi}\partial_r \beta_{n,3}(t,r) =
  \int_0^t (2 p_{T,n}-p_n)(s,R_n(s,t,r))\, \partial_r R_n(s,t,r)\, ds \notag\\
  &
  {}- \int_0^t (r a_{n-1})(s,R_n(s,t,r))  \partial_r R_n(s,t,r) \frac{d}{ds}
  j_n(s,R_n(s,t,r))\, ds\notag \\
  &
  {}+ 2 \int_0^t \left(r a_{n-1} j_n
  \left(4\pi \,a_{n-1}\, j_{n-1} +
  \left(\partial_r\beta_{n-1}+\frac{\beta_{n-1}}{r}\right)
  \right)\right) (s,R_n(s,t,r))\,\notag\\
  &\qquad\qquad\qquad\qquad\qquad\qquad\qquad\qquad\qquad\qquad
  \qquad\qquad\qquad
  \partial_r R_n(s,t,r)\,ds \notag\\
  &=:
  \beta'_{n,31}(t,r) + \beta'_{n,32}(t,r) + \beta'_{n,33}(t,r) .
\end{align}
The term $\beta'_{n,31}$ is continuously differentiable, where we have to
observe that $\partial_r^2 R_n$ exists and is continuous.
The term $\beta'_{n,33}$ is continuously differentiable by the properties of the
previous iterates. In the term $\beta'_{n,32}$ we integrate by parts
and observe that \eqref{ee4-red-n-1} holds:
\begin{align}\label{drbeta_n3crit}
  \beta'_{n,32}(t,r)
  &=
  - r a_{n-1}(t,r) j_n(t,r) +(r a_{n-1} j_n)(0,R_n(0,t,r))
  \partial_r R_n(0,t,r)\notag\\
  &
  +\int_0^t j_n(s,R_n(s,t,r))\,
  \Bigl[\partial_s R_n(s,t,r)\partial_r R_n(s,t,r) a_{n-1}(s,R_n(s,t,r)) \notag\\
  &
  \qquad + R_n(s,t,r)
  \partial_s\partial_r R_n(s,t,r) a_{n-1}(s,R_n(s,t,r)) \notag\\
  &
  \qquad - R_n(s,t,r) \partial_r R_n(s,t,r)
  \, 4 \pi (r a^2_{n-1} j_{n-1})(s,R_n(s,t,r))
  \Bigr]\, ds.
\end{align}
The first two terms are continuously differentiable.
The same is true for the first and last term under the integral.
The crucial term is $\partial_s\partial_r R_n(s,t,r)$.
First we note that
\[
\partial_s\partial_r^2 R_n(s,t,r) =
- \partial_r\bigl(\partial_r \beta_{n-1}(s,R_n(s,t,r))\partial_r R_n(s,t,r)\bigr)
\]
which exists and is continuous.
Now it holds that
\[
\partial_t R_n(s,t,r) = \partial_r R_n(s,t,r)\, \beta_{n-1}(t,r)
\]
which is continuously differentiable with respect to $s$ and $t$,
and this implies that also $\partial_s\partial_t\partial_r R_n(s,t,r)$
exists and is continuous.
Hence the term $\beta'_{n,32}$ is continuously differentiable with
respect to $t$ and $r$.
It follows that $\beta_n \in C^1([0,T_{n-1}[\times [0,\infty[)$ with
$\partial_r \beta_n \in C^1([0,T_{n-1}[\times [0,\infty[)$ as desired.
Also, $\beta_n(t,0) = 0 = \partial_r^2 \beta_n(t,0)$.
The assertion for $\beta_n(t,0)$ follows from the definition
\eqref{betandef}. For the assertion of its second order derivative
in $r$ we check all the terms which appear in that derivative and
observe that each of these contains at least one factor of the form
$R_n,\ \partial_s R_n,\ \partial_r^2 R_n,\ \partial_r^2[(1-A_n)/(2r)],\ j_n,
\ \partial_r p_n,\ \partial_r p_{T,n}$,
each of which vanishes at $r=0$.
Altogether, we have shown that $\beta_n$ is regular in the sense of
Definition~\ref{regular}~(e).
\subsubsection{The $n\,$th iterates for the metric quantities---$a_n$}
\label{sectionandef}
We define the next iterate $a_n$ by defining
$\lambda_n$ as the solution to the initial value
problem
\begin{align}\label{ee-red4-lambdan}
 \left(\partial_t -\beta_{n}\, \partial_r\right) e^{-\lambda} = 4\pi r j_n,\ 
e^{-\lambda(0,r)} = e^{-\mathring{\lambda}(r)}=1/\mathring{a}(r),
\end{align}
cf.\ \eqref{lambdadef} and \eqref{ee-red4-lambda}.
Note that we use the new iterate $\beta_n$ at this stage, but otherwise
the structure of this problem is the same as \eqref{betaeq}.
Since $\beta_n$ has the required regularity we can proceed as before
and use Lemma~\ref{lemmabetaeq} to define
$[0,T_{n-1}[\,\ni s\mapsto R_{n+1}(s,t,r)$
as the solution of the characteristic equation
\[
\dot r = - \beta_{n}(s,r)
\]
with $R_{n+1}(t,t,r)=r$,
and
\begin{align}
  e^{-\lambda_n(t,r)}
  &\coloneqq
  e^{-\mathring{\lambda}(R_{n+1}(0,t,r))}
  + 4\pi \int_0^t(r j_n)(s,R_{n+1}(s,t,r))\, ds
  \label{lambda_ndef}
\end{align}
on some maximal interval $[0,T_{n}[\subset [0,T_{n-1}[$
on which the right hand side is positive.
We let $a_n = e^{\lambda_n}$ and need to check that $a_n$ is 
regular, but this is fine from
\eqref{lambda_ndef}, since we only need
$a_n \in C^1([0,T_{n}[ \times [0,\infty[)$, i.e., the complicated maneuver to
get the second order derivatives of $\beta_n$ need not be repeated
for $a_n$. Since $R_{n+1}(s,t,0)=0$, it follows that $e^{-\lambda_n(t,0)}=1$
and hence $a_n(t,0)=1$, and $\partial_r a_n(t,0)=0$.
Finally,
\begin{align}\label{a_neqn}
\partial_t a_n - \beta_n \partial_r a_n = -4\pi r a_n^2 j_n,
\end{align}
which is exactly \eqref{ee4-red-n-1} required 
in the induction assumption,
but now for the next iterate. 
    
One iteration cycle is now complete. It needs to be checked that
$a_n, j_n, \beta_n$ are locally bounded on $[0,T_n[$, but this will be done
as part of the bounds to be established next. Since the local boundedness
of $\beta_n$ was used to define $a_n$ above,
it is important that that boundedness
can be proven, before $a_n$ is even defined, which will indeed be the case.
\subsection{Bounds on the iterates---loop 1}\label{loop1}
We fix two constants $P_0\geq 1, Q_0\geq 1$ such that
\[
\supp \mathring{f} \subset B_{Q_0}\times B_{P_0};
\]
$B_R$ denotes the closed ball in $\R^3$ with radius $R$ and center $0$.
In what follows, $C>0$ denotes a constant which may depend on
\[
P_0, Q_0, \|\mathring{f}\|, \|\mathring{a}\|, \|1/\mathring{a}\|,
\|\mathring{\beta}\|, \|\partial_r\mathring{ \beta}\|;
\]
$C$ does not depend on $t$ or $n$, and in what
follows it may change from line to line.
For $n\in \N$ and $t\in [0,T_{n-1}[$ we define
\begin{align}
  P_n(t)
  &\coloneqq
  \sup\bigl\{|V_n(s,0,z)| \mid z\in \supp \mathring{f},\
  \ 0 \leq s\leq t\bigr\},
  \label{Pn-def}\\
  Q_n(t)
  &\coloneqq
  \sup\bigl\{|X_n(s,0,z)| \mid z\in \supp \mathring{f},
  \ \ 0 \leq s\leq t\bigr\}.
  \label{qn-def}
\end{align}
We aim for a Gronwall-loop (``loop 1'') by which we can bound these
quantities and various others which naturally come up in the loop.
On the interval $[0,T_{n-1}[$,
\[
f_n(t,x,v) = 0\ \mbox{if}\ |x|\geq Q_n(t)\ \mbox{or}\ |v| \geq P_n(t),
\]
\[
\|\rho_n(t)\|,\ \|p_n(t)\|,\  \|p_{T,n}(t)\|,\  \|j_n(t)\| \leq
C P_n(t)^4,
\]
and
\[
\rho_n(t,x)=p_n(t,x)=p_{T,n}(t,x)=j_n(t,x)=0\ \mbox{if}\ |x|\geq Q_n(t).
\]
We use the characteristic equations
which define $X_n(s)$ and $V_n(s)$ to estimate $Q_n$ and $P_n$:
\begin{align}
  Q_n(t)
  &\leq C + \int_0^t \left(\|e^{-\lambda_{n-1}(s)}\| + \|\beta_{n-1}(s)\|\right) ds,
  \label{qnest}\\
  P_n(t)
  &\leq C + C \int_0^t
  \bigl(Q_{n-1}(s) \|a_{n-1}(s)\|\, P_{n-1}(s)^4
  + \|\partial_r\beta_{n-1}(s)\|\bigr) P_n(s)\, ds .
  \label{pnest}
\end{align}
Next \eqref{Andef} and \eqref{betandef} imply that
\begin{align}
  \left\|\frac{1}{r} ( 1-A_n)(t)\right\|
  &\leq
  C Q_n(t) P_n(t)^4 \left(1+\|a_{n-1}(t)\| \|\beta_{n-1}(t)\|\right),
  \label{Anest}\\
  \|\beta_n(t)\|
  &\leq \|\mathring{\beta}\| +
  C \int_0^t Q_n(s) P_n(s)^4 \left(1+\|a_{n-1}(s)\| \|\beta_{n-1}(s)\|\right) ds.
  \label{betanest}
\end{align}
Before we continue with our Gronwall argument we notice that the estimates
\eqref{qnest}--\eqref{betanest},
which we could have carried out before constructing
$a_n$ in Section~\ref{sectionandef}, show that $\beta_n$
satisfies the boundedness condition
which was used in constructing $a_n$, cf.~the discussion
at the end of Section~\ref{sectionandef}.

We continue with the Gronwall loop by estimating $\partial_r \beta_n$. 
From the discussion for the radial derivatives of the right had side
of \eqref{betandef} it follows that
\begin{align}\label{drbetanest}
 &\|\partial_r \beta_n(t)\| \leq
  C \biggl[
    \exp\left(\int_0^t\|\partial_r \beta_{n-1}(\tau)\| d\tau\right)
    + Q_n(t) P_n(t)^4 \|a_{n-1}(t)\|\notag \\
    &\quad {}+ \int_0^t P_n(s)^4 \biggl(1+\|a_{n-1}(s)\| \|\beta_{n-1}(s)\|\notag\\
    &\qquad + Q_n(s) \|a_{n-1}(s)\|
    \left(\|\partial_r\beta_{n-1}(s)\|+Q_n(s)\|a_{n-1}(s)\| P_{n-1}(s)^4\right)
    \biggr) ds\notag\\
    &\quad {}+ \int_0^t
    \exp\left(\int_s^t\|\partial_r \beta_{n-1}(\tau)\| d\tau\right) P_n(s)^4
    \biggl(1+\|a_{n-1}(s)\| \|\beta_{n-1}(s)\|\notag\\
    &\qquad + Q_n(s) \|a_{n-1}(s)\|
    \left(Q_{n-1}(s)\|a_{n-1}(s)\| P_{n-1}(s)^4
    +\|\partial_r\beta_{n-1}(s)\|\right)\biggr) ds;
\end{align}
notice that the term $\partial_r R_n(s,t,r)$, which arises when differentiating
\eqref{betandef}, is given by \eqref{drRnrep}.

Finally we bound $1/a_n = e^{-\lambda_n}$ and $a_n$. As to the former,
\eqref{lambda_ndef} implies that
\begin{align}\label{lambda_nest}
  \|e^{-\lambda_n(t)}\| \leq C \left(1+\int_0^t Q_n(s) P_n(s)^4 ds\right).
\end{align}
On the other hand, we can apply Lemma~\ref{lemmabetaeq}~(b) to the equation
\eqref{a_neqn} and obtain the estimate
\begin{align}\label{a_nest}
  \|a_n(t)\| \leq C \left(1+\int_0^t Q_n(s) P_n(s)^4 \|a_n(s)\|^2 ds\right).
\end{align}
The estimates \eqref{qnest}--\eqref{a_nest} form a closed Gronwall-type loop
which allows us to establish uniform control of the involved quantities.
We define
\[
  [0,T[\,\ni t\mapsto
      (z_Q(t), z_P(t), z_{\beta}(t), z_{\partial_r \beta}(t), z_{1/a}(t), z_a(t))
\]
as the maximal solution of the set of integral equations
\begin{align*}
  z_Q(t)
  &= C + \int_0^t \left(z_{1/a}(s) + z_\beta (s)\right) ds,\\
  z_P(t)
  &= C + C \int_0^t
  \left(z_Q(s) z_a(s) z_P(s)^4
  + z_{\partial_r\beta}(s)\right) z_P(s)\, ds,\\  
  z_\beta (t)
  &= C +
  C \int_0^t z_Q(s) z_P(s)^4 \left(1+z_a(s) z_\beta (s)\right) ds,\\
  z_{\partial_r \beta}(t)
  &=
  C \biggl[
    \exp\left(\int_0^t z_{\partial_r \beta} (\tau) d\tau\right)
    + z_Q(t) z_P(t)^4 z_a (t) \notag \\
    &\quad\qquad {}+ \int_0^t z_P(s)^4 \biggl(1+z_a(s) z_\beta (s)\\
    &\qquad\qquad + z_Q(s) z_a(s)
    \left(z_{\partial_r\beta}(s)+z_Q(s)z_a(s) z_P(s)^4\right)\biggr) ds\\
    &\quad\qquad {}+ \int_0^t
    \exp\left(\int_s^t z_{\partial_r \beta}(\tau) d\tau\right) z_P(s)^4
    \biggl(1+z_a(s) z_\beta (s)\\
    &\qquad\qquad
    + z_Q(s) z_a(s)
    \left(z_Q(s) z_a(s) z_P(s)^4
    +z_{\partial_r\beta}(s)\right)\biggr) ds \biggr],\\
  z_{1/a}(t)
  &= C \left(1+\int_0^t z_Q(s) z_P(s)^4 ds\right),\\
  z_a(t)
  &= C \left(1+\int_0^t z_Q(s) z_P(s)^4 z_a(s)^2 ds\right).
\end{align*}
It follows by induction that for all $n\in \N$, $T_n>T$, and for $t\in[0,T[$,
\[
Q_n(t)\leq Q_P(t),\ \ldots,\ \|a_n(t)\|\leq z_{a}(t).
\]
\subsection{Bounds on the iterates---loop 2}\label{loop2}
In order to see that the quantities which were
bounded in the previous subsection
do actually converge we need to bound $\partial_z F_n$. If we examine which new
terms this requires in addition to those already bounded in the previous loop,
we find that we in addition need to bound
\begin{align}\label{loop2aim}
  \partial_r a_n,\ \partial_r j_n,\ \partial_r^2 \beta_n .
\end{align}
If we recall \eqref{lambda_ndef}, it follows that
\begin{align*}
  e^{-\lambda_n(t,r)} \partial_r\lambda_n(t,r)
  &= -e^{-\mathring{\lambda}(R_{n+1}(0,t,r))}\mathring{\lambda}'(R_{n+1}(0,t,r))
  \partial_r R_{n+1}(0,t,r)\\
  &\quad
  {}- \int_0^t \partial_r(r j_n)(s,R_{n+1}(s,t,r)) \partial_r R_{n+1}(s,t,r)\, ds.
\end{align*}
If we recall that $a_n=e^{\lambda_n}$ and \eqref{drRnrep}, we see that
\begin{align}\label{dra_nest}
  \|\partial_r a_n(t)\| \leq C(t)\left(1+\int_0^t\|\partial_r j_n(s)\| ds\right),
\end{align}
where the $C(t)$ denotes a positive, continuous function defined on the interval
$[0,T[$ which depends on the $z$-functions introduced in the previous
subsection, and on $\|\mathring{a}'\|$.

In order to estimate $\partial_r^2 \beta_n$
we recall the split in \eqref{betandef}.
First we observe that by differentiating \eqref{drRnrep} once more we obtain the
estimate
\begin{align}\label{drrR_n}
  \left|\partial_r^2 R_n (s,t,r)\right|
  \leq C(t) \int_s^t \|\partial_r^2\beta_{n-1}(\tau)\|\, d\tau .
\end{align}
This immediately implies that
\begin{align}\label{drrbeta_n1}
  \|\partial_r^2 \beta_{n,1}(t)\|
  \leq C(t) \left(1+\int_0^t \|\partial_r^2\beta_{n-1}(\tau)\|\, d\tau\right).
\end{align}
In order to differentiate $\beta_{n,2}$ twice, we need to do this for
$(1-A_n)/(2r)$ first, recalling \eqref{Andef}:
\begin{align*}
  \partial_r\left(\frac{1-A_n(t,r)}{2r}\right)
  &=4\pi(\rho_n - a_{n-1} \beta_{n-1} j_n)(t,r)\\
  &\quad {} -
  \frac{8\pi}{r^3}\int_0^r s^2(\rho_n - a_{n-1} \beta_{n-1} j_n)(t,s)\,ds,
  \\
  \partial_r^2\left(\frac{1-A_n(t,r)}{2r}\right)
  &=4\pi \partial_r (\rho_n - a_{n-1} \beta_{n-1} j_n)(t,r)\\
  &\quad {} - \frac{8\pi}{r}
  \int_0^r \partial_r(\rho_n - a_{n-1} \beta_{n-1} j_n)(t,s)\,ds\\
  &\quad {} + \frac{24\pi}{r^4}\int_0^r s^2
  \int_0^s\partial_r(\rho_n - a_{n-1} \beta_{n-1} j_n)(t,\tau)\,d\tau\,ds.
\end{align*}
Using these identities it follows that
\begin{align}\label{drrbeta_n2}
 \|\partial_r^2 \beta_{n,2}(t)\|
 &\leq
 \int_0^t C(s)\Bigl( 1 + \|\partial_r \rho_n(s)\|
 + \|\partial_r j_n(s)\|  \notag \\
 &\qquad \qquad\qquad {} + \|\partial_r j_{n-1}(s)\|
 + \|\partial_r^2 \beta_{n-1}(s)\|\Bigr) ds.
\end{align}
In order to bound $\partial_r^2 \beta_{n,3}(t)$
we check all the terms which arose
when we showed that $\beta_{n,3}(t)$ is twice differentiable
in Section~\ref{beta_nsection}.
Collecting all the terms results in the estimate
\begin{align}\label{drrbeta_n3}
 \|\partial_r^2 \beta_{n,3}(t)\|
 &\leq
 \int_0^t C(s)\Bigl( 1 + \|\partial_r p_n(s)\| + \|\partial_r p_{T,n}(s)\|
 + \|\partial_r j_n(s)\|  \notag \\
 &\qquad \qquad\qquad {} + \|\partial_r j_{n-1}(s)\|
 + \|\partial_r^2 \beta_{n-1}(s)\|\Bigr) ds\\
 &\quad {}+ C(t)\Bigl( 1 + \|\partial_r j_n(0)\| + \|\partial_r j_n(t)\|\Bigr).
\end{align}
The estimates \eqref{drrbeta_n1},\eqref{drrbeta_n2},\eqref{drrbeta_n3} together
imply that
\begin{align}\label{drrbeta_n}
 \|\partial_r^2 \beta_n(t)\|
 &\leq
 C(t) \int_0^t \Bigl(\|\partial_r \rho_n(s)\| + \|\partial_r p_n(s)\|
 + \|\partial_r p_{T,n}(s)\|
 + \|\partial_r j_n(s)\|  \notag \\
 &\qquad \qquad\qquad {} + \|\partial_r j_{n-1}(s)\|
 + \|\partial_r^2 \beta_{n-1}(s)\|\Bigr) ds \notag\\
 &\quad {}+ C(t)\Bigl( 1 + \|\partial_r j_n(t)\|\Bigr).
\end{align}
As in \cite[Lemma 6.8]{GB} it follows that
\begin{align}\label{drrho_nest}
 \|\partial_r \rho_n(t)\|, \|\partial_r p_n(t)\|, 
 \|\partial_r p_{T,n}(t)\|, \|\partial_r j_n(t)\| \leq C(t) \|\partial_x f_n(t)\|,
\end{align}
and clearly,
\begin{align}\label{dzf_nest}
  |\partial_z f_n(t,z)| \leq \|D \mathring{f}\| |\partial_z Z_n(0,t,z)|;
\end{align}
we note that only characteristics which start in the support of $\mathring{f}$
need to be considered here.
We can now differentiate the characteristic system
$\dot z = F_{n-1}(s,z)$ satisfied by $Z_n(s,t,z)$ with
respect to $z$ so that with the abbreviation
$Z_n(s) = Z_n(s,t,z) = (X_n,V_n)(s,t,x,v)$,
\[
\partial_z \dot Z_n(s) =\partial_z F_{n-1}(s,Z_n(s)) \cdot \partial_z Z_n(s),
\]
and hence after applying a Gronwall argument,
\begin{align}\label{dZ_nest}
  |\partial_z Z_n(s,t,z)|
  &\leq \exp\left(\int_s^t
  \sup\{|\partial_z F_{n-1}(\tau,x,v)| \mid
  |v| \leq z_P(t)\}\, d\tau\right)\notag \\
  &\leq \exp\left(\int_s^t C(\tau)\left(1+\|\partial_r j_{n-1}(\tau)\|
  +\|\partial_r^2 \beta_{n-1}(\tau)\|\right) d\tau\right),
\end{align}
where the function $z_P$ was introduced in \ref{loop1} and $|V_n(s)|\leq z_P(t)$
for $0\leq s\leq t < T$. 
If we define
\[
S_n(t)\coloneqq \max_{0\leq k\leq n}\left(\|\partial_z f_k(t)\| +
\|\partial_r^2 \beta_k(t)\|\right),
\]
then the the estimates \eqref{drrbeta_n}, \eqref{drrho_nest}, \eqref{dzf_nest},
\eqref{dZ_nest} can be combined to yield
\begin{align}\label{S_nest}
  S_n(t) \leq C(t) \exp\left(C(t) \int_0^t S_n(s) ds\right)
\end{align}
on $[0,T[$, where $C(t)$ denotes a continuous, increasing function which depends
on the $z$-functions introduced in Section~\ref{loop1},
and on certain derivatives of the
initial data. Let 
$\xi \colon [0,T_1[\,\ni t\mapsto \xi(t)$ denote the maximal solution
to the integral equation
\begin{align}\label{xi}
  \xi(t) = C(t) \exp\left(C(t) \int_0^t \xi(s) ds\right);
\end{align}   
of course $T_1 \leq T$. Then for all $n\in \N$ and $t\in [0,T_1[$,
$S_n(t) \leq \xi(t)$, and hence we obtain uniform bounds for
the quantities stated in \eqref{loop2aim}.
\subsection{Convergence---loop 1}\label{loop1conv}
We fix some interval $[0,\delta]\subset [0,T_1[$ so that all the uniform
bounds established in the previous two section hold on $[0,\delta]$;
in what follows $C$ denotes a constant which is independent of $t\in [0,\delta]$
and $n\in \N$. Since
\begin{align*}
  |R_{n+1}(s,t,r) - R_n(s,t,r)|
  &\leq \int_s^t\!\!
  \left|\beta_n(\tau,R_{n+1}(\tau,t,r))
  -\beta_{n-1}(\tau,R_n(\tau,t,r)\right|\, d\tau\\
  &\leq
  \int_s^t \|\beta_n(\tau)-\beta_{n-1}(\tau)\| d\tau \\
  &\quad {}+ C \int_s^t
  \left|R_{n+1}(\tau,t,r)-R_n(\tau,t,r)\right|\, d\tau,
\end{align*}
so that by Gronwall's lemma,
\begin{align}\label{R_ndiff}
  |R_{n+1}(s,t,r) - R_n(s,t,r)|
  \leq C \int_0^t
  \|\beta_n(\tau)-\beta_{n-1}(\tau)\| d\tau,\ 0\leq s\leq t\leq \delta.
\end{align}
Combining this estimate with the formula \eqref{betandef} and the uniform bounds
for the various source terms which appear on the right hand side of that
equation it follows that for $n\geq 2$,
\begin{align}\label{beta_ndiff}
  \|\beta_n(t)-\beta_{n-1}(t)\|
  &\leq
  C \int_0^t\Bigl(\|f_n(s)-f_{n-1}(s)\| + \|a_{n-1}(s)-a_{n-2}(s)\| \notag\\
  &\qquad\qquad {}+ \|\beta_{n-1}(s)-\beta_{n-2}(s)\|\Bigr)\, ds .
\end{align}
Combining \eqref{R_ndiff} with \eqref{lambda_ndef} it follows that
\begin{align}\label{a_ndiff}
\|a_n(t)-a_{n-1}(t)\|
  \leq
  C \int_0^t\Bigl(\|f_n(s)-f_{n-1}(s)\| +
  \|\beta_{n}(s)-\beta_{n-1}(s)\|\Bigr)\, ds,
\end{align}
and combining this in turn with \eqref{beta_ndiff} yields
\begin{align}\label{beta_ndiff2}
  \|\beta_n(t)-\beta_{n-1}(t)\|
  &\leq
  C \int_0^t\Bigl(\|f_n(s)-f_{n-1}(s)\| + \|f_{n-1}(s)-f_{n-2}(s)\| \notag\\
  &\qquad\qquad {}+ \|\beta_{n-1}(s)-\beta_{n-2}(s)\|\Bigr)\, ds .
\end{align}
We must estimate differences like $f_n-f_{n-1}$. To this end we
choose $U>0$ such that $|V_n(s,t,x,v)| \leq U$ for all
$0\leq s\leq t\leq\delta$,
$n\in \N$, and $|v|\leq z_P(t)$. The estimates from the previous subsection
imply that
\[
|\partial_z F_n(s,x,v)|\leq C,\ 0\leq s\leq \delta,\ n\in \N,\ |v|\leq U.
\]
By the characteristic system for the $n\,$th and $(n-1)$st iterate
and a Gronwall argument it follows that
\begin{align*}
  |Z_n(s) - Z_{n-1}(s)|
  &\leq
  C \int_0^t\Bigl(\|a_{n-1}(s)-a_{n-2}(s)\| + \|\beta_{n-1}(s)-\beta_{n-2}(s)\|\\
  &\quad {}+ \|f_{n-1}(s)-f_{n-2}(s)\|
  + \|\partial_r\beta_{n-1}(s)-\partial_r\beta_{n-2}(s)\|\Bigr) ds,
\end{align*}
which together with \eqref{a_ndiff} implies that
\begin{align}\label{f_ndiff}
  \|f_n(t) - f_{n-1}(t)\|
  &\leq
  C \int_0^t\Bigl(\|f_{n-1}(s)-f_{n-2}(s)\| + \|\beta_{n-1}(s)-\beta_{n-2}(s)\|
  \notag\\
  &\qquad\qquad {}
  + \|\partial_r\beta_{n-1}(s)-\partial_r\beta_{n-2}(s)\|\Bigr) ds,
\end{align}
and in order to close the Gronwall loop
we are left with estimating $\partial_r\beta_{n}-\partial_r\beta_{n-1}$.
To do so we recall the split in \eqref{betandef} and the fact that
computing $\partial_r \beta_{n,1}$ and $\partial_r \beta_{n,2}$ causes no
problems. We first observe that by \eqref{drRrep},
\begin{align*}
  &|\partial_r R_n-\partial_r R_{n-1}|(s,t,r)\\
  &\leq
  C \int_0^t \left(\|\partial_r\beta_{n-1}(\tau)-\partial_r\beta_{n-2}(\tau)\|
  + |R_{n}-R_{n-1}|(\tau,t,r) \right)d\tau\\
  &\leq
  C \int_0^t \left(\|\partial_r\beta_{n-1}(\tau)-\partial_r\beta_{n-2}(\tau)\|
  + \|\beta_{n-1}(\tau)-\beta_{n-2}(\tau)\|\right)d\tau .
\end{align*}
Using this it is easy to see that
\begin{align}\label{b_n12diff}
  &\|\partial_r \beta_{n,1}(t) - \partial_r \beta_{n-1,1}(t)\|
  + \|\partial_r \beta_{n,2}(t) - \partial_r \beta_{n-1,2}(t)\|\notag \\
  &\leq
  C \int_0^t\Bigl(\|f_{n}(s)-f_{n-1}(s)\| +\|f_{n-1}(s)-f_{n-2}(s)\| +
  \|\beta_{n-1}(s)-\beta_{n-2}(s)\|
  \notag\\
  &\qquad\qquad {}
  + \|\partial_r\beta_{n-1}(s)-\partial_r\beta_{n-2}(s)\|\Bigr) ds.
\end{align}
When estimating the difference
$\partial_r \beta_{n,3} - \partial_r \beta_{n-1,3}$
the same terms will come up again
from the terms $\beta'_{n,31}$ and $\beta'_{n,33}$ in \eqref{drbeta_n3},
and the only qualitatively new term is
\[
\partial_s\partial_r R_n(s,t,r) - \partial_s\partial_r R_{n-1}(s,t,r),
\]
the one arising
after the integration by parts in $\beta'_{n,32}$,
cf.\ \eqref{drbeta_n3crit}. But
differentiating \eqref{drRrep} with respect to $s$ we see that the latter
difference term can be estimated by
\[
\|\partial_r \beta_{n}(t) - \partial_r \beta_{n-1}(t)\|+
|\partial_r R_n - \partial_r R_{n-1}|(s,t,r)+
|R_n - R_{n-1}|(s,t,r).
\]
Altogether, 
\begin{align}\label{drb_ndiff}
  &\|\partial_r \beta_{n}(t) - \partial_r \beta_{n-1}(t)\| \notag\\
  &\leq
  C \int_0^t\Bigl(\|f_{n}(s)-f_{n-1}(s)\| +\|f_{n-1}(s)-f_{n-2}(s)\|
  \notag\\
  &\qquad\qquad {}
  + \|\beta_{n-1}(s)-\beta_{n-2}(s)\|+
  \|\partial_r\beta_{n-1}(s)-\partial_r\beta_{n-2}(s)\|\Bigr) ds.
\end{align}
We define
\[
d_n(t)\coloneqq
\|f_{n}(t) - f_{n-1}(t)\| + \|\beta_{n}(t) - \beta_{n-1}(t)\|
+\|\partial_r \beta_{n}(t) - \partial_r \beta_{n-1}(t)\|
\]
and add the estimates \eqref{beta_ndiff2}, \eqref{f_ndiff}, \eqref{drb_ndiff}
to establish the estimate
\begin{align}\label{d_nest}
d_n(t) \leq C \int_0^t d_{n-1}(s)\, ds
\end{align}
which holds for $n\geq 2$ and $t\in [0,\delta]$; note that
$\|f_{n}(s) - f_{n-1}(s)\|$ can be eliminated from the right hand side
using \eqref{f_ndiff}.

It is now a standard argument to conclude form \eqref{d_nest} that
\[
f_n \to f,\ \beta_n \to \beta,\ \partial_r \beta_n \to \partial_r \beta
\]
uniformly in $t\in [0,\delta]$ and uniformly in the other relevant variables.
This immediately implies the convergences
\[
\rho_n \to \rho,\ a_n \to a,\ \frac{1}{a_n} \to \frac{1}{a},\
\frac{1-A_n}{2r} \to \frac{1-A}{2r},
\]
where $A$ is given by \eqref{Adef-red},
the other source terms converge as well, and
\[
R_n(s,t,r)\to R(s,t,r),\ \partial_r R_n(s,t,r)\to \partial_r R(s,t,r)
\]
where $s\mapsto R(s,t,r)$ solves $\dot r = -\beta(s,r)$ with $R(t,t,r)=r$.
The fact that
\[
\partial_t \beta_n - \beta_{n-1}\partial_r \beta_n
= \frac{1-A_n}{2r} + 4 \pi r p_n
\]
implies that also $\partial_t \beta_n \to \partial_t \beta$, in particular,
$\beta\in C^1([0,\delta]\times [0,\infty[)$ satisfies
\eqref{ee-red3} and $\beta(0)=\mathring{\beta}$.
\subsection{Convergence of $\partial_r a_n$---regularity of $a$}\label{convader}
The convergence of $\partial_r a_n$ does not follow directly from the
Gronwall loop in the previous subsection, it requires a separate,
non-trivial argument. First we observe that it suffices to prove the
convergence of $\partial_r \lambda_n$. If we recall the definition of
$\lambda_n$ in \eqref{lambda_ndef} and take a radial derivative, the first
term on the right hand side converges by the results from the previous
subsection, and we are left with proving the convergence of
\[
\int_0^t \partial_r\left[(r j_n)(s,R_{n+1}(s,t,r))\right]\, ds .
\]
From \cite[Lemma 6.6, Lemma 6.8]{GB} we can conclude that
\[
\partial_r(r j_n) = r \int v\cdot \partial_x f_n dv - j_n
\]
so that we are left with proving the convergence of
\[
\int_0^t \left(r \int v\cdot \partial_x f_n dv\right)(s,R_{n+1}(s,t,r))\,
\partial_r R_{n+1}(s,t,r)\, ds .
\]
By the Vlasov equation for $f_n$,
\begin{align*}
  & \int v\cdot \partial_x f_n dv
  = - \int a_{n-1} \p0
  \biggl( \partial_t f_n - \beta_{n-1} \frac{x}{r}\cdot \partial_x f_n\\
  &\quad
  {}+\left[4\pi r a_{n-1}j_{n-1} \frac{x\cdot v}{r}  \frac{x}{r}
    + \left(\partial_r\beta_{n-1}
    -\frac{\beta_{n-1}}{r}\right)\frac{x\cdot v}{r}  \frac{x}{r}
    + \frac{\beta_{n-1}}{r} v\right]\cdot \partial_v f_n\biggl) dv.
\end{align*}
In the term with $[\dots]$ we integrate by parts with respect to $v$,
and the resulting terms converge by the previous subsection.
In order to analyze the remaining term we observe that
\[
\frac{d}{ds} f_n(s,X_{n+1}(s,t,x),v) =
\left(\partial_t f_n -\beta_{n} \frac{x}{r}\cdot
\partial_x f_n\right)(s,X_{n+1}(s,t,x),v),
\]
where we need to recall the meaning of the characteristic curve $X_{n+1}(s,t,x)$
introduced via \eqref{X_nbeta}; note that $|X_{n+1}(s,t,x)|= R_{n+1}(s,t,r)$
if $r=|x|$.
Paying attention to the mismatch in the subscript of $\beta$ we are left with
\begin{align*}
  & \int_0^t (r a_{n-1})(s,R_{n+1}(s,t,r))
  \int \p0 \frac{d}{ds} f_n(s,X_{n+1}(s,t,x),v)\,dv \partial_r R_{n+1}(s,t,r) ds\\
  &\qquad
       {}+\int_0^t [\ldots]
       \left(\beta_n(s,R_{n+1}(s,t,r) -\beta_{n-1}(s,R_{n+1}(s,t,r)\right) ds,
\end{align*}
where the $[\ldots]$ contains various terms which are bounded by the results
obtained in Section~\ref{loop1}
and $\partial_x f_n$ which is bounded by Section~\ref{loop2}.
Since $\beta_n$ converges uniformly by the results from
Section~\ref{loop1conv}, the second integral can be dropped,
and we must show the convergence of the first one. In that term we integrate by
parts with respect to $s$. The resulting boundary terms converge by the results
in Section~\ref{loop1conv}, and the same is true for the two integrals
containing $\partial_s R_{n+1}(s,t,r)$ and
$\partial_s \partial_r R_{n+1}(s,t,r)$, and the remaining
integral which needs to converge is
\[
\int_0^t R_{n+1}(s,t,r) \frac{d}{ds} a_{n-1}(s,R_{n+1}(s,t,r))
\rho_n (s,R_{n+1}(s,t,r)) \partial_r R_{n+1}(s,t,r)\, ds.
\]
We observe that
\begin{align*}
  &\frac{d}{ds} a_{n-1}(s,R_{n+1}(s,t,r))
  =
  \partial_t a_{n-1}(\ldots) + \partial_s R_{n+1}(s,t,r)\partial_r a_{n-1}(\ldots)\\
  &=\left(\partial_t a_{n-1} - \beta_{n-1}\partial_r a_{n-1}\right)(\ldots)
  +\left(\beta_{n-1}-\beta_n\right)(s,R_{n+1}(s,t,r)) \partial_r a_{n-1}(\ldots) \\
  &=-\left(4\pi r a_{n-1}^2 j_{n-1}\right)(\ldots)
  +\left(\beta_{n-1}-\beta_n\right)(s,R_{n+1}(s,t,r)) \partial_r a_{n-1}(\ldots),
\end{align*}
where in the last step we used \eqref{a_neqn}.
Since $\partial_r a_{n-1}$ is uniformly bounded
by the results obtained in Section~\ref{loop2} and $\beta_n$
converges uniformly according to
Section~\ref{loop1conv}  the second term vanishes,
while the first now contains only terms which
converge according to Section~\ref{loop1conv}.
Hence $\partial_r a_n$ converges, uniformly
in $t$ and $r$, by  \eqref{a_neqn} the same is true for $\partial_t a_n$,
and hence $a\in C^1([0,\delta]\times [0,\infty[)$ satisfies \eqref{ee-red4}
and its initial condition.
\subsection{Convergence---loop 2} \label{magic}
The convergence results and ensuing regularity of the limit objects
is not yet sufficient for $f$ to solve the Vlasov equation.
To remedy this situation we consider the derivatives
of the characteristics $Z_n(s,t,z)$ with respect to $z$ and form certain
combinations of these in such a way that the coefficients of
the differential equations satisfied by these combinations have
better convergence and regularity properties than
the original equations. This maneuver is sometimes referred to as
the ``Magic Lemma''; cf.~\cite[Lemma~2.3]{GB}, where its differential
geometric background is explained as well.
In the present situation it takes the following form,
which is motivated by the particular combination of second order derivatives of
$a$ and $\beta$ which is contained in \eqref{EE22}.

As above, we denote by
$Z_{n+1}(s,t,z)=(X_{n+1},V_{n+1})(s,t,x,v) = (X_{n+1},V_{n+1})(s)$
the solution to the characteristic system
\begin{align}
  &
  \dot x =
  \frac{v}{a_n(s,r) \p0}-\beta_n(s,r)\frac{x}{r}, \label{xdot}\\
  &
  \dot v =
  4\pi r \,(a_n\, j_n)(s,r)\, \frac{x\cdot v}{r} \frac{x}{r}
  +\left(\partial_r\beta_n
  -\frac{\beta_n}{r}\right)(s,r)\frac{x\cdot v}{r} \frac{x}{r}
    +\frac{\beta_n(s,r)}{r} v \label{vdot}
\end{align}
with $Z_{n+1}(t,t,z)=z$. For $j\in \{1,\ldots,6\}$ fixed we define
\begin{align}
  \xi &\coloneqq \partial_{z_j} X_{n+1},\label{xidef}\\
  \eta &\coloneqq
  \partial_{z_j} V_{n+1}\notag \\
  &\quad{}
  - \left(4\pi r \,a_n^2\, j_n
  + a_n\left(\partial_r\beta_n-\frac{\beta_n}{r}\right)\right) (s,X_{n+1})
  \langle V_{n+1}\rangle \frac{X_{n+1}}{|X_{n+1}|}
  \cdot \xi \frac{X_{n+1}}{|X_{n+1}|}.
  \label{etadef}
\end{align}
Then
\begin{align}
  \dot \xi
  &=
  c_{1,n}(s,z_{n+1}(s)) \xi +  c_{2,n}(s,z_{n+1}(s)) \eta,\label{xidot}\\
  \dot \eta
  &=
  c_{3,n}(s,z_{n+1}(s)) \xi +  c_{4,n}(s,z_{n+1}(s)) \eta
  + c_{n}^\ast (s,z_{n+1}(s)) \xi,
  \label{etadot}
\end{align}
where $c_{1,n},\ldots,c_{4,n}$ are continuous and such that for any $U>0$
there exists a constant $C>0$ such that
\begin{align}\label{c_nbounds}
  & |c_{i,n}(s,z)| \leq C,\ i=1,\ldots,4,\ n\in \N,\
  (s,z)\in [0,\delta]\times \R^3 \times B_U,\\
 & |\partial_z c_{i,n}(s)| \leq C,\ i=1,\ldots,4,\ n\in \N,\
  (s,z)\in [0,\delta]\times (\R^3\setminus \{0\}) \times B_U,
\end{align}
and
\[
c_{i,n} \to c_i\ \mbox{uniformly on}\ [0,\delta]\times \R^3 \times B_U,\
i=1,\ldots,4;
\]
$c_{1,n},\ldots,c_{4,n}$ contain only terms the convergence
of which is already established.
This is not true for the critical coefficient 
\begin{align}\label{c_n5}
  c_{n}^\ast(s,x,v)\xi
  =
  \left(4\pi r a_n^2 (\beta_n \partial_r j_n - \partial_t j_n)
  + a_n (\partial_r^2 \beta_n - \partial_r\partial_t \beta_n)\right)
  \p0 \frac{x}{r}\cdot \xi \frac{x}{r}.
\end{align}
The justification of this maneuver is a straight forward computation
for which we introduce the notation $A\sim B$ to mean that $A - B$
contains only terms the convergence of which
was established in Sections~\ref{loop1conv} and \ref{convader};
the critical terms, which we collect into $c_{n}^\ast$,
are those containing first order derivatives of $j_n$ or second order
ones of $\beta_n$. If we substitute $Z_{n+1}$ into \eqref{xdot} and \eqref{vdot}
and then differentiate with respect to $z_j$, the $\partial_{z_j} X_{n+1}$
component of the resulting ``variational system'' obeyed by
$\partial_{z_j} Z_{n+1}$ is of the form \eqref{xidot}.
In order to obtain \eqref{etadot} 
we differentiate \eqref{etadef} with respect to $s$ and take into account
the variational system for $\partial_{z_j} Z_{n+1}$:
\begin{align*}
  \dot \eta
  &\sim
  \left(4\pi r \,a_n\, \partial_r j_n + \partial_r^2 \beta_n\right)
  \, \frac{x\cdot v}{r} \frac{x}{r}\cdot\xi\frac{x}{r}
  -\left(4\pi r \,a_n^2\, \partial_t j_n + a_n \partial_t\partial_r\beta_n\right)
  \p0 \frac{x}{r} \cdot \xi \frac{x}{r}\\
  &\quad{}
  - \left(4\pi r \,a_n^2\, \partial_r j_n + a_n \partial_r^2\beta_n\right)
  \left(\frac{\frac{x\cdot v}{r}}{a_n \p0}-\beta_n\right)
  \p0 \frac{x}{r} \cdot \xi \frac{x}{r}\\
  &\sim
  \left(4\pi r \,a_n^2\, \partial_t j_n + a_n \partial_t\partial_r\beta_n\right)
  \p0 \frac{x}{r} \cdot \xi \frac{x}{r}
  + \left(4\pi r \,a_n^2\, \partial_r j_n + a_n \partial_r^2\beta_n\right) \beta_n
  \frac{x}{r}\cdot\xi\frac{x}{r}\\
  &= c_{n}^\ast \xi
\end{align*}
as desired.

It remains to investigate the critical coefficient $c_{n}^\ast$,
and we start with the term containing the derivatives of $\beta_n$.
The idea is to use the fact that the equation
\[
\partial_t\beta_n-\beta_{n-1} \partial_r\beta_n = \frac{1}{2r}(1-A_n)+4\pi r p_n,
\]
holds by construction, cf.~Section~\ref{beta_nsection},
and it can be differentiated with respect
to $r$. Comparing with the definition of $c_{n}^\ast$ there
is a mismatch in the subscripts,
but this can be remedied as follows:
\begin{align*}
  \partial_t\partial_r\beta_n-\beta_{n} \partial_r^2\beta_n
  &= \partial_t\partial_r\beta_n-\beta_{n-1} \partial_r^2\beta_n
  + (\beta_{n-1}-\beta_n)\partial_r^2\beta_n\\
  &= \partial_r\left(\partial_t\beta_n-\beta_{n-1} \partial_r\beta_n\right) +
  \partial_r\beta_{n-1}\partial_r\beta_n +
  (\beta_{n-1}-\beta_n)\partial_r^2\beta_n\\
  &\sim 4\pi r \partial_r p_n + (\beta_{n-1}-\beta_n)\partial_r^2\beta_n,
\end{align*}
where we used the fact that $\partial_r\left((1-A_n)/(2r)\right)$
does converge by \ref{loop1conv}, which is obvious from the relation
\eqref{Andef}. Since $\partial_r^2\beta_n$ is bounded by Section~\ref{loop2}
and $\beta_n$ converges by Section~\ref{loop1conv}
it follows that $(\beta_{n-1}-\beta_n)\partial_r^2\beta_n\to 0$,
uniformly in $t$ and $r$, and we are left with the expression
\[
4\pi r\left(a_n \beta_n \partial_r j_n - a_n \partial_t j_n
- \partial_r p_n\right)
\]
as the critical part of $c_n^\ast$. In order to proceed we express
$\partial_t j_n$ via Lemma~\ref{dtrhodtj} to find that
\begin{align*}
  & r\left(a_n \beta_n \partial_r j_n - a_n \partial_t j_n
  - \partial_r p_n\right)\\
  &=
  2 \left(p_n - p_{T,n}\right) 
  - 2 r a_n
  \left(\partial_r \beta_{n-1} -\frac{\beta_{n-1}}{r}
  + 4\pi r a_{n-1} j_{n-1}\right) j_n\\
  &\quad {}
  + r a_n (\beta_n - \beta_{n-1}) \partial_r j_n
  + r \left(\frac{a_n}{a_{n-1}}-1\right)
  \left(\partial_r p_n +\frac{2}{r} p_n - \frac{2}{r} p_{T,n}\right).
\end{align*}
The terms $\partial_r j_n$ and $\partial_r p_n$
are uniformly bounded by Section~\ref{loop2},
and the difference factors
in front of them converge to $0$ uniformly by Section~\ref{loop1conv}.
The remaining terms do not
contain any ``forbidden'' derivatives; they converge uniformly by
Section~\ref{loop1conv},
and their radial derivatives are uniformly bounded by Section~\ref{loop2}.

It is now straight
forward to conclude that $\xi=\xi_n$ and $\eta=\eta_n$ converge, and hence
$\partial_{z_j} X_n$ and $\partial_{z_j} V_n$ converge, uniformly on
$[0,\delta]\times [0,\delta]\times \R^3 \times B_U$, where $U>0$ was arbitrary.
This implies that $Z\in C^1([0,\delta]\times [0,\delta]\times \R^6)$
which in turn implies the missing regularity for $f$,
and the limiting objects $f, a, \beta$
do constitute a regular solution to the initial value problem as claimed in
Theorem~\ref{locex}.

Uniqueness of this solution follows if we apply the Gronwall-type estimates
established in Section~\ref{loop1conv} for the
difference of two consecutive iterates  to the difference of two
solutions with the same data. This unique solution can be extended to a maximal
existence interval $[0,T[$, and it remains to prove the continuation
criterion contained in Theorem~\ref{locex}.

We still need to check the boundary conditions for the metric at infinity.
As above, let $[0,T[\,\ni s\mapsto R(s,t,r)$ denote the unique solution
of the characteristic equation
\begin{align}\label{charactbeta}
  \dot r = - \beta (s,r)
\end{align}
with $R(t,t,r)=r$; these are the characteristics of both
\eqref{ee-red3} and \eqref{ee-red4}. We fix some
$T'<T$. Then $\beta$ is bounded on $[0,T']\times [0,\infty[$,
which for the characteristics implies that
\[
|R(s,t,r)-r|\leq C,\ r\geq 0,\ 0\leq s,t\leq T',
\]
and hence
\[
\lim_{r\to\infty} R(s,t,r) = \infty
\]
uniformly in $s,t\in [0,T']$. Since by Lemma~\ref{lemmabetaeq},
\[
a(t,r)=\mathring{a}(R(0,t,r))
- 4\pi \int_0^t \left(r a^2 j\right)(s,R(s,t,r))\, ds,
\]
the uniformly-in-$t$ compact support of $j$ and the boundary condition
$\mathring{a}(\infty)=1$ imply that $a(t,\infty)=1$ as required.
Next we note that
the compact support property and \eqref{Adef-red}
immediately imply that $A(t,\infty)=1$,
and by the relation \eqref{Adef}, $\beta(t,\infty)=0$ as desired.
\subsection{The continuation criterion}
Let us assume now that the existence interval $[0,T[$ of our local solution
is maximal with $T<\infty$,
and that the continuation condition \eqref{contcrit} holds.
Since $f$ is constant along characteristics this implies that there is a
constant $C^\ast >0$ such that for all $t\in[0,T[$,
\[
f(t,x,v) = 0 \ \mbox{if}\ |v| \geq C^\ast,\ \|f(t)\|, \|a(t)\|,
\|\partial_r \beta(t)\|
\leq C^\ast .
\]
The bound of $\partial_r \beta$ and the boundary condition
$\beta(t,0)=0$ imply that $|\beta(t,r)| \leq C^\ast r$.
Hence by the first component of the characteristic system,
\[
|\dot x(s)| \leq \|1/a(s)\| + C^\ast |x(s)|.
\]
For
\[
Q(t)\coloneqq \sup\{|X(s,0,z)| \mid z\in \supp \mathring{f},\ \ 0 \leq s\leq t\}
\]
this implies that
\[
Q(t) \leq e^{C^\ast T}\left(R_0 + \int_0^t \|1/a(s)\| ds\right).
\]
Next by Lemma~\ref{lemmabetaeq},
\begin{align}\label{1overa}
  (1/a)(t,r)=
  (1/\mathring{a})(R(0,t,r)) + 4\pi \int_0^t(r j)(s,R(s,t,r))\, ds,
\end{align}
where $R(s,t,r)$ are the characteristics defined above, cf.~\eqref{charactbeta}.
Hence
\[
\|1/a(t)\| \leq
\|1/\mathring{a}\| + C \int_0^t Q(s)\, ds, 
\]
and by these estimates for $Q$ and $1/a$ it follows that
\[
\|1/a(t)\|, Q(t) \leq C^\ast,\ t\in [0,T[,
\]
after suitably increasing $C^\ast$. But now \eqref{ee-red3} together
with \eqref{Adef-red} directly yield a Gronwall inequality for $\beta$:
\[
|\partial_t\beta (t,r)| \leq C^\ast |\beta(t,r)| + C + C \|\beta(t)\|,
\]
where $C$ is given in terms of $C^\ast$. Hence 
increasing $C^\ast$ again, also the bound
\[
\|\beta(t)\| \leq C^\ast,\ t\in [0,T[
\]
holds.

In order to extend the solution we also have to bound various derivatives.
First we notice that we can repeat the
arguments from Section~\ref{convader},
which lead to the convergence of $\partial_r a_n$,
on the level of our local solution. For the latter,
we do not get any of the ``error''
terms containing differences of two consecutive iterates,
and the remaining terms
which arise are bounded by the continuation assumption and
the bound which we already
derived. Thus $\partial_r a$ remains bounded on $[0,T[$.

Next we repeat the ``Magic Lemma'' arguments from Section~\ref{magic},
but again on the level
of the local solution. This means that in the equations corresponding to
\eqref{xidot} and \eqref{etadot} we drop the subscripts,
and in the analysis of the
corresponding coefficients we can drop all the ``error'' terms which contain
a difference of two consecutive iterates.
But these terms were indeed the only ones
which also contained derivatives which we have not yet bounded,
so that the actual
coefficients are bounded on $[0,T[$. This implies that $\|\partial_z f(t)\|$
and hence also $\|\partial_r^2 \beta(t)\|$ remain bounded.

Now we pick some $t_0 \in ]0,T[$ close to $T$ and consider the new initial value
problem, where we prescribe $f(t_0)$, $a(t_0)$, $\beta(t_0)$ as data at $t=t_0$;
these data are admissible in the sense of our local existence result,
which gives us a solution on some interval $[t_0,t_0+\delta]$.
If we now examine which parameters of the data determined the length $\delta$
we find that all these parameters are uniformly bounded on the interval $[0,T[$.
Thus we may pick $\delta>0$ independently of $t_0$, and if the latter is chosen
close enough to $T$, we have extended the solution beyond $T$,
in contradiction to     $[0,T[$ being the maximal existence interval.

The proof of Theorem~\ref{locex} is complete.
\subsection{Regularity of the metric}\label{regmetric}
On general principles and in particular in view of the discussion
of \eqref{EE22}
in Proposition~\ref{fullsystem} and the remark after its proof it
is desirable that
$a, \beta \in C^2([0,T[\times [0,\infty[)$. For $\beta$ this follows from
Definition~\ref{regular}~(e) and the fact that by \eqref{ee-red3} also
$\partial^2_t \beta$ exists and is continuous. We need to check the
regularity of $a$.
\begin{cor}\label{corareg}
  In addition to the general assumptions on the data
  $(\mathring{f}, \mathring{a},\mathring{\beta})$ specified
  in Section~\ref{data},
  let $\mathring{a}\in C^2([0,\infty[)$.
  Then $a\in C^2([0,T[\times [0,\infty[)$.
\end{cor}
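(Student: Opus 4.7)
The plan is to upgrade the $C^1$ regularity of $a$ established in Theorem~\ref{locex} to $C^2$ under the sharpened hypothesis $\mathring{a}\in C^2$. The natural starting point is the integral representation
\[
(1/a)(t,r)=(1/\mathring{a})(R(0,t,r))+4\pi\int_0^t(rj)(s,R(s,t,r))\,ds
\]
from \eqref{1overa}, which arises from Lemma~\ref{lemmabetaeq}(b) applied to \eqref{ee-red4-lambda}; here $R(s,t,r)$ solves $\dot r=-\beta(s,r)$ with $R(t,t,r)=r$. It suffices to prove that $\partial_r^2(1/a)$ is continuous in $(t,r)$: this yields continuity of $\partial_r^2 a$, and then $\partial_t\partial_r a$ can be read off from the $r$-derivative of \eqref{ee-red4} (whose right-hand side is then continuous), while $\partial_t^2 a$ follows by differentiating \eqref{ee-red4} in $t$ and invoking \eqref{ee-red3} for the continuity of $\partial_t\beta$ together with Lemma~\ref{dtrhodtj} for that of $\partial_t j$.

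First I would differentiate the integral representation twice in $r$. The initial-data contribution produces $\mathring{a}''(R(0,t,r))$ multiplied by factors $(\partial_r R)^2$ and $\partial_r^2 R$; these are continuous because $\mathring{a}''\in C^0$ by hypothesis, and because $\partial_r^2 R$ is continuous, the latter by differentiating \eqref{drRrep} once more in $r$ and using the continuity of $\partial_r^2\beta$ from Definition~\ref{regular}(e). The integrand contribution, in contrast, generates
\[
4\pi\int_0^t\bigl[\partial_r^2(rj)(s,R)(\partial_r R)^2+\partial_r(rj)(s,R)\,\partial_r^2 R\bigr]\,ds,
\]
and since $\partial_r^2(rj)=2\partial_r j+r\partial_r^2 j$, the factor $r\partial_r^2 j$ is not directly controlled by the mere $C^1$ regularity of $j$.

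The main obstacle is this $\partial_r^2 j$ term, and I would dispose of it by the same Vlasov-equation maneuver already employed in Section~\ref{convader} to establish the convergence of $\partial_r a_n$, applied one derivative higher. Using the identity $\partial_r j=\int v\cdot\partial_x f\,dv-2j/r$ (cf.~\cite[Lemma~6.6, Lemma~6.8]{GB}), the offending integrand can be rewritten via the Vlasov equation for $f$ as a total $s$-derivative of $\rho$ (or $j$) along the Cartesian characteristic $X(s,t,x)$ introduced through \eqref{X_nbeta}, plus contributions already controlled in Sections~\ref{loop1} and~\ref{magic}. An integration by parts in $s$ then replaces the $\partial_r^2 j$ contribution by boundary terms and integrands involving only first-order spatial derivatives of the source terms and second-order $r$-derivatives of $\beta$ (available by Definition~\ref{regular}(e)), leaving at worst a linear occurrence of $\partial_r^2 a$ which is closed by Gronwall. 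None of the individual ingredients is genuinely new; the argument is essentially a careful bookkeeping exercise on top of techniques already developed in the proof of Theorem~\ref{locex}, the extra hypothesis $\mathring{a}\in C^2$ being exactly what is needed so that the initial-data contribution to $\partial_r^2(1/a)$ is continuous.
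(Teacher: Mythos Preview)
Your overall strategy---work with the representation \eqref{1overa}, invoke the Vlasov equation to trade the bad spatial derivative of $j$ for a total $s$-derivative of $\rho$ along the $\beta$-characteristic, then integrate by parts in $s$---is exactly what the paper does. The difference lies in the order of operations, and this is where your write-up has a genuine gap. You differentiate \eqref{1overa} twice in $r$ first, which via the chain rule produces $\partial_r^2(rj)(s,R)$ and hence $r\,\partial_r^2 j$; but $j$ is only known to be $C^1$, so this object is not defined a priori, and the phrase ``Vlasov maneuver one derivative higher'' does not repair it: repeating the trick at the level of $\partial_r^2 j$ would require $\partial_x^2 f$, which has not been established. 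Nor does integration by parts in $s$ act on $\partial_r^2 j$; it acts on the $\frac{d}{ds}\rho$ term that the Vlasov substitution produces from the \emph{first} $r$-derivative.

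The paper avoids this by differentiating once: it writes $\partial_r(1/a)$ via \eqref{1overa}, substitutes $\partial_r(rj)=r\int v\cdot\partial_x f\,dv - j$, uses the Vlasov equation to turn $\int v\cdot\partial_x f\,dv$ evaluated along $R(s,t,r)$ into $-a\,\frac{d}{ds}\rho(s,R)$ plus harmless $v$-moments, integrates that total $s$-derivative by parts, and then simply observes that every term in the resulting expression for $\partial_r(1/a)$ is manifestly $C^1$ in $(t,r)$ (using $\frac{d}{ds}a(s,R)=-4\pi r a^2 j$ from \eqref{ee-red4} and the $C^1$ regularity of $\rho$, $R$, $\partial_r R$). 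That already gives $\partial_r a\in C^1$, and $\partial_t^2 a$ then follows from \eqref{ee-red4}. No Gronwall loop on $\partial_r^2 a$ is needed; your mention of one is another sign that the argument has not been carried to the point where the terms are all identified. Your route can be salvaged by reversing the order---rewrite $\partial_r(1/a)$ first, then differentiate---but that is precisely the paper's argument.
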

\begin{proof}
  The argument basically follows that of Section~\ref{convader},
  but now on the level
  of a solution. We represent $\partial_r a$ by differentiating
  \eqref{1overa} and must show that
  \[
  \int_0^t \partial_r\left[(r j)(s,R(s,t,r))\right]\, ds
  \]
  is $C^1$. Using the relation
  \[
  \partial_r (r j) = r \int v\cdot \partial_x f dv - j,
  \]
  the Vlasov equation and integration by parts on the term
  containing $\partial_v f$
  the crucial term concerning regularity turns out to be
  \begin{align*}
    &
    \int_0^t\left(r a \int \p0
    \left(\partial_t f - \beta \frac{x}{r}\cdot\partial_x f\right)\,
    dv\right)(s,R(s,t,r)) \partial_r R(s,t,r)\, ds\\
    &\qquad =
    \int_0^t R(s,t,r) a(s, R(s,t,r)) \frac{d}{ds} \rho(s,R(s,t,r))
    \,\partial_r R(s,t,r)\, ds.
  \end{align*}
  We integrate by parts and have to check that the resulting terms are $C^1$.
  First we note that
  \[
  \frac{d}{ds} a (s,R(s,t,r))
  =\left(\partial_t a -\beta \partial_r a\right)(s,R(s,t,r))
  = -4\pi \left(r a^2 j\right)(s,R(s,t,r))
  \]
  is indeed $C^1$, and so are all the other terms which arise,
  where we recall the
  regularity discussion for $\beta_n$ in Section~\ref{beta_nsection}.
  Thus $\partial_r a$ is $C^1$, and the remaining derivative
  $\partial_{tt} a$ exists and is continuous due to \eqref{ee-red4}.
\end{proof}
\section{Homogeneous solutions} \label{sec_hom}
\setcounter{equation}{0}
\subsection{The dust case} \label{subsec_homdust}
In their classical paper \cite{OS} on continued gravitational collapse
and the formation of a black hole {\sc Oppenheimer} and {\sc Snyder}
used dust as matter model. Dust can be viewed as an ideal, compressible
fluid with pressure identically zero, or as a singular special case
of collisionless Vlasov matter which is $\delta$-distributed with respect to
the momenta. The latter viewpoint is particularly suitable in the present
context, since for any characteristic of the Vlasov equation
\eqref{Vl-cart}, $v(t)=0$ for all $t$ if $v(0)=0$. Thus
\[
f(t,x,v)=\rho(t,x)\, \delta(v)
\]
is a consistent ansatz for a solution of the Einstein-Vlasov system
in Painlev\'{e}-Gullstrand coordinates. It implies that
\[
p = p_T = j = 0
\]
and reduces the system to the equations
\[
\partial_t \rho - \beta\, \frac{x}{r} \cdot \partial_x \rho -
\left(\partial_r \beta + \frac{2}{r} \beta\right) \rho = 0,
\]
\[
1- A = 1 - \frac{1}{a^2} + \beta^2
= \frac{8\pi}{r} \int_0^r s^2 \rho(t,s)\, ds,
\]
\[
\partial_t a - \beta\, \partial_r a = 0.
\]
We restrict ourselves to the so-called marginally bound
case $a=1$, which is consistent with the
boundary conditions \eqref{bc} and rewrite the equation for $\rho$
in radial coordinates to obtain the equations
\begin{align}\label{rhoeq_dust}
\partial_t \rho - \beta\, \partial_r \rho -
\left(\partial_r \beta + \frac{2}{r} \beta\right) \rho = 0,
\end{align}
\begin{align}\label{betamrel_dust}
\beta^2(t,r)
= \frac{2 m(t,r)}{r} = \frac{8\pi}{r} \int_0^r s^2 \rho(t,s)\, ds,
\end{align}
which in turn are equivalent to a single equation for the mass
function $m$:
\begin{align}\label{meq_dust}
\partial_t m - \sqrt{\frac{2 m}{r}}\, \partial_r m = 0,
\end{align}
where we choose the positive root in \eqref{betamrel_dust}.
If $\mathring{m}$ denotes the initial data for the mass function, i.e.,
\[
m(0,r) = \mathring{m}(r) = 4\pi \int_0^r s^2 \mathring{\rho}(s)\, ds,
\]
then by standard maneuvers for first order PDEs and up to regularity
issues \eqref{meq_dust} is equivalent to the implicit equation
\begin{align}\label{impl_meq_dust}
  m(t,r)
  = \mathring{m}
  \left(\left(r^{3/2} + \frac{3}{2} \sqrt{2 m(t,r)} t\right)^{2/3}\right).
\end{align}
For the special case when $\mathring{\rho}$ is constant,
\[
\mathring{\rho}(r) = \alpha,\ \mbox{i.e.,}\
\mathring{m}(r) = \frac{4\pi}{3} \alpha r^3
\]
with some positive amplitude $\alpha >0$ we find that
\begin{align}\label{m_dust}
  m(t,r) = \frac{4\pi}{3} \alpha \frac{r^3}{(1-\sqrt{6\pi \alpha} t)^{2}},\
  r\geq 0,\ 0\leq t < \frac{1}{\sqrt{6\pi \alpha}}.
\end{align}
This implies that
\begin{align}\label{beta_dust}
  \beta(t,r) = \sqrt{\frac{8\pi}{3} \alpha}
  \frac{r}{1-\sqrt{6\pi \alpha} t},\
  r\geq 0,\ 0\leq t < \frac{1}{\sqrt{6\pi \alpha}}.
\end{align}
We introduce the abbreviation
\begin{align}\label{sigma_dust}
  \sigma(t) \coloneqq
  \sqrt{\frac{8\pi}{3} \alpha} \frac{1}{1-\sqrt{6\pi \alpha} t}
\end{align}
so that $\beta(t,r)=\sigma(t) r$. Moreover,
\begin{align}\label{gamma_dust}
  \exp\left(-\int_0^t \sigma (\tau)\, d\tau\right)
  = (1-\sqrt{6\pi \alpha} t)^{2/3} =: \gamma(t),\
  0\leq t < \frac{1}{\sqrt{6\pi \alpha}}.
\end{align}
Hence the trajectories of the dust particles, which are
the characteristics of the equations \eqref{rhoeq_dust} and \eqref{meq_dust},
are given as
\begin{align}\label{dustpart}
  [0,1/\sqrt{6\pi \alpha}[\, \ni s \mapsto \gamma(s) r,\
  r\geq 0.
\end{align}
In passing we notice that the scale function $\gamma$
is the unique solution of the initial value problem
\begin{align}\label{gammaeq_dust}
  \ddot \gamma = - \frac{4\pi}{3} \alpha \frac{1}{\gamma^2},\
  \gamma(0)=1,\ \dot \gamma(0) = - \sqrt{\frac{8\pi}{3} \alpha},
\end{align}
which will become important when we relate the homogeneous dust
solution to the corresponding solution with Vlasov matter.

The solution obtained to far is not asymptotically flat and does not
represent an isolated, collapsing matter distribution, but
we can cut it along any of the particle trajectories in \eqref{dustpart}
and extend it by vacuum. We choose to cut along the trajectory
starting at $r=1$ so that
\[
\mathring{\rho}(r) = \alpha 1_{[0,1]}(r),\ \mbox{i.e.,}\
\mathring{m}(r) = \frac{4\pi}{3} \alpha
\left\{ \begin{array}{ccl} r^3 &,& 0\leq r \leq 1,\\
  1&,& r>1,
  \end{array}\right.
\]
and
\begin{align}\label{OS_massf}
m(t,r) = \frac{4\pi}{3} \alpha
\left\{ \begin{array}{ccl}
  \displaystyle
  \frac{r^3}{(1-\sqrt{6\pi \alpha} t)^2} &,& 0\leq r\leq \gamma(t),\\
  1 &,& r> \gamma(t),
  \end{array}\right.
\ 0\leq t < \frac{1}{\sqrt{6\pi \alpha}},
\end{align}
\begin{align}\label{OSdustbeta}
\beta(t,r) = \sqrt{\frac{8\pi}{3} \alpha}
\left\{ \begin{array}{ccl}
  \displaystyle
  \frac{r}{1-\sqrt{6\pi \alpha} t}&,& 0\leq r\leq \gamma(t),\\
  \displaystyle\frac{1}{\sqrt{r}}&,& r> \gamma(t),
\end{array}\right.
\ 0\leq t < \frac{1}{\sqrt{6\pi \alpha}},
\end{align}
and in \eqref{dustpart} only dust particles with
$0\leq r \leq 1$ are present. In passing we note that
this cutting procedure works because we cut along a characteristic
curve of the relevant equations which is not crossed by other
characteristics; it is straight forward to check that the
truncated solution satisfies the Einstein-dust system in a
distributional sense. Alternatively, one can check that the
mass function \eqref{OS_massf} satisfies the implicit
relation \eqref{impl_meq_dust} to which we have reduced
the Einstein-dust system above.

From the equation for radial null geodesics we see that a surface
given by some pair $(t,r)$ is trapped if $\beta(t,r)>1$. This holds
if
\[
0<r<\frac{8\pi}{3} \alpha\ \mbox{and}\
t > \frac{1}{\sqrt{6\pi\alpha}} - \frac{2}{3} r. 
\]
The earliest (marginally) trapped surface forms at
\[
r = \frac{8\pi}{3}\alpha,\ 
t = \frac{1}{\sqrt{6\pi\alpha}}
\left(1-\left(\frac{8\pi}{3}\alpha\right)^{3/2}\right),
\]
and we shall choose $\alpha<\frac{3}{8\pi}$ so that this happens to the future
of the initial hypersurface.
\subsection{The Vlasov case} \label{subsec_homvlasov}
In order to show that Oppenheimer-Snyder type collapse also happens with
Vlasov matter we first need to establish solutions of the Einstein-Vlasov
system which are spatially homogeneous and resemble the homogeneous
dust solutions obtained in the previous section.
We demand that
\[
j=0,\ a=1,\ \beta(t,r) = \sigma(t) r.
\]
From the characteristic system of the Vlasov equation
\eqref{Vl-cart} we get that
\[
\dot v = \sigma(s) v,
\]
and hence
\begin{align*}
  &\frac{d}{ds}
  \left(\exp\left(-2\int_0^s \sigma(\tau)\, d\tau\right) |v|^2\right)\\
  &\qquad \qquad =
  \exp\left(-2\int_0^s \sigma(\tau)\, d\tau\right)
  \left(-2 \sigma(s) |v|^2 + 2 v\cdot \dot v \right) =0
\end{align*}
along characteristics. As before, let
\[
\gamma(t) = \exp\left(-\int_0^t \sigma(\tau)\, d\tau\right), 
\]
and define a spatially homogeneous particle distribution
\[
h(t,x,v) = H(\gamma^2(t) |v|^2),
\]
where $H\in C^1([0,\infty[)$ is non-negative and $H(\eta)=0$ for $\eta$
large. Then the Vlasov equation holds, and we have to determine
$\sigma$ respectively $\gamma$ such that
the field equations hold as well.
Now
\[
\rho(t) = 4\pi \gamma(t)^{-4} \int_0^\infty \sqrt{\gamma^2(t) + u^2}
H(u^2)\, u^2 du
\]
and
\[
p(t) = \frac{4\pi}{3} \gamma(t)^{-4} \int_0^\infty
\frac{u^2}{\sqrt{\gamma^2(t) + u^2}}
H(u^2)\, u^2 du .
\]
Since
\[
A(t,r) = 1-r^2 \sigma^2(t)
\]
\eqref{ee1} becomes
\begin{align} \label{ee1h}
\dot \sigma = 4\pi (\rho + p),
\end{align}
\eqref{ee2} becomes
\begin{align}\label{ee2h}
\sigma^2 = \frac{8\pi}{3} \rho,
\end{align}
and \eqref{ee3} becomes
\begin{align}\label{ee3h}
\dot\sigma = \frac{3}{2} \sigma^2 + 4\pi p;
\end{align}
\eqref{ee4} is satisfied identically.
If \eqref{ee2h} holds, then \eqref{ee1h} and \eqref{ee3h} become equal.
Hence we proceed as follows. We define $\sigma$ as the solution of
\eqref{ee1h} with initial condition
\[
\sigma(0) = \sqrt{\frac{8 \pi}{3} \rho(0)};
\]
note that $\gamma(0)=1$, so $\rho(0)$ is completely determined by $H$.
We need to check that now \eqref{ee2h} holds for all times for which
$\sigma$ exists. Indeed,
\begin{align*}
\frac{d}{dt}\Bigl(\sigma^2 -& \frac{8\pi}{3} \rho\Bigr)
  =
  2 \sigma \dot \sigma
  - \frac{8\pi}{3} \dot \gamma
  \left[-\frac{4\rho}{\gamma} + \frac{4\pi}{\gamma^4}
  \int_0^\infty\!\!\frac{\gamma}{\sqrt{\gamma^2 + u^2}} H(u^2)\, u^2 du
  \right]\\
  =&
  2 \sigma \dot \sigma
  - \frac{8\pi}{3} \dot \gamma
  \frac{4\pi}{\gamma^4}\int_0^\infty\!\!
  \left( \frac{\gamma}{\sqrt{\gamma^2 + u^2}}
  - 4\frac{\sqrt{\gamma^2 + u^2}}{\gamma}\right) H(u^2)\, u^2 du\\
  =&
  8 \pi \sigma (\rho + p)\\
  &
  {}- 8 \pi \sigma \frac{4\pi}{3}
  \frac{1}{\gamma^4}\int_0^\infty\!\!
  \left(3 \sqrt{\gamma^2 + u^2} +
  \frac{u^2}{\sqrt{\gamma^2 + u^2}}\right) H(u^2)\, u^2 du\\
  =&
  0.
\end{align*}
We can therefore use \eqref{ee1h} as our master equation which determines
$\sigma$, but we turn it into an equation for $\gamma$ as follows.
Since $\dot \gamma = -\sigma \gamma$ we find that
\[
\ddot \gamma = -\dot \sigma \gamma - \sigma \dot \gamma
= -4\pi (\rho + p) \gamma + \frac{(\dot \gamma)^2}{\gamma};
\]
note that $\rho$ and $p$ actually depend on $\gamma$. We can in
addition eliminate $\dot \gamma$ using the relation
$\dot \gamma = -\sigma \gamma$ and \eqref{ee2h}. This results in
\begin{align} \label{gammaeq_vl}
  \ddot \gamma = -\frac{4\pi}{3}
  \frac{\phi(\gamma) + 3 \psi(\gamma)}{\gamma^3},
\end{align}
where
\begin{align*}
\phi(\xi)
&=
4\pi \int_0^\infty \sqrt{\xi^2 + u^2} H(u^2)\, u^2 du,\\
\psi(\xi)
&=
\frac{4\pi}{3} \int_0^\infty \frac{u^2}{\sqrt{\xi^2 + u^2}} H(u^2)\, u^2 du;
\end{align*}
these functions belong to $C^1([0,\infty[)$.
Eqn.~\eqref{gammaeq_vl} is supplemented with the initial conditions
\[
\gamma(0)=1,\ \dot \gamma(0) = - \sqrt{\frac{8\pi}{3} \rho(0)}.
\]
\subsection{Relating homogeneous Vlasov solutions to homogeneous dust solutions}
\label{subsec_homvleps}
In what follows we need to relate homogeneous Vlasov solutions
to certain homogeneous dust solutions. To this end, let $\gamma_D$ be
the solution to
\begin{align}\label{gammaD_eq}
  \ddot \gamma = - \frac{2}{9}  \frac{1}{\gamma^2},\
  \gamma(0)=1,\ \dot \gamma(0) = - \frac{2}{3},
\end{align}
i.e., we choose $\alpha=1/(6\pi)$ in \eqref{gammaeq_dust}, and
\begin{align} \label{gammaD}
  \gamma_D(t) = (1-t)^{2/3},\ 0\leq t < 1.
\end{align}
Quantities related to the corresponding homogeneous dust solution
will be denoted by $\beta_D$, $m_D$ etc.
In addition, we define for $\epsilon>0$ small
\[
\alpha_\epsilon \coloneqq \frac{1}{6\pi} (1-\epsilon^{1/5}).
\]
Then
\begin{align} \label{gammaDe}
  \gamma_{D,\epsilon}(t) = (1-\sqrt{1-\epsilon^{1/5}} t)^{2/3},\ 0\leq t <
  \frac{1}{\sqrt{1-\epsilon^{1/5}}}
\end{align}
is the solution to \eqref{gammaeq_dust} with $\alpha_\epsilon$ substituted
for $\alpha$, and notations like $\beta_{D,\epsilon}$, $m_{D,\epsilon}$ etc
will again refer to the corresponding homogeneous dust solution. 

For the Vlasov case we redefine the homogeneous particle distribution as
\begin{align}\label{hepsdef}
  h(t,x,v) = h_{\epsilon} (t,x,v)
  = \alpha_\epsilon \epsilon^{-3/2} H(\epsilon^{-1}\gamma_{V,\epsilon}(t)^2 |v|^2),
\end{align}
for some fixed function
\begin{align}\label{Hdef}
  H\in C^1([0,\infty[),\ H\geq 0,\ \mbox{non-increasing},\ 
  \!\int\!\! H(|v|^2) dv = 1,\ H=0\ \mbox{on}\ [1,\infty[.
\end{align}
This leads to a
homogeneous solution of the Einstein-Vlasov system, provided
$\gamma_{V,\epsilon}$ is the unique solution to
the initial value problem
\begin{align} \label{gammaeq_vl_eps}
  \ddot \gamma = -\frac{4\pi}{3}
  \frac{\phi_{\epsilon}(\gamma) +
    3 \psi_{\epsilon}(\gamma)}{\gamma^3},\
  \gamma(0)=1,\ \dot \gamma(0) = - \sqrt{\frac{8\pi}{3} \mathring\rho_{\epsilon}},
\end{align}
where
\begin{align}
\phi_{\epsilon}(\xi)
&=
4\pi \alpha_\epsilon \int_0^\infty \sqrt{\xi^2 + \epsilon u^2} H(u^2)\, u^2 du,
\label{phi_eps}\\
\psi_{\epsilon}(\xi)
&=
\frac{4\pi}{3} \alpha_\epsilon \epsilon
\int_0^\infty \frac{u^2}{\sqrt{\xi^2 + \epsilon u^2}} H(u^2)\, u^2 du,
\label{psi_eps}
\end{align}
and
\[
\mathring\rho_{\epsilon} =
4\pi \alpha_\epsilon \int_0^\infty \sqrt{1 + \epsilon u^2} H(u^2)\, u^2 du.
\]
For future reference we note that the mass-energy density
induced by $h$ is given as
\begin{align}\label{rho_h}
  \rho_h(t)=\frac{1}{\gamma^4_{V,\epsilon}(t)}\phi_\epsilon(\gamma_{V,\epsilon}(t)).
\end{align}
The following comparison result will be useful.
\begin{lemma}\label{gamma_comp}
  Let $0<T<1$. Then for $\epsilon>0$ sufficiently small,
  $\gamma_{V,\epsilon}$ exists on $[0,T]$, and for $t\in ]0,T]$,
  \begin{itemize}
    \item[(a)]
      $\gamma_{V,\epsilon}(t) > \gamma_D(t)$,
    \item[(b)]
      $\sigma_{V,\epsilon}(t) < \sigma_D(t)$,
    \item[(c)]
      $|\gamma_{V,\epsilon}(t) - \gamma_{D,\epsilon}(t)| +
      |\sigma_{V,\epsilon}(t) - \sigma_{D,\epsilon}(t)|\leq C \epsilon$
      where $C>0$ is independent of $\epsilon$ or $t$.
  \end{itemize}
\end{lemma}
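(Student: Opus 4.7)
The strategy is to replace the second-order equation \eqref{gammaeq_vl_eps} by the first-order conservation law extracted from the Hamiltonian constraint \eqref{ee2h}. Since $\dot\gamma_{V,\epsilon}=-\sigma_{V,\epsilon}\gamma_{V,\epsilon}$ and $\sigma_{V,\epsilon}^2=\frac{8\pi}{3}\rho_h$, one obtains
\[
\dot\gamma_{V,\epsilon}(t)^2=\frac{8\pi}{3\gamma_{V,\epsilon}(t)^2}\phi_\epsilon(\gamma_{V,\epsilon}(t)),
\]
while the analogous identities in the two dust cases read $\dot\gamma_D^2=4/(9\gamma_D)$ and $\dot\gamma_{D,\epsilon}^2=\frac{8\pi\alpha_\epsilon}{3\gamma_{D,\epsilon}}$. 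Three ingredients govern the comparisons: the normalisation $4\pi\int_0^\infty H(u^2)u^2\,du=1$ forced by \eqref{Hdef}; the elementary bound $\sqrt{\xi^2+\epsilon u^2}=\xi+O(\epsilon/\xi)$ on $\operatorname{supp}H\subset[0,1]$ for $\xi$ bounded away from $0$; and the fact that $6\pi\alpha_\epsilon=1-\epsilon^{1/5}$ deviates from $1$ by a quantity much larger than $\epsilon$.

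Existence on $[0,T]$ and claim (a) would be proved together by a bootstrap. Standard ODE theory yields a unique $C^2$ solution to \eqref{gammaeq_vl_eps} on a maximal interval $[0,T^*_\epsilon)$ with $\gamma_{V,\epsilon}>0$, and \eqref{ee2h} forces $\gamma_{V,\epsilon}$ to be strictly decreasing. Fix $\xi_0\coloneqq\gamma_D(T)/2$ and let $T^{**}_\epsilon\in(0,T^*_\epsilon]$ denote the supremum of $t$ such that $\gamma_{V,\epsilon}(s)\ge\xi_0$ for $s\in[0,t]$. On $[0,T^{**}_\epsilon)\cap[0,T]$ the bound $\phi_\epsilon(\xi)\le\alpha_\epsilon\xi(1+C\epsilon)$ applies and delivers
\[
\dot\gamma_{V,\epsilon}(t)^2\le(1-\epsilon^{1/5})(1+C\epsilon)\cdot\frac{4}{9\gamma_{V,\epsilon}(t)},
\]
which for $\epsilon$ small is strictly less than $\dot\gamma_D(t')^2$ at any $t'$ with $\gamma_D(t')=\gamma_{V,\epsilon}(t)$. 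A parallel computation at $t=0$ gives $|\dot\gamma_{V,\epsilon}(0)|<|\dot\gamma_D(0)|=2/3$. Setting $g\coloneqq\gamma_{V,\epsilon}-\gamma_D$, one has $g(0)=0$ and $\dot g(0)>0$; at a hypothetical first zero $t_*\in(0,T]$ of $g$ one would have $\gamma_{V,\epsilon}(t_*)=\gamma_D(t_*)$ together with $|\dot\gamma_{V,\epsilon}(t_*)|\ge|\dot\gamma_D(t_*)|$, contradicting the strict inequality above. Hence (a) holds on $[0,T^{**}_\epsilon)\cap[0,T]$, so $\gamma_{V,\epsilon}>\gamma_D\ge 2\xi_0$ there, which forces $T^{**}_\epsilon\ge T$ and in particular $T^*_\epsilon>T$.

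Part (b) is immediate: rewriting $\sigma_{V,\epsilon}^2=\frac{8\pi}{3\gamma_{V,\epsilon}^4}\phi_\epsilon(\gamma_{V,\epsilon})$ and applying the same bound on $\phi_\epsilon$ together with part (a) yields
\[
\sigma_{V,\epsilon}(t)^2\le(1-\epsilon^{1/5})(1+C\epsilon)\cdot\frac{4}{9\gamma_{V,\epsilon}(t)^3}<\frac{4}{9\gamma_D(t)^3}=\sigma_D(t)^2
\]
once $\epsilon$ is small enough.

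For (c) I would cast \eqref{gammaeq_vl_eps} and the dust equation with $\alpha$ replaced by $\alpha_\epsilon$ as first-order systems for $y=(\gamma,\dot\gamma)$ on the chart $\{\gamma\ge\xi_0\}$. The expansions $\sqrt{\xi^2+\epsilon u^2}=\xi+O(\epsilon/\xi)$ and the explicit factor $\epsilon$ in \eqref{psi_eps} show that the two vector fields differ by at most $C\epsilon$ uniformly on that chart, while $\mathring\rho_\epsilon-\alpha_\epsilon=O(\epsilon)$ gives $O(\epsilon)$ agreement of the initial data. Gronwall on $[0,T]$ then delivers $|\gamma_{V,\epsilon}-\gamma_{D,\epsilon}|+|\dot\gamma_{V,\epsilon}-\dot\gamma_{D,\epsilon}|\le C\epsilon$, and the estimate for $|\sigma_{V,\epsilon}-\sigma_{D,\epsilon}|$ follows since $\sigma=-\dot\gamma/\gamma$ and both scale functions are bounded below. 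The main obstacle is the circular interdependence between existence of $\gamma_{V,\epsilon}$ on $[0,T]$, the lower bound $\gamma_{V,\epsilon}\ge\xi_0$, and (a): neither the bound on $\phi_\epsilon$ nor the Lipschitz/Gronwall argument in (c) is available without the lower bound, but the lower bound itself requires (a). The bootstrap structure above, in which (a) is proved on the interval where the lower bound holds a priori, is designed to resolve exactly this issue.
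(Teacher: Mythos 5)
Your proposal is correct, but for parts (a) and (b) it takes a genuinely different route from the paper. The paper works directly with the second-order equation: for (a) it shows that on the region where $\gamma_{V,\epsilon}>\gamma_D\geq\gamma_D(T)$ one has $\ddot\gamma_{V,\epsilon}>-\frac{2}{9}\gamma_{V,\epsilon}^{-2}>\ddot\gamma_D$, and combines this with $\dot\gamma_{V,\epsilon}(0)>\dot\gamma_D(0)$ in a continuity argument; for (b) it proves positivity of the Wronskian-type quantity $\dot\gamma_{V,\epsilon}\gamma_D-\dot\gamma_D\gamma_{V,\epsilon}$ by showing its time derivative is nonnegative. You instead pass to the conserved first-order (Friedmann-type) relation $\dot\gamma_{V,\epsilon}^2=\tfrac{8\pi}{3}\gamma_{V,\epsilon}^{-2}\phi_\epsilon(\gamma_{V,\epsilon})$, compare first derivatives at equal values of $\gamma$ to rule out a first zero of $\gamma_{V,\epsilon}-\gamma_D$, and then get (b) in one line from $\sigma^2=\tfrac{8\pi}{3}\rho_h$ together with (a); your bootstrap handling of the circular dependence between existence, the lower bound $\gamma_{V,\epsilon}\geq\xi_0$, and (a) mirrors the paper's use of the maximal interval on which $\gamma_{V,\epsilon}>\gamma_D$. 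Your approach buys a cleaner (b), avoids the $\psi_\epsilon$ term in (a), and is closer to the usual cosmological intuition; its only extra obligation is to justify that the solution of the second-order problem \eqref{gammaeq_vl_eps} actually satisfies the first-order constraint. That is true and easy --- it follows from the constraint-propagation computation of Section~\ref{subsec_homvlasov} (equivalently from the identity $\xi\phi_\epsilon'(\xi)=\phi_\epsilon(\xi)-3\psi_\epsilon(\xi)$, which makes $\dot\gamma^2-\tfrac{8\pi}{3}\gamma^{-2}\phi_\epsilon(\gamma)$ a conserved quantity for \eqref{gammaeq_vl_eps}) --- but you should state it rather than assume it. Part (c) is essentially the paper's argument: an $O(\epsilon)$ comparison of the vector fields on $\{\gamma\geq\gamma_D(T)\}$ and of the initial data, followed by Gronwall, with the $\sigma$-estimate deduced from $\sigma=-\dot\gamma/\gamma$ and the lower bound on the scale factors.
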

\begin{proof}
  By definition, $\gamma_D(0) = 1 = \gamma_\epsilon(0)$, and
  \begin{align}\label{gammadot0}
    \dot\gamma_{V,\epsilon}(0)
    &=
    - \frac{2}{3}\sqrt{(1 - \epsilon^{1/5})
    4\pi \int_0^\infty \sqrt{1 + \epsilon u^2} H(u^2)\, u^2 du}\notag\\
    &>
    - \frac{2}{3} \sqrt{(1 - \epsilon^{1/5}) (1 + \epsilon)}
    >
    - \frac{2}{3} = \dot\gamma_D(0),
  \end{align}
  provided  $\epsilon>0$ is sufficiently small.
  Hence there exists $t^\ast \in ]0,1]$ such that
  $\gamma_{V,\epsilon} > \gamma_D$ on $]0,t^\ast[$, in particular,
  $\gamma_{V,\epsilon}$ exists on this interval which we choose maximal.
  On the interval $]0,t^\ast[ \cap [0,T]$,
  \[
  \gamma_{V,\epsilon} (t) > \gamma_D(t) \geq \gamma_D(T),
  \]
  and for arguments $\xi \geq \gamma_D(T)$,
  \begin{align*}    
    &
    \frac{\phi_{\epsilon}(\xi) +
      3 \psi_{\epsilon}(\xi)}{\xi^3}\\
    &\quad =
    \frac{4\pi \alpha_\epsilon}{\xi^3}
    \left(\int_0^\infty \sqrt{\xi^2 + \epsilon u^2} H(u^2)\, u^2 du
    + \epsilon
    \int_0^\infty\frac{u^2}{\sqrt{\xi^2 + \epsilon u^2}} H(u^2)\, u^2 du
    \right) \\
    &\quad =
    \frac{4\pi \alpha_\epsilon}{\xi^2}
    \int_0^1\left(\sqrt{1 + \epsilon u^2/\xi^2}+\epsilon
    \frac{u^2}{\xi^2\sqrt{1 + \epsilon u^2/\xi^2}}\right) H(u^2)\, u^2 du\\
    &\quad <
    \frac{\alpha_\epsilon}{\xi^2}
    \left(1 + \frac{2 \epsilon}{\xi^2}\right)
    \leq
    \frac{\alpha_\epsilon}{\xi^2}
    \left(1 + \frac{2 \epsilon}{\gamma_D ^2(T)}\right)
    =
    \frac{1}{6\pi} \frac{1}{\xi^2}
    (1 - \epsilon^{1/5})\left(1 + \frac{2 \epsilon}{\gamma_D ^2(T)}\right)\\
    &\quad <
    \frac{1}{6\pi} \frac{1}{\xi^2},
  \end{align*}
  provided $\epsilon>0$ is sufficiently small.
  Hence on $]0,t^\ast[ \cap ]0,T]$,
  \begin{align}\label{ddotgamma}
  \ddot\gamma_{V,\epsilon}(t)
  = -\frac{4\pi}{3}\frac{\phi_{\epsilon}(\gamma_{V,\epsilon}(t)) +
    3 \psi_{\epsilon}(\gamma_{V,\epsilon}(t))}{\gamma^3_{V,\epsilon}(t)}
  \geq -\frac{2}{9} \frac{1}{\gamma^2_{V,\epsilon}(t)}
  > -\frac{2}{9} \frac{1}{\gamma^2_D(t)} = \ddot\gamma_D(t),
  \end{align}
  which implies that $t^\ast > T$ and proves part (a).

  As to part (b) we recall that
  \[
  \sigma_{V,\epsilon} = -\frac{\dot\gamma_{V,\epsilon}}{\gamma_{V,\epsilon}},\quad
  \sigma_D = -\frac{\dot\gamma_D}{\gamma_D}.
  \]
  Hence
  \begin{align}\label{sigmaD_eps}
  \sigma_D - \sigma_{V,\epsilon} =
  \frac{\dot\gamma_{V,\epsilon}\gamma_D - \dot\gamma_D \gamma_{V,\epsilon}}
       {\gamma_D\gamma_{V,\epsilon}}.
  \end{align}
  This quantity, in particular its numerator, is positive
  at $t=0$, since $\gamma_D(0)=1=\gamma_{V,\epsilon}(0)$ and
  $\dot\gamma_{V,\epsilon}(0)>\dot\gamma_D(0)$ by \eqref{gammadot0}.
  But by \eqref{ddotgamma} and since we already know that
  $\gamma_{V,\epsilon}>\gamma_D$ on $]0,T]$,
  \begin{align*}
    \frac{d}{dt}
    \left(\dot\gamma_{V,\epsilon}\gamma_D - \dot\gamma_D \gamma_{V,\epsilon}\right)
    & =
    \ddot\gamma_{V,\epsilon}\gamma_D - \ddot\gamma_D \gamma_{V,\epsilon}\\
    & >
    -\frac{2}{9} \frac{1}{\gamma^2_{V,\epsilon}} \gamma_D
    + \frac{2}{9} \frac{1}{\gamma^2_D} \gamma_{V,\epsilon}
    \geq 0
  \end{align*}
  which implies that the quantity in \eqref{sigmaD_eps} remains positive
  on $]0,T]$.

  Part (c) is simply a result on continuous dependence on parameters
  and initial data, but since we need the order in $\epsilon$ we carry
  out the argument. First we note that for $t\in [0,T]$,
  \[
  0< \gamma_D(T)\leq \gamma_{V,\epsilon}(t), \gamma_{D,\epsilon}(t)\leq 1 .
  \]
  For arguments $\xi \in [\gamma_D(T),1]$ it is straight forward to
  see that
  \[
  -\frac{4\pi}{3}\alpha_\epsilon \frac{1}{\xi^2} - C\epsilon
  \leq -\frac{4\pi}{3}\frac{\phi_{\epsilon}(\xi) + 3 \psi_{\epsilon}(\xi)}{\xi^3}
  \leq -\frac{4\pi}{3}\alpha_\epsilon \frac{1}{\xi^2},
  \]
  where $\epsilon>0$ is small and $C>0$ is independent of $\epsilon$ or $\xi$.
  In addition,
  \begin{align*}
    \dot\gamma_{D,\epsilon}(0)
    &=
    \sqrt{\frac{8\pi}{3}\alpha_\epsilon}
    \leq
    \sqrt{\frac{8\pi}{3}\alpha_\epsilon 4\pi\int_0^1\sqrt{1+\epsilon u^2} H(u^2)
      u^2 du} =  \dot\gamma_{V,\epsilon}(0)\\
    &\leq
    \sqrt{\frac{8\pi}{3}\alpha_\epsilon} + C \epsilon.
  \end{align*}
  Hence the differential equations for $\gamma_{D,\epsilon}$
  and $\gamma_{V,\epsilon}$ imply that
  \begin{align*}
    |\gamma_{V,\epsilon}(t) -\gamma_{D,\epsilon}(t)|
    &\leq
    \int_0^t\left(C \epsilon + \frac{4\pi}{3}\alpha_\epsilon
    \int_0^s \left|\frac{1}{\gamma^2_{V,\epsilon}(\tau)}-
    \frac{1}{\gamma^2_{D,\epsilon}(\tau)}\right|\,d\tau\right)\,ds\\
    &\leq
    C \epsilon + C \int_0^t |\gamma_{V,\epsilon}(s) -\gamma_{D,\epsilon}(s)|\,ds,
  \end{align*}
  and Gronwall's lemma shows that
  \[
  |\gamma_{V,\epsilon}(t) -\gamma_{D,\epsilon}(t)|
  + |\dot \gamma_{V,\epsilon}(t) -\dot\gamma_{D,\epsilon}(t)|\leq C \epsilon.
  \]
  Since $\sigma= - \dot\gamma/\gamma$ in both the Vlasov and the dust case
  the remaining estimate in part (c) follows.
\end{proof}
\section{Oppenheimer-Snyder type collapse with Vlasov matter}
\label{section_OScollapse}
\setcounter{equation}{0}
In what follows we consider solutions of the Einstein-Vlasov
system which are launched by initial data of the form
\begin{align} \label{fdata_ndust}
  \mathring{f} (x,v) = \epsilon^{-3/2} H(|v|^2/\epsilon)
  \rho_0 (x),
\end{align}
where $H$ is specified in \eqref{Hdef},
$\rho_0\in C^1_c (\R^3)$ is spherically symmetric,
non-negative, non-increasing as a function of $r=|x|$,
$\rho_0 (r)=0$ for $r\geq 1+\epsilon$, 
\[
\rho_0(r) = \alpha_\epsilon =\frac{1}{6\pi}(1-\epsilon^{1/5})
\ \mbox{for}\ 0\leq r \leq 1,
\]
and $\epsilon \in ]0,1]$ is a small parameter such that $\alpha_\epsilon>0$. 
In addition,
$\mathring{a}=1$, and $\mathring{\beta}$ is given by \eqref{beta0def}.
In passing we note that $\rho_0$ is not the mass-energy density
induced by $\mathring f$ which as before will be denoted by $\mathring \rho$:
By \eqref{Hdef},
\begin{align} \label{rho0s}
  \rho_0(r)
  & \leq
  \mathring\rho (r)
  =
  \rho_0(r)
  \epsilon^{-3/2} \int \sqrt{1+|v|^2} H(|v|^2/\epsilon)\,dv\notag\\
  & =
  \rho_0(r)
  \int \sqrt{1+\epsilon |v|^2} H(|v|^2)\,dv
  \leq
  (1+\epsilon)\rho_0 (r).
\end{align}
Notice also that dependence on $\epsilon$ will not always be marked
by the corresponding subscript, but only in cases where this is necessary
in order to avoid confusion.

We want to show that data as above lead to the formation of a black hole,
more specifically, we want to prove the following theorem
which is our main result and the precise version of
the ``intuitive'' Theorem~\ref{maini}, stated in the introduction.
\begin{theorem}\label{maine}
  For $\epsilon>0$ sufficiently small the solution to the Einstein-Vlasov system
  launched by the data specified above exists on the time interval
  $[0,\frac{8}{9}]$.
  There exist times $0<t_0 < t_1 < t_2 < \frac{8}{9}$ such that 
  there is no trapped surface for $t<t_0$ while the surfaces given by
  $(t,r)$ with $t\geq t_1$ and $r=\frac{1}{3}$ are trapped. For $t\geq t_2$
  all the matter is contained within the 
  radius $r=\frac{1}{3}$. Let $M=\int\mathring\rho$. Then for
  $t\geq t_2$  and $r>\frac{1}{3}$, $1-A(t,r) = \frac{2M}{r}$;
  a black hole with Schwarzschild radius
  $2 M = \frac{4}{9} + \mathrm{O}(\epsilon^{1/5})$ forms.
\end{theorem}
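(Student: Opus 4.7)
The strategy follows the blueprint laid out in the introduction: show that on a shrinking interior region the full Vlasov solution coincides with the spatially homogeneous Vlasov solution $h_\epsilon$ of Section~\ref{subsec_homvleps}, then transfer quantitative features of $h_\epsilon$ to the full solution via the dust-comparison estimates of Lemma~\ref{gamma_comp}. Local existence and a maximal interval $[0,T_\ast[$ come from Theorem~\ref{locex}.

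First, the data $(\mathring f,\mathring a,\mathring\beta)$ agree with $h_\epsilon(0,\cdot,\cdot)$ on $|x|\le 1$: $\mathring f(x,v)=\alpha_\epsilon\epsilon^{-3/2}H(|v|^2/\epsilon)$ there, $\mathring a\equiv 1$, and \eqref{beta0def} combined with $\mathring\rho|_{r\le 1}=\rho_{h_\epsilon}(0)$ (via \eqref{rho_h}, \eqref{rho0s}) yields $\mathring\beta(r)=\sigma_{V,\epsilon}(0)\,r$. Next, a domain-of-dependence argument: the mass aspect \eqref{Adef-red} depends only on matter interior to the sphere of radius $r$, and the characteristics of \eqref{Vl-red}, \eqref{ee-red3}, \eqref{ee-red4} propagate at radial speed bounded by $|\beta|+1/a$; a Gronwall estimate on the difference of the two solutions (in the spirit of Section~\ref{loop1conv}) forces them to coincide on a core $\{r\le r_c(t)\}$. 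Because in $h_\epsilon$ particles satisfy $|x(t)|\approx\gamma_{V,\epsilon}(t)|x(0)|+O(\sqrt\epsilon)$ (the drift coming from $|v(0)|\le\sqrt\epsilon$ transported to $|v(t)|\le\sqrt\epsilon/\gamma_{V,\epsilon}(t)$), for $\epsilon$ small one can take $r_c(t)=(1-\delta)\gamma_{V,\epsilon}(t)$ with any fixed $\delta>0$.

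Inside the core we then have $a\equiv 1$ and $\beta(t,r)=\sigma_{V,\epsilon}(t)\,r$, while in the exterior vacuum $j\equiv 0$ keeps $a\equiv 1$ by \eqref{ee-red4} and \eqref{Adef-red} gives $\beta^2(t,r)=2m(t,r)/r$. By Lemma~\ref{gamma_comp} both the interior maximum $\sigma_{V,\epsilon}(t)\gamma_{V,\epsilon}(t)=-\dot\gamma_{V,\epsilon}(t)$ and the exterior maximum $\sqrt{2M/\gamma_{V,\epsilon}(t)}$ converge as $\epsilon\to 0$ to $\tfrac23(1-t)^{-1/3}$, which is $<1$ iff $t<19/27$; hence there is $t_0<19/27$ before which no trapped surface exists. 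At the radius $r=\tfrac13$ one has $\sigma_{V,\epsilon}(t)/3\to\tfrac{2}{9(1-t)}>1$ iff $t>7/9$ in the interior regime, and $6M\to 4/3>1$ in the exterior regime, giving some $t_1>7/9$ past which $r=\tfrac13$ is trapped. For matter confinement, the outermost Vlasov characteristic starts at $|x(0)|\le 1+\epsilon$, $|v(0)|\le\sqrt\epsilon$ and is advected inward by $-\beta\, x/r$, reaching radius $\gamma_{V,\epsilon}(t)+O(\sqrt\epsilon)$ by Lemma~\ref{gamma_comp}(a,c); this drops below $\tfrac13$ at some $t_2$ slightly above $1-3^{-3/2}\approx 0.808<\tfrac89$. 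For $t\ge t_2$ and $r>\tfrac13$, the integrand in \eqref{Adef-red} vanishes past $\tfrac13$, so $1-A(t,r)=2M/r$ with
\[
M=4\pi\int_0^\infty s^2\mathring\rho(s)\,ds=\tfrac{4\pi\alpha_\epsilon}{3}(1+O(\epsilon))=\tfrac29(1-\epsilon^{1/5})+O(\epsilon)=\tfrac29+O(\epsilon^{1/5}),
\]
using \eqref{rho0s} and $\alpha_\epsilon=(1-\epsilon^{1/5})/(6\pi)$.

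Global existence on $[0,\tfrac89]$ follows from the continuation criterion \eqref{contcrit}. In the core, $|v|\le\sqrt\epsilon/\gamma_{V,\epsilon}(t)$, $a\equiv 1$, and $\partial_r\beta=\sigma_{V,\epsilon}(t)$ are all bounded uniformly on $[0,\tfrac89]$ by Lemma~\ref{gamma_comp} (which keeps $\gamma_{V,\epsilon}\ge\gamma_D(\tfrac89)=1/9^{2/3}>0$ and $\sigma_{V,\epsilon}\le 6$). Outside the core, $a$ is advected without source by \eqref{ee-red4}, $\beta$ is determined by the interior mass via \eqref{ee-red3}, and the associated first-order equation for $\partial_r\beta$ has bounded sources, so all these quantities remain bounded as long as $r_c(t)$ stays bounded away from $0$---which on $[0,\tfrac89]$ is guaranteed. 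The main obstacle is the core-matching step in the second paragraph: the non-locality of \eqref{Adef-red} prevents a literal finite-propagation argument, and one has to leverage the spherically symmetric structure of the constraint together with careful bookkeeping of the $O(\epsilon)$ boundary-layer data supported in $|x|\in[1,1+\epsilon]$.
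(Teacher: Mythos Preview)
Your outline captures the high-level strategy---identify a homogeneous core where $f=h_\epsilon$, use the dust comparison of Lemma~\ref{gamma_comp}, and read off the trapped-surface and Schwarzschild statements---but it has a genuine structural gap: you treat the spacetime as if it decomposes into core plus vacuum, with the boundary layer $|x|\in[1,1+\epsilon]$ swept under ``careful bookkeeping''. In reality there is a third region, the matter shell between the core boundary $r_c(t)$ and the outer edge of the support, and essentially all of the paper's work goes into controlling the solution there. On that shell $j$ need not vanish, so $a$ is \emph{not} advected without source, $\beta^2$ is \emph{not} simply $2m/r$, and the ``first-order equation for $\partial_r\beta$'' does not have obviously bounded sources: its right-hand side contains $(\partial_r\beta)^2$ and terms like $4\pi\rho$ whose control is the heart of the matter. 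Without a bound on $\partial_r\beta$ in this region you cannot verify the continuation criterion \eqref{contcrit}, and without a sharp bound on the radial support (your ``$\gamma_{V,\epsilon}(t)+O(\sqrt\epsilon)$'' claim) you cannot get the matter inside $r=\tfrac13$.

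The paper closes this gap by a bootstrap argument built around the four assumptions $\partial_r\beta<\partial_r\beta_D$, $\beta>-1$, $\tfrac12<a<2$, and a momentum-support bound. The delicate one is the first: recovering it in improved form requires deriving an ODE for $\partial_r\beta-\partial_r\beta_D$ along $\beta$-characteristics (Lemma~\ref{betarepchar}, Lemma~\ref{betaprdifference}), showing the inhomogeneity $d=2m_D/r^3-2m/r^3-4\pi\rho_D+4\pi\rho$ is negative, and for that one needs both near-conservation of $m$ along characteristics (Lemma~\ref{massalongchar}) and a refined pointwise bound $\rho\le\alpha_\epsilon/(\gamma_D\gamma_{V,\epsilon}^2)+C\epsilon^{1/2}$ (Lemma~\ref{ref_rho_est}), the latter obtained from an anisotropic momentum-support estimate in the $(r,w,L)$ variables. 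This machinery is what makes the transition region tractable; your proposal does not supply a substitute for it, and the final paragraph's acknowledgement that ``the non-locality of \eqref{Adef-red} prevents a literal finite-propagation argument'' is pointing at a symptom rather than the disease.
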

We recall the definition of $A$ in \eqref{Adef} and note that
a surface with coordinates $(t,r)$
is trapped iff $A(t,r)<0$.
It should be observed that
by the time $t=\frac{7}{9}$ the asymptotically flat
homogeneous dust solution has formed a
trapped surface at $r=\frac{1}{3}$ and for $t$ close to $\frac{8}{9}$
all its mass is contained within this trapped surface. This motivates
our specific choice of various parameters, but we {\em emphasize} that
they could also have been kept more general.

In order to speak about a black hole it is strictly speaking
not sufficient to consider the solution only up to time $t=\frac{8}{9}$.
However, the following additional assertions hold.
\begin{proposition}\label{bh}
  For $\epsilon$ sufficiently small
  the maximal Cauchy development of the given initial data
  contains the region
  \[
  O = \left\{(t,x) \mid t > t_2,\ |x| > \frac{1}{3}\right\}
  \]
  where the solution is vacuum and
  \[
  a=1,\ \beta(t,r) = \sqrt{\frac{2 M}{r}}.
  \]
  The line $\{t\geq t_2, r=2 M\}$ is part of the radially outgoing
  null geodesic which passes through the point $(t_2,2 M)$. This
  null geodesic is future complete and generates the event horizon
  of a black hole of mass $M$.
\end{proposition}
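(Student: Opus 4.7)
The plan is as follows. First I extend the solution of Theorem~\ref{maine} beyond $t = 8/9$. For $t \geq t_2$ the matter is confined to $r \leq 1/3$ by Theorem~\ref{maine}, so in the exterior $\{r \geq 1/3\}$ the fields $a, \beta$ solve the reduced system with vanishing source terms, for which the continuation criterion \eqref{contcrit} is preserved by the explicit vacuum formulas derived below. Applying Theorem~\ref{locex} successively to the Cauchy data at later times, the solution extends to arbitrarily large $t$ in the exterior region; hence $O$ is contained in the maximal Cauchy development.

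Next I establish the form of the metric in $O$. With $\rho = p = j = 0$ there, \eqref{ee2} forces $\partial_r(r(1-A)) = 0$, and matching to the asymptotic value gives $r(1-A) = 8\pi \int_0^\infty s^2 \mathring\rho(s)\,ds = 2M$ where $M = \int \mathring\rho\, dx$. For $a$, equation \eqref{ee-red4} reduces in $O$ to the pure transport equation $(\partial_t - \beta\partial_r) a = 0$. The characteristic $s \mapsto R(s,t,r)$ through $(t,r) \in O$ satisfies $\dot R = -\beta \leq 0$, hence $R(s,t,r) \geq r > 1/3$ for $s \leq t$. For $s \in [t_2,t]$ the point $R(s,t,r)$ lies in the vacuum region by Theorem~\ref{maine}; for $s \in [0, t_2]$ the matter outer boundary is, up to $\mathrm{O}(\epsilon)$, a radially infalling $\beta$-characteristic (by the analysis of Sections~\ref{subsec_homvlasov}--\ref{subsec_homvleps}), and distinct characteristics of the same first-order operator cannot cross. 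Therefore $a(t,r) = \mathring a(R(0,t,r)) = 1$ throughout $O$, and $\beta^2 = 1/a^2 - A = 2M/r$ yields $\beta(t,r) = \sqrt{2M/r}$.

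For the null geodesic analysis, with $a = 1$ the metric in $O$ takes the Painlev\'e-Gullstrand Schwarzschild form, so radial null rays satisfy $dr/dt = \pm 1 - \beta$. The outgoing branch $dr/dt = 1 - \sqrt{2M/r}$ vanishes precisely at $r = 2M$, so $r \equiv 2M$ for $t \geq t_2$ is the outgoing radial null geodesic through $(t_2, 2M)$. To see that it is future complete, I use that $\partial_t$ is a Killing vector, null on $\{r = 2M\}$, and that $\nabla_\mu (g_{tt}) = -2\kappa\, g_{\mu t}$ there with surface gravity $\kappa = 1/(4M)$ (from $\partial_r g_{tt} = -2M/r^2 = -1/(2M)$ at $r = 2M$ and $g_{rt}|_{r=2M} = 1$). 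Hence $\partial_t$ is a non-affine tangent to the generator, and the affine parameter $\lambda$ obeys $d\lambda/dt \propto e^{\kappa t}$, so $\lambda \to \infty$ as $t \to \infty$.

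To conclude, $\{r = 2M, t \geq t_2\}$ is the event horizon. For $r > 2M$ the outgoing null speed $1 - \sqrt{2M/r}$ is strictly positive and tends to $1$ at infinity, so every future-directed outgoing null geodesic issuing from $O$ reaches arbitrarily large $r$ as $t\to\infty$. For $r < 2M$ this speed is strictly negative, so no future-directed causal curve starting there can cross $r = 2M$ from below; in particular the trapped surfaces of Theorem~\ref{maine} at $r = 1/3$ lie in the causal complement of spatial infinity. Thus $\{r = 2M, t \geq t_2\}$ separates the causal past of future null infinity from its complement, and is by definition the event horizon of a black hole of mass $M$. The main obstacle in this plan is the characteristic-crossing argument above: verifying that backward $\beta$-characteristics from $O$ do avoid the matter support for $s \leq t_2$, which relies on the $\mathrm{O}(\epsilon)$-proximity of the Vlasov solution to the homogeneous dust reference solution established in Section~\ref{subsec_homvleps}.
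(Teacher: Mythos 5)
Your overall outline (vacuum on $O$, $r(1-A)=2M$ from \eqref{ee1}--\eqref{ee2}, transport of $a$ along $\beta$-characteristics back to $t=0$, then reading off the horizon from the explicit Painlev\'e-Gullstrand--Schwarzschild form) matches the paper, and your treatment of the null generator and its future completeness is in fact more explicit than the paper's one-line remark. But the step you yourself flag as the main obstacle is a genuine gap, not a detail. The assertion that ``the matter outer boundary is, up to $\mathrm{O}(\epsilon)$, a radially infalling $\beta$-characteristic, and distinct characteristics of the same first-order operator cannot cross'' proves nothing: the no-crossing principle applies to exact solutions of the same ODE, and a curve that is merely $\mathrm{O}(\epsilon)$-close to a characteristic can perfectly well be crossed by one. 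Worse, you do not know $\beta$ along the backward characteristic a priori: in the vacuum exterior $\beta^2=1/a^2-1+2M/r$, and $a$ there is exactly the quantity you are trying to determine, so the argument is circular unless it is closed by a bootstrap. The paper does this quantitatively: Lemma~\ref{rsuppest} gives vacuum for $r^{3/2}\geq 1-t+C_1\epsilon^{1/5}$; one defines the explicit curve $R^\ast$ by $\dot r=-\sqrt{2M/r}$ with $\mathring r^{3/2}=1+(C_1+C_2)\epsilon^{1/5}$, checks using $M=\frac{2}{9}+\mathrm{O}(\epsilon^{1/5})$ that $R^\ast$ stays in the vacuum region on $[0,\frac{8}{9}]$, and then a bootstrap shows $R^\ast$ is itself a $\beta$-characteristic with $a=1$, $\beta=\sqrt{2M/r}$ for $r\geq R^\ast(t)$, and finally that $R^\ast(t_2)<\frac{1}{3}$ for $t_2$ close to $\frac{8}{9}$. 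The numerical margin (how close $t_2$ must be to $\frac{8}{9}$, the exponents $\epsilon^{1/5}$) is essential here and is absent from your argument.

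The second gap concerns $t>\frac{8}{9}$. You claim matter is confined to $r\leq\frac{1}{3}$ for all $t\geq t_2$ ``by Theorem~\ref{maine}'', but that theorem only covers $[0,\frac{8}{9}]$, and your proposal to extend by ``applying Theorem~\ref{locex} successively'' cannot work as stated: the continuation criterion \eqref{contcrit} involves the momentum support, $\|a\|$ and $\|\partial_r\beta\|$ on all of space, including the collapsing interior, where no such control is available (a singularity is expected there); Theorem~\ref{locex} is not a localized existence result on an exterior region. The paper avoids this entirely: it shows $f=0$ on $O$ directly by noting that at $r=\frac{1}{3}$, $s>t_2$ the trapped-surface condition $\beta>1/a$ forces $\dot r<0$ for every Vlasov characteristic, so no characteristic meeting $O$ can have crossed $r=\frac{1}{3}$ after $t_2$, and the exterior part of the maximal Cauchy development then carries the explicit vacuum metric. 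You need some version of this causal/characteristic-sign argument (or an explicit vacuum continuation argument) in place of the global re-application of the local existence theorem.
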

The proof of the theorem (and the proposition) is a fairly
involved bootstrap argument which aims to compare
the solution to the homogeneous ones introduced in Section~\ref{sec_hom}
and which we split in a sequence of lemmas. Before we start on this,
another remark seems in order to put these results into perspective.
\begin{remark}\label{contdep}
  In principle, Theorem~\ref{maine} can be viewed as a result on continuous
  dependence on initial data. However, in practice there exists no
  mathematical framework within which both the Einstein-Vlasov and the
  Einstein-dust system are well posed; the latter system must be viewed
  as a singular limit of the former system. But once a solution of
  the former system with a trapped surface is obtained, then continuous
  dependence on the Vlasov data does hold and further small perturbations
  of these data, which could for example make them slightly inhomogeneous on
  the interval $[0,1]$, would launch a solution to the Einstein-Vlasov
  system which still forms a trapped surface.
\end{remark}
\subsection{Getting the bootstrap argument started}
We consider the following bootstrap assumptions on the solution to the
Einstein-Vlasov system launched by the data specified above. The solution
is to exist on some time interval $[0,T^\ast[ \subset [0,\frac{8}{9}]$
and for $0\leq t < T^\ast$ and $r \geq 0$,
\[
\begin{array}{rc}
  \mathrm{(i)} &
  \displaystyle \partial_r \beta(t,r) < \partial_r \beta_D (t),\\
  \mathrm{(ii)} &
  \displaystyle \beta(t,r) >-1,\\
  \mathrm{(iii)} &
  \displaystyle \frac{1}{2} < a(t,r) < 2, \\
  \mathrm{(iv)} &
  \ \mbox{for all characteristics of the Vlasov equation
    starting in}\ \supp\mathring{f},\\
  & \displaystyle (1-t)^{2/3} |v(t)| \leq 2 \, |v(0)|;
\end{array}
\]
note that $\partial_r\beta_D$ depends only on $t$.
The key assumption is (i) which is quite non-trivial to recover in an
improved form. The assumptions (ii) and (iii) are needed to get the
bootstrap argument started, but they are easy to recover in an
improved form. The latter is also true for (iv) which will be used to
keep matter terms like $j$ and $p$, which vanish for dust, small.
A key step in recovering (i) will be a more sophisticated estimate
for the momentum support, which uses polar coordinates and yields
a refined estimate for $\rho$, cf.~Lemma~\ref{ref_rho_est} and
\eqref{ref_vbound} below.

To begin with, we have to show that the bootstrap assumptions hold at least
on some small time interval.

\begin{lemma}\label{ba_start}
  For $\epsilon>0$ sufficiently small the solution
  of the Einstein-Vlasov system which according to Theorem~\ref{locex}
  is launched by the data specified at the beginning of this section,
  cf.~\eqref{fdata_ndust}, exists on some time interval
  $[0,T^\ast[ \subset [0,\frac{8}{9}]$ and satisfies the bootstrap
  assumptions (i)--(iv) there. We choose $T^\ast=T^\ast_\epsilon >0$
  maximal with this property.    
\end{lemma}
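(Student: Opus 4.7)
The plan is to invoke Theorem \ref{locex} for the prescribed data, verify that the four bootstrap conditions hold as strict inequalities at $t=0$ for $\epsilon$ sufficiently small, and then extend to a nontrivial interval by continuity of the regular solution.

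First, I would verify admissibility of the data: with $\mathring{a}=1$ and $\mathring{f}$ even in $v$ (since $H$ depends on $|v|^2$), one has $\mathring{\jmath}=0$, so $\mathring{\beta}$ defined by \eqref{beta0def} satisfies the constraint, and the Remark following Section~\ref{data} gives $\mathring{\beta}\in C^2$ with the required boundary behavior at $r=0$. Theorem~\ref{locex} then launches a unique regular solution on a maximal interval $[0,T[$.

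Next I would check the four conditions at $t=0$. For (ii)--(iv) the verification is immediate: $\mathring{\beta}\ge 0>-1$; $\mathring{a}\equiv 1\in(1/2,2)$; and $(1-0)^{2/3}|v(0)|=|v(0)|\le 2|v(0)|$. The nontrivial condition is (i), with $\partial_r\beta_D(0)=\sigma_D(0)=2/3$. Since $\mathring{a}=1$, $\mathring{\beta}=\sqrt{2\mathring{m}/r}$, and differentiating gives
\[
  \partial_r\mathring{\beta}(r)=\frac{\mathring{m}'(r)/r-\mathring{m}(r)/r^2}{\sqrt{2\mathring{m}(r)/r}}.
\]
I would split into three regions. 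On the core $[0,1]$, $\mathring{\rho}$ is constant, equal to $c_\epsilon:=\alpha_\epsilon\int\sqrt{1+\epsilon|u|^2}H(|u|^2)\,du$ with $\alpha_\epsilon\le c_\epsilon\le (1+\epsilon)\alpha_\epsilon$ by \eqref{rho0s}; a direct computation gives $\partial_r\mathring{\beta}\equiv\sqrt{8\pi c_\epsilon/3}$. Since $\alpha_\epsilon=(1-\epsilon^{1/5})/(6\pi)$ and $\epsilon^{1/5}\gg\epsilon$ for small $\epsilon$, we have $c_\epsilon<1/(6\pi)$, whence $\partial_r\mathring{\beta}<2/3$. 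On the exterior $r>1+\epsilon$, $\mathring{m}\equiv M$ gives $\partial_r\mathring{\beta}=-\sqrt{M/(2r^3)}<0<2/3$. On the transition $[1,1+\epsilon]$, using $\mathring{\rho}\le(1+\epsilon)\alpha_\epsilon$ and $\mathring{m}(r)\ge 4\pi c_\epsilon/3$, the formula above yields $\partial_r\mathring{\beta}\le\sqrt{8\pi\alpha_\epsilon/3}\,(1+O(\epsilon))=(2/3)\sqrt{1-\epsilon^{1/5}}\,(1+O(\epsilon))<2/3$, again because $\epsilon^{1/5}\gg\epsilon$.

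Finally, by Theorem~\ref{locex} the functions $\partial_r\beta$, $\beta$, $a$ are $C^1$ on $[0,T[\times[0,\infty[$, and the characteristic map $Z(s,0,\cdot)$ is $C^1$ jointly in $(s,z)$. Each bootstrap inequality is strict at $t=0$, each left-hand side is continuous, and outside a compact set in $r$ the quantities $\partial_r\beta$, $a-1$, $\beta$ decay in a controlled way (vacuum formulas $\beta=\sqrt{2M/r}$, $a=1$ apply as soon as matter is compactly supported). Hence each inequality persists on some $[0,T^*[$ with $T^*>0$, and choosing $T^*=T^*_\epsilon\in\,]0,T[\,\cap\,]0,8/9]$ maximal completes the proof.

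The main obstacle is the transition region $[1,1+\epsilon]$ in (i): the gradient $\partial_r\mathring{\beta}$ could in principle spike there. The quantitative safeguard is the built-in margin $\alpha_\epsilon=(1-\epsilon^{1/5})/(6\pi)$, which opens a gap of order $\epsilon^{1/5}$ between $\sqrt{8\pi\alpha_\epsilon/3}$ and $2/3$, dominating the $O(\epsilon)$ corrections induced by the transition. This is precisely the reason for the peculiar choice $\alpha_\epsilon=(1-\epsilon^{1/5})/(6\pi)$ in Section~\ref{subsec_homvleps}.
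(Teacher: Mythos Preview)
Your overall strategy matches the paper's: verify the bootstrap inequalities at $t=0$ and propagate them by continuity. There are two points where your argument differs from, or falls short of, the paper's.

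\textbf{A gap in (iv).} Your continuity argument does not close. At $t=0$ the inequality $(1-t)^{2/3}|v(t)|\le 2|v(0)|$ reads $|v(0)|\le 2|v(0)|$, whose margin is $|v(0)|$ and hence vanishes as $v(0)\to 0$ on $\supp\mathring f$. A naive ``strict at $t=0$ plus joint continuity'' therefore does not give a uniform $\delta>0$. The paper fixes this by exploiting that the $v$-component of the characteristic system is linear in $v$: from the identity
\[
\frac{d}{ds}|v|^2
= 8\pi r\,a\,j\Bigl(\frac{x\cdot v}{r}\Bigr)^2
+ 2\Bigl(\partial_r\beta-\frac{\beta}{r}\Bigr)\Bigl(\frac{x\cdot v}{r}\Bigr)^2
+ 2\frac{\beta}{r}|v|^2
\]
one gets $\frac{d}{ds}|v|^2\le C_\epsilon|v|^2$ on $[0,\delta]$, hence $|v(t)|\le e^{C_\epsilon t/2}|v(0)|$, and (iv) follows for $\delta$ small. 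You should replace the continuity step for (iv) by this Gronwall estimate (or, equivalently, first dispose of the characteristics with $v(0)=0$ via $v(t)\equiv 0$ and then argue uniformly for $|v(0)|$ bounded below, which in effect reproduces the same bound).

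\textbf{A cleaner route for (i).} Your three-region case analysis for $\partial_r\mathring\beta$ is correct, but the paper avoids it entirely by using that $\mathring\rho$ is non-increasing in $r$. This gives $\mathring m(r)\ge \frac{4\pi}{3}r^3\mathring\rho(r)$, and then a single chain of inequalities yields
\[
\partial_r\mathring\beta(r)\le \sqrt{\tfrac{8\pi}{3}\mathring\rho(r)}
\le \tfrac{2}{3}\sqrt{(1+\epsilon)(1-\epsilon^{1/5})}
\le \partial_r\mathring\beta_D - \tfrac{1}{3}\epsilon^{1/5}
\]
for all $r\ge 0$ in one stroke. Both arguments deliver the same quantitative $\epsilon^{1/5}$-gap that is later needed (cf.\ Lemma~\ref{betaprdifference}); the monotonicity trick simply gets there without splitting into core, transition, and exterior.
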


\begin{proof}
  By assumption, $a(0) = \mathring a = 1$ which satisfies (iii).
  Next we recall that by \eqref{beta0def},
  \[
  \mathring{\beta} (r)
  = \sqrt{\frac{2 \mathring m(r)}{r}},
  \]
  where
  \[
  \mathring m(r) = 4\pi \int_0^r \mathring \rho(s)\, s^2 ds
  \geq 4\pi \mathring\rho (r) \int_0^r s^2 ds
  = \frac{4\pi}{3} r^3 \mathring\rho (r);
  \]
  note that $\mathring\rho$ inherits the monotonicity in $r$ from
  $\rho_0$. Thus
  \begin{align*}
    \partial_r\mathring\beta(r)
    & =
    \frac{1}{\sqrt{2}}\frac{1}{\sqrt{\mathring m(r)/r}}
    \left(4\pi r \mathring\rho(r) - \frac{\mathring m(r)}{r^2}\right)
    \leq
    \frac{1}{\sqrt{2}}\frac{1}{\sqrt{\mathring m(r)/r}}\frac{8\pi}{3}
    r \mathring\rho(r)\\
    & \leq
    \frac{1}{\sqrt{2}} \frac{8\pi}{3}
    \frac{1}{\sqrt{\frac{4\pi}{3}\mathring \rho(r)\,r^2}} r \mathring\rho(r)
    =
    \sqrt{\frac{8\pi}{3} \mathring\rho(r)}.
  \end{align*}
  By \eqref{rho0s},
  $\mathring\rho (r) \leq
    \frac{1}{6\pi} (1-\epsilon^{1/5}) (1+\epsilon)$, and hence
  for $\epsilon>0$ sufficiently small,
  \begin{align}\label{initbetadiff}
    \partial_r\mathring\beta(r)
    \leq \frac{2}{3} \sqrt{(1+\epsilon)(1-\epsilon^{1/5})}
    \leq \frac{2}{3} \left(1-\frac{1}{2}\epsilon^{1/5}\right)
    = \partial_r\mathring\beta_D(r) - \frac{1}{3}\epsilon^{1/5}.
  \end{align}
  Since obviously $\mathring\beta(r)\geq 0$ we see that (i)--(iii) hold
  initially and hence on some time interval $[0,\delta]$, provided
  $\delta>0$ is small enough. As to (iv) we note the identity
  \begin{align}\label{vsquaredot}
    \frac{d}{ds} |v|^2 =
    8 \pi r \,a\, j\, \left(\frac{x\cdot v}{r}\right)^2
    + 2 \left(\partial_r\beta-\frac{\beta}{r}\right)
    \left(\frac{x\cdot v}{r}\right)^2
    + 2 \frac{\beta}{r} |v|^2
  \end{align}
  which follows from the $v$-component of the characteristic system of the
  Vlasov equation \eqref{Vl-red}. The identity implies that on $[0,\delta]$,
  \[
  \frac{d}{ds} |v|^2 \leq C_\epsilon |v|^2,
  \]
  where $C_\epsilon>0$ depends on the solution of the Einstein-Vlasov
  system and hence on $\epsilon$, but not on $s\in [0,\delta]$. By making
  $\delta$ smaller if necessary, (iv) now holds as well.
\end{proof}
\subsection{Simple consequences of the bootstrap assumptions}
In what follows we always consider the solution of the Einstein-Vlasov
system obtained in Lemma~\ref{ba_start} which satisfies the assumptions
(i)--(iv) on $[0,T^\ast[$. We derive some consequences which in particular
will show that we recover (ii)--(iv) in an improved form.
Constants denoted by $C$ are positive,
may depend on $\gamma_D$  
but not on $T^\ast$ or $\epsilon$, and may change from line to line.
We start by establishing some bounds on the source terms and on $\beta$.
\begin{lemma}\label{jandbetaest}
  For all $0\leq t < T^\ast$,
  \begin{itemize}
  \item[(a)]
    $\|j(t)\|, \|p(t)\|, \|p_T(t)\| \leq C \epsilon^{1/2}$, 
    $\|\rho(t)\| \leq C$.   
  \item[(b)]
    $ -1 < \beta(t,r)\leq \beta_D(t,r)
    \leq C r$ for $r\geq 0$.
  \end{itemize}
\end{lemma}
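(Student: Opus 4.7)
The strategy is to read the lemma off from the bootstrap assumptions combined with the explicit structure of the initial data \eqref{fdata_ndust} and of the Oppenheimer--Snyder dust solution from Section~\ref{subsec_homdust}.

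For part~(a), the plan is to first confine the momentum support of $f(t)$ to a ball of radius $O(\sqrt{\epsilon})$, and then bound each moment integral by (sup norm of $f$)$\times$(volume of $v$-support)$\times$(sup of the moment factor). Since $H$ vanishes on $[1,\infty[$ by \eqref{Hdef}, one has $\supp\mathring f\subset\{|v|\le\sqrt{\epsilon}\}$, and since $f$ is constant along characteristics, $\|f(t)\|=\|\mathring f\|\le C\epsilon^{-3/2}$. Bootstrap assumption~(iv), combined with the elementary bound $(1-t)^{2/3}\ge (1/9)^{2/3}$ valid on $[0,\frac{8}{9}]$, shows that every characteristic starting in $\supp\mathring f$ remains in $\{|v|\le C\sqrt{\epsilon}\}$, so that the $v$-support of $f(t)$ has volume at most $C\epsilon^{3/2}$. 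Applied to \eqref{rhoc}--\eqref{p_Tc}: the trivial moment factor $\langle v\rangle$ is bounded, giving $\|\rho(t)\|\le C\epsilon^{-3/2}\cdot\epsilon^{3/2}\cdot C=C$; each additional power of $|v|$ contributes a factor $\sqrt{\epsilon}$, yielding $\|j(t)\|\le C\epsilon^{1/2}$ and $\|p(t)\|,\|p_T(t)\|\le C\epsilon\le C\epsilon^{1/2}$.

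For part~(b), the upper bound $\beta\le\beta_D$ is obtained by integrating bootstrap assumption~(i) radially from the center. Regularity at $r=0$ enforces $\beta(t,0)=0=\beta_D(t,0)$, so
\begin{equation*}
\beta(t,r)=\int_0^r\partial_r\beta(t,s)\,ds\le\int_0^r\partial_r\beta_D(t,s)\,ds=\beta_D(t,r).
\end{equation*}
The lower bound $-1<\beta$ is precisely assumption~(ii). The bound $\beta_D\le Cr$ is read off the piecewise formula \eqref{OSdustbeta} with $\alpha=1/(6\pi)$: in the matter interior $r\le(1-t)^{2/3}$ it gives $\beta_D=2r/\bigl(3(1-t)\bigr)\le 6r$ on $[0,\frac{8}{9}]$, while in the vacuum exterior $\beta_D/r=(2/3)r^{-3/2}$ is monotone decreasing in $r$ and hence maximized at the matter boundary $r=(1-t)^{2/3}$, where it again equals $2/\bigl(3(1-t)\bigr)\le 6$.

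None of the steps presents a serious obstacle; the computations are routine given the earlier setup. The main delicacy is that $\partial_r\beta_D$ is discontinuous at the shell boundary $r=\gamma_D(t)$, so bootstrap~(i) holds pointwise only off this single radius. Since only the radially integrated inequality is used, the measure-zero jump is irrelevant. A secondary remark: the $O(\epsilon)$ estimates actually produced for $p$ and $p_T$ are strictly stronger than the stated $O(\epsilon^{1/2})$ bound; the weaker form is presumably chosen to match how these terms couple later in the bootstrap loop.
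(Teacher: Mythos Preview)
Your argument is correct and follows essentially the same approach as the paper's own (very terse) proof: bound the momentum support via the initial data and bootstrap~(iv), then estimate the moment integrals; for~(b), integrate bootstrap~(i) radially from the center and invoke bootstrap~(ii).

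One minor misidentification: $\beta_D$ here denotes the \emph{homogeneous} dust solution introduced in Section~\ref{subsec_homvleps} (see the remark after \eqref{gammaD}), not the truncated asymptotically flat solution \eqref{OSdustbeta}. This is why bootstrap assumption~(i) reads $\partial_r\beta(t,r)<\partial_r\beta_D(t)$ with no $r$-dependence on the right, and the paper explicitly notes that $\partial_r\beta_D$ depends only on $t$. Consequently $\beta_D(t,r)=\sigma_D(t)\,r=\tfrac{2}{3(1-t)}\,r\le 6r$ holds for all $r\ge0$ directly, with no piecewise analysis needed, and your discussion of the discontinuity at the shell boundary is moot. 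This does not affect the validity of your proof, only simplifies it.
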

\begin{proof}
  The form of the initial data in \eqref{fdata_ndust}
  implies that $|v|\leq \epsilon^{1/2}$
  for $(x,v)\in \supp \mathring{f}$, the bootstrap assumption (iv)
  implies a corresponding estimate for $(x,v)\in \supp f(t)$, and
  this implies  part (a).
  The boundary condition for $\beta$ together with the bootstrap
  assumptions (i), (ii) imply (b).
\end{proof}
Next we estimate characteristics of the Vlasov equation which
run in the support of $f$.
\begin{lemma}\label{charest}
  For any characteristic $t\mapsto (x(t),v(t))$ of the Vlasov equation
  which starts in the support of
  $\mathring{f}$ and $0\leq t < T^\ast$,
  \[
  |x(t)| \leq C
  \ \mbox{and}\ 
  (1-t)^{2/3} |v(t)| \leq
  (1+ C \epsilon^{1/4}) |v(0)| .
  \]
\end{lemma}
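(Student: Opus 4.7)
\medskip\noindent\textit{Proof proposal.}
My plan is to integrate the characteristic ODEs for $x$ and $v$ and feed in the bootstrap assumptions (i)--(iv) together with Lemma~\ref{jandbetaest} so that every term in the estimate appears with a manageable sign.

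For the position bound I would use the radial form
\[
\dot r \;=\; \frac{x\cdot v}{r\,a\,\p0}\;-\;\beta(s,r)
\]
of the first three components of the characteristic system \eqref{Vl-red}. The first term is bounded in absolute value by $|v|/(a\p0)\le 1/a < 2$ by bootstrap (iii), and Lemma~\ref{jandbetaest}(b) gives $-1<\beta\le Cr$, whence $|\beta|\le 1+Cr$. Therefore $|\dot r|\le C(1+r)$, and Gronwall applied to $1+r(s)$, combined with the initial bound $r(0)\le 1+\epsilon$ that follows from $\supp\mathring f\subset B_{1+\epsilon}\times B_{\epsilon^{1/2}}$, delivers $r(t)\le C$ uniformly on $[0,T^\ast]\subset[0,\tfrac89]$.

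For the velocity bound I would start from the identity \eqref{vsquaredot} and perform the algebraic rearrangement
\[
2\!\left(\partial_r\beta-\tfrac{\beta}{r}\right)\!\!\left(\tfrac{x\cdot v}{r}\right)^{\!2}
+2\tfrac{\beta}{r}|v|^2
= 2(\partial_r\beta)\!\left(\tfrac{x\cdot v}{r}\right)^{\!2}
+ 2\tfrac{\beta}{r}\!\left[|v|^2-\!\left(\tfrac{x\cdot v}{r}\right)^{\!2}\right],
\]
so that the coefficients of the non-negative radial and tangential pieces $|v_\|^{}|^2=(x\cdot v/r)^2$ and $|v_\perp|^2=|v|^2-|v_\|^{}|^2$ are $2\partial_r\beta$ and $2\beta/r$ respectively. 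Bootstrap (i) yields $\partial_r\beta<\sigma_D(s)$, while $\beta\le\beta_D=\sigma_D(s)\,r$ (Lemma~\ref{jandbetaest}(b)) yields $\beta/r\le\sigma_D(s)$. Combined with the just-proved bound $r\le C$, with $a<2$ from bootstrap (iii), and with $\|j(s)\|\le C\epsilon^{1/2}$ from Lemma~\ref{jandbetaest}(a), this gives
\[
\tfrac{d}{ds}\ln|v|^2 \;\le\; C\epsilon^{1/2}+2\,\sigma_D(s).
\]
Integrating from $0$ to $t$ and using $\int_0^t\sigma_D(s)\,ds=-\ln\gamma_D(t)$, which is immediate from $\dot\gamma_D=-\sigma_D\gamma_D$ and $\gamma_D(0)=1$, I would conclude
\[
\gamma_D(t)\,|v(t)| \;\le\; e^{C\epsilon^{1/2}/2}\,|v(0)| \;\le\; (1+C\epsilon^{1/4})\,|v(0)|
\]
for all sufficiently small $\epsilon$, which is the desired bound.

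The main obstacle I anticipate is precisely the middle term in \eqref{vsquaredot}: the bootstrap supplies only a one-sided bound $\partial_r\beta<\sigma_D$, so a naive estimate of $|\partial_r\beta-\beta/r|$ is not available, and without the rearrangement above one would need a lower bound on $\partial_r\beta$ that is not part of the hypotheses. The radial/tangential splitting circumvents this by exploiting the sign-definiteness of $|v_\|^{}|^2$ and $|v_\perp|^2$, so that only the one-sided estimates $\partial_r\beta<\sigma_D$ and $\beta/r\le\sigma_D$ from bootstrap (i) and Lemma~\ref{jandbetaest} are ever needed. The gap between the $\epsilon^{1/2}$ that the calculation naturally produces and the $\epsilon^{1/4}$ stated in the lemma is harmless, and presumably reflects losses that will be absorbed in the subsequent bootstrap improvement steps.
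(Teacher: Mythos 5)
Your proposal is correct and follows essentially the same route as the paper: the paper also bounds $|\dot r|\le 3+C|r|$ via bootstrap (iii) and Lemma~\ref{jandbetaest}~(b), and for the momentum it performs exactly your rearrangement of \eqref{vsquaredot} into $2\partial_r\beta\,(x\cdot v/r)^2+2(\beta/r)[|v|^2-(x\cdot v/r)^2]$, estimating both coefficients one-sidedly by $2\partial_r\beta_D(s)=\tfrac{4}{3}(1-s)^{-1}$ and the $j$-term by $C\epsilon^{1/2}|v|^2$. The paper likewise ends up with the factor $1+C\epsilon^{1/2}$, which implies the stated $1+C\epsilon^{1/4}$ since $\epsilon\le 1$, so your closing remark about the exponent is consistent with (indeed, matches) the paper's proof.
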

\begin{proof}
  The bootstrap assumption (iii) and Lemma~\ref{jandbetaest}~(b) imply that
  \[
  |\dot r(s)| \leq \frac{|x\cdot v/r|(s)}{a \p0} +
  |\beta(s,r(s))| \leq 3 + C |r(s)|
  \]
  which implies the bound on $|x(t)|$.
  If we apply this bound, the bootstrap assumption (i),
  and Lemma~\ref{jandbetaest}~(a) 
  to \eqref{vsquaredot} it follows that
  \begin{align*}
    \frac{d}{ds} |v|^2
    &\leq
    16\pi \|r j(s)\| \left(\frac{x\cdot v}{r}\right)^2
    + 2 \partial_r\beta \left(\frac{x\cdot v}{r}\right)^2
    + 2 \frac{\beta}{r}
    \left[|v|^2 - \left(\frac{x\cdot v}{r}\right)^2\right] \\
    &\leq
    C \epsilon^{1/2} \left(\frac{x\cdot v}{r}\right)^2
    + 2 \partial_r\beta_D(s) \left(\frac{x\cdot v}{r}\right)^2
    + 2 \partial_r\beta_D(s)
    \left[|v|^2 - \left(\frac{x\cdot v}{r}\right)^2\right]\\
    &\leq
    \left(\frac{4}{3} \frac{1}{1-s} + C \epsilon^{1/2}\right) |v|^2,
  \end{align*}
  which implies that
  \[
  |v(t)| \leq
  \frac{\exp(C \epsilon^{1/2})}{(1- t)^{2/3}} |v(0)|
  \leq
  \frac{1 + C \epsilon^{1/2}}{(1-t)^{2/3}} |v(0)|,
  \]
  for $\epsilon>0$ sufficiently small, and the proof is complete.
\end{proof}
Lemma~\ref{charest} shows that we get
the bootstrap assumption (iv) back in an improved form
if we choose $\epsilon$ such that
$1 + C \epsilon^{1/2} < 2$.
Next we examine the metric quantities $a$ and $\beta$ and in particular
show that we can get back the bootstrap assumption (iii) in an improved form. 
\begin{lemma} \label{simpleabeta}
  It holds that
  \begin{align*}
    a(t,r) &= 1 + \mathrm{O}(\epsilon^{1/2}),\\
    \beta^2(t,r) &= \frac{2 m(t,r)}{r} + \mathrm{O}(\epsilon^{1/2}),
  \end{align*}
  where 
  \begin{align}\label{mdef}
    m(t,r) = 4\pi \int_0^r \rho(t,s)\, s^2 ds.
  \end{align}
\end{lemma}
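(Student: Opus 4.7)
The plan is to derive both estimates directly from the representation formulas we already have for the reduced system, plugging in the uniform bounds from Lemma~\ref{jandbetaest} and Lemma~\ref{charest}.

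For $a$, I would apply Lemma~\ref{lemmabetaeq}(b) to equation \eqref{ee-red4-lambda}, using that $\mathring{a}=1$ so $e^{-\mathring{\lambda}}=1$. This yields the closed formula
\[
\frac{1}{a(t,r)} = 1 + 4\pi\int_0^t (rj)\bigl(s,R(s,t,r)\bigr)\, ds,
\]
where $R$ solves $\dot r = -\beta(s,r)$ with $R(t,t,r)=r$. By Lemma~\ref{charest} the matter is supported in a fixed compact ball $\{r\le C\}$ independent of $t\in[0,T^\ast[$ and of $\epsilon$, so $j(s,\cdot)$ vanishes outside this ball; combined with $\|j(s)\|\le C\epsilon^{1/2}$ from Lemma~\ref{jandbetaest}(a), the integrand is bounded by $C\epsilon^{1/2}$ uniformly in $s,t,r$. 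Since $t\le 8/9$, this gives $1/a = 1 + \mathrm{O}(\epsilon^{1/2})$, and the bootstrap bound $a\le 2$ then yields $a = 1 + \mathrm{O}(\epsilon^{1/2})$.

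For $\beta^2$, I would combine the algebraic identity \eqref{Adef}, i.e.\ $\beta^2 = 1/a^2 - A$, with the integrated constraint \eqref{Adef-red}:
\[
\beta^2(t,r) = \Bigl(\tfrac{1}{a^2(t,r)}-1\Bigr) + \frac{8\pi}{r}\int_0^r\rho(t,s)\,s^2\,ds - \frac{8\pi}{r}\int_0^r (a\beta j)(t,s)\,s^2\,ds.
\]
The middle term is exactly $2m(t,r)/r$ by definition of $m$. The first bracket is $\mathrm{O}(\epsilon^{1/2})$ by the estimate on $a$ just established. So everything reduces to showing that the last integral is $\mathrm{O}(\epsilon^{1/2})$, uniformly in $r$.

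The main technical point is handling the $1/r$ prefactor in this last term. On the support of $j$ we have $s\le C$ by Lemma~\ref{charest}, and there Lemma~\ref{jandbetaest}(b) gives $|\beta(t,s)|\le Cs\le C$, while the bootstrap gives $|a|\le 2$ and Lemma~\ref{jandbetaest}(a) gives $|j|\le C\epsilon^{1/2}$. Thus $|a\beta j|\le C\epsilon^{1/2}$ pointwise on the matter support, and
\[
\frac{8\pi}{r}\int_0^r|a\beta j|(t,s)\,s^2\,ds
\le \frac{C\epsilon^{1/2}}{r}\int_0^{\min(r,C)}\!\! s^2\,ds.
\]
For $r\le C$ this is bounded by $C\epsilon^{1/2} r^2\le C\epsilon^{1/2}$, and for $r>C$ by $C\epsilon^{1/2}/r\le C\epsilon^{1/2}$, yielding the claim. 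No Gronwall argument is needed; the only mildly delicate step is the splitting $r\le C$ versus $r>C$ just described.
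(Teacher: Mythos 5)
Your proposal is correct and follows essentially the same route as the paper: the same characteristic representation $\tfrac{1}{a}=1+4\pi\int_0^t(rj)(s,R(s,t,r))\,ds$ (the paper's \eqref{1overa} specialized to $\mathring a=1$), the same algebraic combination of \eqref{Adef} and \eqref{Adef-red} for $\beta^2$, and the same input bounds from Lemma~\ref{jandbetaest} and Lemma~\ref{charest}. The only difference is that you spell out the elementary handling of the $1/r$ prefactor, which the paper leaves implicit.
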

\begin{proof}
  We recall \eqref{1overa} which in the present situation
  takes the form
  \begin{align}\label{1overa1}
    \frac{1}{a(t,r)}=
    1 + 4\pi \int_0^t(r j)(s,R(s,t,r))\, ds,
  \end{align}
  where $R(s,t,r)$ is the characteristic defined via \eqref{charactbeta}.
  If we combine \eqref{Adef} and \eqref{Adef-red},
  \begin{align}\label{betasquare}
    \beta^2(t,r)= \frac{2 m(t,r)}{r} + \frac{1}{a^2(t,r)} - 1
    - \frac{8\pi}{r}\int_0^r (a \beta j)(t,s)\, s^2 ds .
  \end{align}
  The assertion for $a$ now follows by combining \eqref{1overa1} with the
  estimates for $r$ and $j$ from the previous lemmas. Using the same
  information together with the result for $a$ in \eqref{betasquare}
  gives the assertion for $\beta^2$.
\end{proof}

Lemma~\ref{simpleabeta} shows that we get the bootstrap assumption (iii)
back in an improved form.
\subsection{The homogeneous core}
For what follows we need to show that in a certain core region
and for $\epsilon$ sufficiently small
the solution $f$ of the Einstein-Vlasov system coincides with
the homogeneous solution $h= h_\epsilon$ introduced in
Section~\ref{subsec_homvleps}, cf.~\eqref{hepsdef}. To this end, let
$r^\ast \colon [0,1[\to ]0,\infty[$ denote the maximal solution to
the initial value problem 
\[
\dot r = - \epsilon^{1/4} - \beta_D(s,r),\ r(0)=1.
\]
This function can be computed to be
\begin{align}\label{rstar}
  r^\ast(t) = (1-3\epsilon^{1/4})(1-t)^{2/3} + 3\epsilon^{1/4} (1-t),\ t\in [0,1[;
\end{align}
we choose $\epsilon>0$ sufficiently small so that
$1-3\epsilon^{1/4} > \frac{1}{2}$.
Obviously,
\begin{align}\label{rstarest}
(1-3\epsilon^{1/4}) \gamma_D(t) \leq  r^\ast(t) \leq \gamma_D(t),\ t\in [0,1[,
\end{align}
which is interesting in view of the fact that $\gamma_D$ defines the
boundary of the matter support of the Oppenheimer-Snyder dust solution
given by \eqref{OS_massf}. In particular,
\begin{align}\label{lowr}
  r^\ast(t) \geq \frac{1}{2} \gamma_D(8/9) =: r_\ast >0 \ \mbox{for}\
  t\in [0,8/9].
\end{align}
\begin{lemma}\label{homcore}
  Let
  \[
  I \coloneqq \{(t,x)\in [0,T^\ast[ \times \R^3 \mid 0\leq |x|\leq r^\ast(t)\}.
  \]
  Then for $\epsilon>0$ sufficiently small,
  $f=h$ on $I\times \R^3$ with corresponding identities
  for the metric coefficients; note that both solutions exist on
  $[0,T^\ast[$.
\end{lemma}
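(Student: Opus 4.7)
The plan is to prove $f=h$ on $I\times\R^3$ (together with $a\equiv 1$ and $\beta(t,r)=\sigma_{V,\epsilon}(t)r$ on $I$) by identifying $r^\ast$ as the boundary of a causal cone, verifying that $(f,a,\beta)$ and $(h,a_h,\beta_h)$ have the same initial data inside that cone, and then closing a Gronwall loop confined to the cone.

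First I would verify the matching of initial data on the ball $\{|x|\le 1\}$. Since $\rho_0(x)=\alpha_\epsilon$ there,
\[
\mathring f(x,v)=\alpha_\epsilon \epsilon^{-3/2} H(|v|^2/\epsilon)=h(0,x,v).
\]
Also $\mathring a\equiv 1=a_h(0)$, and \eqref{rho0s} together with the definition of $\mathring\rho_\epsilon$ show that $\mathring\rho(r)=\mathring\rho_\epsilon$ for $r\le 1$, so \eqref{beta0def} gives $\mathring\beta(r)=\sqrt{(8\pi/3)\mathring\rho_\epsilon}\,r=\sigma_{V,\epsilon}(0)\,r=\beta_h(0,r)$ on $[0,1]$.

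Second I would establish two characteristic-confinement statements that together say: whatever reaches the core at time $t$ comes from the ball $\{|x|\le 1\}$ at time $0$. Bootstrap assumption (i) combined with $\beta(s,0)=0=\beta_D(s,0)$ yields $\beta(s,r)\le\beta_D(s,r)=\sigma_D(s)\,r$ for all $r\ge 0$. For the $\beta$-characteristic $R(s,t,r)$ with $R(t,t,r)=r\le r^\ast(t)$, the function $e(s):=R(s,t,r)-r^\ast(s)$ satisfies $e(t)\le 0$ and
\[
\dot e(s)=-\beta(s,R(s))+\epsilon^{1/4}+\beta_D(s,r^\ast(s))\ge \epsilon^{1/4}-\sigma_D(s)\,e(s).
\]
Integrating against the integrating factor $\mu(s):=1/\gamma_D(s)$ yields $e(s)<0$ for $s\in [0,t)$, hence $R(0,t,r)<1$. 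The analogous forward argument for a Vlasov characteristic $(X,V)$ with $(X(0),V(0))\in\supp\mathring f$ uses Lemma~\ref{charest} to control the non-radial velocity contribution
\[
\bigl|\dot{|X|}+\beta(s,|X|)\bigr|\le \frac{|V|}{a\langle V\rangle}\le C\epsilon^{1/2}(1-s)^{-2/3}\le C'\epsilon^{1/2} \quad\text{on }[0,\tfrac{8}{9}],
\]
so that $d(s):=|X(s)|-r^\ast(s)$ obeys $\dot d+\sigma_D d\ge \epsilon^{1/4}-C\epsilon^{1/2}>0$ for $\epsilon$ small; with $d(0)\ge 0$ this forces $d(s)\ge 0$. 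Contrapositively, any Vlasov characteristic hitting $I$ started in $\{|X(0)|<1\}$, where $\mathring f=h(0,\cdot,\cdot)$.

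Third I would close a Gronwall loop for
\[
\Delta(t):=\sup_{r\le r^\ast(t)}\bigl(|a(t,r)-1|+|\beta(t,r)-\sigma_{V,\epsilon}(t)r|\bigr)+\sup_{\substack{|x|\le r^\ast(t)\\ v\in\R^3}}|f(t,x,v)-h(t,x,v)|.
\]
Because $h$ is even in $v$ one has $j_h\equiv 0$, and $(h,1,\sigma_{V,\epsilon}(t)r)$ verifies the Einstein-Vlasov system in the homogeneous sense of Section~\ref{subsec_homvlasov}. Subtracting those equations from the reduced system \eqref{Vl-red}--\eqref{Adef-red} satisfied by $(f,a,\beta)$ produces transport equations for $L:=a-1$ and $B:=\beta-\sigma_{V,\epsilon}(t)r$ whose sources involve only $j$, $p-p_h$, $\rho-\rho_h$, and $(A-A_h)/(2r)$; since $A(t,r)$ depends by \eqref{Adef-red} only on matter at radii $\le r$, the confinement established above means these sources, restricted to $\{r\le r^\ast(t)\}$, depend only on values of $F:=f-h$, $L$, $B$ already inside the core. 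The Duhamel representation of Lemma~\ref{lemmabetaeq}(b) and the uniform-in-$\epsilon$ bounds on $\sigma_{V,\epsilon},\dot\sigma_{V,\epsilon}$ from Lemma~\ref{gamma_comp} then yield $\sup_{r\le r^\ast(t)}(|L|+|B|)(t,\cdot)\le C\int_0^t \Delta(s)\,ds$; representing $F$ along backward Vlasov characteristics (which stay in the core and start where $\mathring f=h(0)$) gives $\sup|F(t)|\le C\int_0^t\Delta(s)\,ds$ as well. Gronwall yields $\Delta\equiv 0$.

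The main obstacle is that the difference $F_2^f-F_2^h$ of the two Vlasov momentum vector fields contains the second-order quantity $\partial_r B-B/r$ (next to $B$, $L$ and $j$), so the naive $\Delta$ above does not self-close. To handle this one augments $\Delta$ with $\sup_{r\le r^\ast(t)}|\partial_r B(t,r)|$ and runs an auxiliary loop for this term, paralleling the ``Magic Lemma'' rearrangement of Section~\ref{magic}---the combination \eqref{etadef} absorbs precisely the dangerous second-order pieces on the right-hand side. The smallness of $\epsilon$ is consumed entirely in the characteristic-confinement step (the strict inequality $\epsilon^{1/4}-C\epsilon^{1/2}>0$) and plays no further role in the final Gronwall argument, which is uniform in $\epsilon$ once $\epsilon$ is small enough for step two.
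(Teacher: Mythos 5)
There is a genuine gap in your treatment of the boundary of the core, and it is exactly the step the paper's proof hinges on. Your Gronwall argument represents $F=f-h$ along backward characteristics of the Vlasov equation and asserts that these "stay in the core and start where $\mathring f=h(0)$". That is only true for the characteristics you actually analyzed, namely those launched from $\supp\mathring f$ (your use of Lemma~\ref{charest} and bootstrap (iv) is restricted to them). For a general phase point $(x,v)$ with $|x|\le r^\ast(t)$ — and you need all such points with $|v|$ in the union of the momentum supports of $f$ and $h$, including points of $\supp h(t)\setminus\supp f(t)$, in order to control the source differences $\rho-\rho_h$, $j$, $p-p_h$ — the backward characteristic can perfectly well exit the region $\{r\le r^\ast(s)\}$ at some intermediate time $t^\ast>0$. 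Your Duhamel formula then produces the boundary term $F(t^\ast,X(t^\ast),V(t^\ast))$, and a bound of the form $\Delta(t)\le \Delta(t^\ast)+C\int_{t^\ast}^t\Delta(s)\,ds$ with $t^\ast$ depending on the characteristic does not close under Gronwall. The paper disposes of this by a separate argument which your proposal lacks: any characteristic (of $f$ or of $h$) that crosses the boundary $r=r^\ast(s)$ inward must, because the boundary recedes at speed $\beta_D+\epsilon^{1/4}\ge\beta+\epsilon^{1/4}$ (and $\ge\beta_h+\epsilon^{1/4}$ by Lemma~\ref{gamma_comp}~(b)), carry radial momentum of order $\epsilon^{1/4}$, which exceeds the momentum support radius $\sim\epsilon^{1/2}$ of both $\mathring f$ (via (iv)) and $h$ (via the $\gamma_{V,\epsilon}$-scaling in \eqref{hepsdef}); hence $f$ and $h$ both vanish at every such exit point and the boundary term is exactly zero. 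Without this vanishing statement your loop cannot be closed, so the core identity does not follow from your steps as written.

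A secondary point: your worry that the difference loop "does not self-close" because of $\partial_r B-B/r$ is legitimate (the difference $\partial_r\beta-\partial_r\beta_h$ must indeed be included), but the remedy you propose — an auxiliary loop modeled on the Magic Lemma of Section~\ref{magic} — is the wrong tool and would drag in derivatives of $f-h$, which are neither available nor needed. The paper closes this estimate at the pure sup-norm level by repeating the derivation of \eqref{drb_ndiff}: the troublesome term $\partial_r(rp)$ in the representation of $\partial_r\beta$ is rewritten via the Vlasov equation and an integration by parts in time (the maneuver of Section~\ref{beta_nsection}), so that only $f-h$, $\beta-\beta_h$, $a-a_h$ and $\partial_r\beta-\partial_r\beta_h$ themselves appear, exactly as in the convergence loop of Section~\ref{loop1conv}. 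Your initial-data matching and the backward confinement of the $\beta$-characteristics (field characteristics) are fine and agree with the paper.
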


\begin{proof}
  Let
  $s\mapsto (x(s),v(s))$ denote a characteristic of $f$
  such that
  \[
  |x(t)| = r^\ast(t)\ \mbox{and}\ |x(s)| < r^\ast(s)\ \mbox{for}\ s>t,
  \]
  so this characteristic is running in the
  interior of the region $I$ and hits its outer boundary
  at time $t$ when followed backwards in time.
  We claim that both $f(t,x(t),v(t))=0$ and
  $h(t,x(t),v(t))=0$, provided $\epsilon>0$ is sufficiently small.
  To see the assertion for $f$, we note that
  \[
  \frac{x(t)\cdot v(t)}{r(t)\, a(t,r(t))\, \langle v(t)\rangle} - \beta(t,r(t))
  = \dot r(t) \leq \dot r^\ast(t) = - \epsilon^{1/4} - \beta_D(t,r^\ast(t)).
  \]
  Since $r(t)=r^\ast(t)$ and $\beta\leq \beta_D$ by
  Lemma~\ref{jandbetaest}~(b), this implies that
  \begin{align*}
    \epsilon^{1/4}
    &\leq
    \epsilon^{1/4} + \beta_D(t,r^\ast(t)) - \beta(t,r^\ast(t))\\
    &\leq
    - \frac{x(t)\cdot v(t)}{r^\ast(t)\, a(t,r^\ast(t))\, \langle v(t)\rangle}
    \leq 2 |v(t)|
    \leq \frac{4}{(1-8/9)^{2/3}} |v(0)|,
  \end{align*}
  where we used the bootstrap assumptions (iii) and (iv). Thus
  \[
  |v(0)| \geq \frac{1}{20} \epsilon^{1/4} > \epsilon^{1/2}
  \]
  for $\epsilon>0$ sufficiently small. Together with the fact that $f$
  is constant along characteristics and \eqref{fdata_ndust} this implies that
  $f(t,x(t),v(t))=\mathring{f}(x(0),v(0)) =0$.
  But by Lemma~\ref{gamma_comp}~(b),
  \[
  \beta_h(t,r)=\sigma_{V,\epsilon}(t) r < \sigma_{D}(t) r =\beta_D(t,r)
  \]
  for $r>0$. Hence for a characteristic of the homogeneous solution
  $h$ the above estimate for $|v(t)|$ turns into
  \[
  |v(t)| \geq  \epsilon^{1/4} >
  (1-8/9)^{-2/3} \epsilon^{1/2} \geq \gamma_D(t)^{-1}  \epsilon^{1/2}
  \geq \gamma_{V,\epsilon}(t)^{-1} \epsilon^{1/2},
  \]
  where we used Lemma~\ref{gamma_comp}~(a).
  Hence $h(t,x(t),v(t))=0$, cf.~\eqref{hepsdef}.

  In addition, any solution to
  \begin{align}\label{charbeta_fh}
  \dot r = - \beta(s,r) \ \mbox{or}\ \dot r = - \beta_h(s,r)
  \end{align}
  which at some time $s$ satisfies $r(s) < r^\ast (s)$ must satisfy
  $r(t) < r^\ast (t)$ for all $t < s$ which again follows by comparing the
  relevant $\beta$ functions. Hence no characteristic
  of the field equations \eqref{ee3} or \eqref{ee4} for either $f$ or $h$
  can leave the region $I$ when followed backwards in time.

  The above behavior of the various characteristics with respect to
  the region $I$ together with the fact that for $t=0$
  the solutions $f$ and $h$ coincide on $I\times \R^3$ imply
  that $f$ and $h$ coincide on $I$ and as long as they exist.
  To see this we proceed as follows.

  Metric quantities or macroscopic densities corresponding to the
  homogeneous solution $h$ are always denoted by the corresponding
  subscript. In particular, we note that $a_h=1$, $j_h=0$, and
  $\partial_x h=0$. Taking the difference of the Vlasov equation
  \eqref{Vl-red} for $f$ and the one for $h$ implies that
  \begin{align}\label{Vl-diff}
    \partial_t (f-h)
    &+
    \left(\frac{v}{a \p0}-\beta\frac{x}{r}\right)\cdot \partial_x (f-h)\notag \\
    & {}+ \left[4\pi r \,a\, j\, \frac{x\cdot v}{r} \frac{x}{r}
      +\left(\partial_r\beta-\frac{\beta}{r}\right)\frac{x\cdot v}{r}
      \frac{x}{r} +\frac{\beta}{r} v\right]
    \cdot\partial_v (f-h)\notag\\
    &=
    D \cdot\partial_v h ,
  \end{align}
  where
  \begin{align*}
    D &=\left(\partial_r\beta_h-\frac{\beta_h}{r}\right)\frac{x\cdot v}{r}
      \frac{x}{r}
      +\frac{\beta_h}{r} v \\
      &\quad{}- 4\pi r \,a\, j\, \frac{x\cdot v}{r} \frac{x}{r}
      -\left(\partial_r\beta-\frac{\beta}{r}\right)\frac{x\cdot v}{r}
      \frac{x}{r} - \frac{\beta}{r} v .
  \end{align*}
  The characteristic system for \eqref{Vl-diff} is the same one as for
  \eqref{Vl-red}, and as before we denote by $s\mapsto (X,V)(s,t,x,v)$
  its solution with  $(X,V)(t,t,x,v)=(x,v)$. Let $|x| < r^\ast(t)$
  and choose $t^\ast$ minimal and such that
  $|X(s,t,x,v)|< r^\ast(s)$ for $s\in ]t^\ast, t]$. Then it follows that
  \begin{align}\label{diff_fh}
    (f-h)(t,x,v)
    &=
    (f-h)(t^\ast,X(t^\ast,t,x,v),V(t^\ast,t,x,v))\notag \\
    &\quad {}+
    \int_{t^\ast}^t (D\cdot \partial_v h)(s,X(s,t,x,v),V(s,t,x,v))\, ds\notag \\
    &=
    \int_{t^\ast}^t (D\cdot \partial_v h)(s,X(s,t,x,v),V(s,t,x,v))\, ds.
  \end{align}
  The difference $f-h$ at time $t^\ast$ vanishes, because either
  $t^\ast>0$ in which case $|X(t^\ast,t,x,v)|=r^\ast(t^\ast)$
  and both $f$ and $h$ vanish separately at the corresponding
  point in phase space, or $t^\ast=0$ in which case
  $|X(0,t,x,v)|\leq 1$ so the characteristic starts in the
  region where the initial data for $f$ and $h$ coincide.

  Based on \eqref{diff_fh} we now aim for a Gronwall argument to show that
  $f-h=0$ on $I$. For this purpose we denote by $\|\cdot\|_t$ the sup norm
  of a function of $x$ (or of $x$ and $v$) where the sup extends only
  over $|x|=r \leq r^\ast(t)$. We consider $f-h$ only for times
  $t\leq T^{\ast\ast}<T^\ast$; on the compact interval $[0,T^{\ast\ast}]$
  both $f$ and $h$ exist and are bounded together with their metric
  components etc., and in what follows, constants denoted by
  $c$ may depend on these bounds. By adding and subtracting suitable
  terms in the right hand side of \eqref{diff_fh},
  \begin{align}\label{fh_diff}
    &\|f(t)-h(t)\|_t \leq \notag \\
    &\quad
    \int_0^t\left(\|\partial_r \beta(s)-\partial_r\beta_h(s)\|_s
    + \|a(s)-a_h(s)\|_s + \|f(s)-h(s)\|_s\right)\, ds;
  \end{align}
  the last term arises from $j=j-j_h$, and $\beta/r-\beta_h/r$ 
  was estimated by $\partial_r\beta - \partial_r\beta_h$.
  The differences of the various
  metric terms can now be handled exactly as we handled them for two
  consecutive iterates in Section~\ref{loop1conv}.
  As before, let $s\mapsto R(s,t,r)$ denote the unique solution
  of \eqref{charbeta_fh}
  with $R(t,t,r)=r$ and analogously for $R_h(s,t,r)$;
  these are the characteristics of both
  \eqref{ee-red3} and \eqref{ee-red4} for either $f$ or $h$.
  As we saw above, $R(s,t,r), R_h(s,t,r) \leq r^\ast(s)$ for $s<t$
  if $r\leq r^\ast(t)$.
  As in \eqref{R_ndiff} a direct Gronwall argument implies that
  \[
  |R(s,t,r) - R_h(s,t,r)|
  \leq c \int_0^t
  \|\beta (\tau)-\beta_h (\tau)\|_\tau d\tau,\ 0\leq s\leq t,\ r\leq r^\ast(t).
  \]
  Proceeding as for \eqref{beta_ndiff}, integration
  of \eqref{ee-red3} along characteristics, cf.~Lemma~\ref{lemmabetaeq},
  implies that
  \begin{align*}
    \|\beta(t)-\beta_h(t)\|_t
    &\leq
    c \int_0^t\Bigl(\|f(s)-h(s)\|_s + \|a(s)-a_h(s)\|_s \notag\\
    &\qquad\qquad {}+ \|\beta(s)-\beta_h(s)\|_s\Bigr)\, ds;
  \end{align*}
  note that $\mathring{\beta}(r) = \mathring{\beta}_h(r)$ for
  $r\leq 1$, and that when going backwards in time
  the characteristics $R$ and $R_h$ remain in $I$ if they
  start there at time $t$.
  Since by Lemma~\ref{lemmabetaeq},
  \[
  a(t,r)=\mathring{a}(R(0,t,r))
  - 4\pi \int_0^t \left(r a^2 j\right)(s,R(s,t,r))\, ds
  \]
  and $\mathring{a}(R(0,t,r))=1=a_h$ for $r\leq r^\ast(t)$ it follows that
  \[
  \|a(t) - a_h(t)\|_t \leq c \int_0^t \|f(s)-h(s)\|_s ds.
  \]
  In order to complete the Gronwall loop we need to also estimate
  the difference $\partial_r\beta - \partial_r\beta_h$, cf.~\eqref{fh_diff}.
  To achieve this we can proceed exactly as for the derivation
  of \eqref{drb_ndiff} and obtain the estimate
  \begin{align*}
    &\|\partial_r \beta(t) - \partial_r \beta_h(t)\|_t \\
    &\quad \leq
    c \int_0^t\left(\|f(s)-h(s)\|_s
    + \|\beta(s)-\beta_h(s)\|_s +
    \|\partial_r\beta(s)-\partial_r\beta_h(s)\|_s\right) ds.
  \end{align*}
  If we combine these estimates, Gronwall's lemma implies that
  $f=h$ on $I\times \R^3$ and for $t\leq T^{\ast\ast}$, and hence
  for $t < T^\ast$ as claimed.
\end{proof}

As a first application of the above homogeneous core result we
show that we recover (ii) in improved form.

\begin{lemma}\label{betapos}
  For $\epsilon$ sufficiently small,
  $\beta(t,r) >0$ for all $r>0$.
\end{lemma}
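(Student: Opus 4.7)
The plan is to integrate \eqref{ee-red3} along the characteristics of $\dot r = -\beta(s,r)$ and read off the sign. By Lemma~\ref{lemmabetaeq}(b), if $s \mapsto R(s,t,r)$ denotes the backward characteristic with $R(t,t,r) = r$, then
\[
\beta(t,r) = \mathring{\beta}(R(0,t,r)) + \int_0^t \left[\frac{1}{2R}(1-A) + 4\pi R\, p\right](s, R(s,t,r))\,ds.
\]
I will show that the initial-data term is strictly positive and the integrand is nonnegative, which gives $\beta(t,r) > 0$.

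For strict positivity of $\mathring{\beta}(R(0,t,r))$: since $\beta(s,0) = 0$, the constant curve $s \mapsto 0$ solves the characteristic ODE through the origin, so uniqueness (Lemma~\ref{lemmabetaeq}(a)) prevents any other characteristic from reaching $r = 0$, giving $R(0,t,r) > 0$ whenever $r > 0$. The explicit formula \eqref{beta0def} then yields $\mathring{\beta}(R(0,t,r)) > 0$, because $\mathring{\rho} \geq \alpha_\epsilon > 0$ on $[0,1]$ by \eqref{rho0s}, and hence $\int_0^{r_0} \mathring{\rho}(s)\,s^2\,ds > 0$ for every $r_0 > 0$.

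For nonnegativity of the integrand, $p \geq 0$ is immediate; using \eqref{Adef-red}, it remains to control the sign of $\frac{1}{2R}(1-A)(s,R) = \frac{4\pi}{R^2}\int_0^R (\rho - a\beta j)(s,\tau)\,\tau^2\,d\tau$. By Lemmas~\ref{jandbetaest} and \ref{simpleabeta}, $|a\beta j| \leq C\epsilon^{1/2}$ uniformly on the compact matter support, so the negative contribution is small. For the positive contribution, Lemma~\ref{homcore} identifies $\rho(s,\tau) = \rho_h(s)$ for $\tau \leq r^\ast(s)$, while Lemma~\ref{gamma_comp}, the normalization of $H$ in \eqref{Hdef}, and \eqref{rho_h} give a uniform lower bound $\rho_h(s) \geq c_0 > 0$ on $[0, 8/9]$; together with $r^\ast(s) \geq r_\ast > 0$ from \eqref{lowr} this yields $\int_0^R \rho\,\tau^2\,d\tau \geq (c_0/3)\min(R, r_\ast)^3$.

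The main (mild) obstacle is that the small-$R$ and moderate-$R$ regimes must be compared separately: for $R \leq r_\ast$ both the favorable core contribution and the $a\beta j$ correction scale like $R$, so one verifies that the constant $4\pi\rho_h/3$ dominates $C\epsilon^{1/2}$ for $\epsilon$ small; for $R \geq r_\ast$ the core contribution is at least $c/R^2$ while the correction is at most $C\epsilon^{1/2}/R^2$, again handled for $\epsilon$ small. Combining, $\beta(t,r) \geq \mathring{\beta}(R(0,t,r)) > 0$ for every $r > 0$ and every $t \in [0,T^\ast[$, improving bootstrap assumption~(ii).
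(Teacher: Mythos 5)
Your argument is correct, but it follows a genuinely different route from the paper. You integrate the evolution equation \eqref{ee-red3} along its own characteristics and show that the data term $\mathring{\beta}(R(0,t,r))$ is strictly positive (characteristics cannot reach $r=0$, and $\mathring\rho\geq\alpha_\epsilon>0$ on $[0,1]$) while the source $\frac{1-A}{2r}+4\pi r p$ is nonnegative: the only sign-indefinite piece is $-\frac{4\pi}{r^2}\int_0^r a\beta j\,s^2ds$, which is $\mathrm{O}(\epsilon^{1/2})\min(r,C)^3/r^2$ by Lemmas~\ref{jandbetaest} and \ref{charest}, and it is dominated by the core contribution $\frac{4\pi}{r^2}\int_0^{\min(r,r_\ast)}\rho_h\,s^2ds$ coming from Lemma~\ref{homcore} and the $\epsilon$-uniform lower bound $\rho_h\geq c_0$ (which indeed follows from \eqref{rho_h}, \eqref{Hdef} and $\gamma_{V,\epsilon}\leq 1$); your two-regime comparison in $r\lessgtr r_\ast$ closes the argument. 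The paper instead argues on the core directly ($\beta=\beta_h=\sigma_{V,\epsilon}r>0$) and, for $r\geq r_\ast$, combines \eqref{1overa1} with \eqref{betasquare} to get $r\beta^2\geq 2m_h(t,r_\ast)-C\epsilon^{1/2}>0$, concluding that $\beta$ cannot vanish there and hence stays positive by continuity. The ingredients (homogeneous core plus smallness of $j$) are the same; your version buys the explicit quantitative bound $\beta(t,r)\geq\mathring\beta(R(0,t,r))$ and avoids invoking the relation \eqref{Adef}/\eqref{betasquare} and the no-zero-crossing continuity step, whereas the paper's version yields a lower bound on $\beta^2$ away from the center with less pointwise sign bookkeeping. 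Both are valid proofs of the lemma.
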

\begin{proof}
  For $0<r\leq r_\ast$ the assertion follows by Lemma~\ref{homcore},
  since $\beta_h(t,r) = \sigma_{V,\epsilon}(t) r > 0$. Next we combine
  \eqref{1overa1} and \eqref{betasquare} to find that
  \begin{align}\label{rbetasq}
    r \beta(t,r)^2
    & =
    2 m(t,r)
    + 4 \pi r \left(\frac{1}{a}+1\right) \int_0^t(r j)(s,R(s,t,r))\, ds\notag\\
    &
    \quad {}- 8\pi \int_0^r (a \beta j)(t,s)\, s^2 ds\notag\\
    & \geq
    2 m(t,r) - 12 \pi r \int_0^t |r j|(s,R(s,t,r))\, ds - C \epsilon^{1/2},
  \end{align}
  where we used the bounds from Lemma~\ref{jandbetaest} and
  Lemma~\ref{charest}. The estimate
  \[
  R(s,t,r) = r + \int_s^t \beta(\tau,R(\tau,t,r))\, d\tau
  \geq r -(t-s) \geq r-1,\ 0\leq s\leq t, 
  \]
  together with part (b) of the former lemma imply in addition
  that $R(s,t,r)$ is outside the support of $j$ if $r$ is too large.
  Hence we can, for $r \geq  r_\ast$ and by using Lemma~\ref{simpleabeta}, 
  continue the estimate~\eqref{rbetasq} 
  to find that
  \begin{align*}
    r \beta(t,r)^2
    &\geq
    2 m(t,r) - C \epsilon^{1/2} \geq 2 m(t,r_\ast) - C \epsilon^{1/2}\\
    & = 2 m_h(t,r_\ast) - C \epsilon^{1/2} \geq C > 0,
  \end{align*}
  provided $\epsilon$ is sufficiently small;
  $m_h(t,r_\ast)$ is bounded from below
  by a positive constant independent of $\epsilon$ and $t$
  due to \eqref{rho_h} and the fact that
  $\gamma_{V,\epsilon} \leq 1$.
  The estimate above shows that $\beta$ cannot become zero
  for $r \geq  r_\ast$, and so it stays strictly positive. 
\end{proof}

We have now recovered the bootstrap assumptions (ii)-(iv) in improved form. For
(i) this requires more work.
\subsection{A formula for controlling $\partial_r\beta$}
In order to continue the argument and in particular in order to
recover (i) in improved form we will use the following relation
for $\partial_r\beta$.
\begin{lemma}\label{betarepchar}
  For $t\in [0,T^\ast[$ and $r\geq 0$,
  \begin{align*}
    \frac{d}{dt} \partial_r \beta(t,R(t,0,r))
    & = (\partial_r\beta)^2(t,R(t,0,r))
    - \frac{d}{dt}(4\pi r a j)(t,R(t,0,r)) \\
    & \quad
    {}+\left( 4 \pi (\rho - p + 2 p_T) + (4 \pi r a j)^2\right)(t,R(t,0,r))\\
    & \quad
    {} - \left(\frac{8\pi}{r^3}
    \int_0^r (\rho - a \beta j)(t,s)\, s^2 ds\right)(t,R(t,0,r))\\
    & \quad
    {}+\left(8\pi r a j \partial_r\beta\right)(t,R(t,0,r)) .
  \end{align*}
\end{lemma}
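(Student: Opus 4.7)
The strategy is to reduce everything to two ingredients: the $r$-derivative of \eqref{ee-red3} along the characteristic $s\mapsto R(s,0,r)$ of $\dot r=-\beta$, and a careful rewriting of the ``bad'' term $\int\frac{x\cdot v}{r}\frac{v}{\p0}\cdot\partial_x f\,dv$ as a total time-derivative along this same characteristic. First, the chain rule together with $\partial_r(\beta\partial_r\beta)=(\partial_r\beta)^2+\beta\partial_r^2\beta$ gives
\[
\frac{d}{dt}\partial_r\beta(t,R)
=\bigl(\partial_t\partial_r\beta - \beta\,\partial_r^2\beta\bigr)(t,R)
=(\partial_r\beta)^2(t,R)+\partial_r\bigl(\partial_t\beta-\beta\,\partial_r\beta\bigr)(t,R),
\]
and then \eqref{ee-red3} differentiated in $r$ turns the last term into $\partial_r\!\bigl((1-A)/(2r)\bigr)+4\pi\,\partial_r(rp)$ evaluated at $(t,R)$.

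Next I would work out each of these two pieces explicitly. Using the definition \eqref{Adef-red}, a direct differentiation yields
\[
\partial_r\!\left(\frac{1-A}{2r}\right)=4\pi(\rho-a\beta j)-\frac{8\pi}{r^3}\int_0^r(\rho-a\beta j)(t,s)\,s^2\,ds,
\]
which already accounts for two of the terms in the target formula. For the pressure term I would invoke the identity \eqref{pr}, which gives $4\pi\partial_r(rp)=4\pi(2p_T-p)+4\pi r\sigma$, with $\sigma\coloneqq\int\frac{x\cdot v}{r}\frac{v}{\p0}\cdot\partial_x f\,dv$. All the easy pieces are then in place; the entire difficulty has been packed into the single quantity $4\pi r\sigma$, which must be reexpressed without the forbidden factor $\partial_x f$.

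The main obstacle is therefore to recognize that $4\pi r\sigma$, when evaluated on the characteristic, is essentially a total time derivative of $4\pi r a j$. Since the actual solution satisfies the original Vlasov equation (Proposition~\ref{fullsystem}), Lemma~\ref{dtrhodtj} with $\tilde a=a,\tilde\beta=\beta,\tilde\jmath=j$ yields
\[
(\partial_t j-\beta\partial_r j)(t,R)=\left[\tfrac{2\beta}{r}j-\tfrac{\sigma}{a}+2\bigl(\partial_r\beta+4\pi raj\bigr)j\right](t,R),
\]
while \eqref{ee-red4} gives $(\partial_t a-\beta\partial_r a)(t,R)=-4\pi(ra^2j)(t,R)$. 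Combining these, $\frac{d}{dt}(raj)(t,R)=\dot R\,(aj)+R(\partial_t-\beta\partial_r)(aj)$ can be computed; the two $4\pi r^2a^2j^2$ contributions double, the $-\beta aj$ term combines with $2a\beta j$ to give $a\beta j$, and solving for $r\sigma$ produces
\[
4\pi r\sigma(t,R)=-\frac{d}{dt}(4\pi raj)(t,R)+\bigl[4\pi a\beta j+(4\pi raj)^2+8\pi raj\,\partial_r\beta\bigr](t,R).
\]

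Substituting this back and noting $4\pi(\rho-a\beta j)+4\pi(2p_T-p)+4\pi a\beta j=4\pi(\rho-p+2p_T)$, the six asserted terms assemble exactly. Beyond this algebraic bookkeeping the argument is routine; the one non-obvious maneuver is spotting that \eqref{ee-red4} must be used to absorb the $\partial_t a$ contribution so that $r\sigma$ collapses into the clean form $-\frac{d}{dt}(4\pi raj)+\cdots$, which is what makes the identity useful for the subsequent integration of $\partial_r\beta$ along characteristics.
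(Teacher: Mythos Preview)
Your proposal is correct and follows essentially the same route as the paper. Both arguments compute $\frac{d}{dt}\partial_r\beta(t,R)$ by differentiating \eqref{ee-red3} in $r$, compute $\frac{d}{dt}(4\pi r a j)(t,R)$ via \eqref{ee-red4} and Lemma~\ref{dtrhodtj}, and then combine; the only cosmetic difference is that you package the problematic piece as $\sigma=\int\frac{x\cdot v}{r}\frac{v}{\p0}\cdot\partial_x f\,dv$ and solve for $4\pi r\sigma$, whereas the paper leaves it as $4\pi r\partial_r p$ in one identity and $-4\pi r\bigl(\partial_r p+\tfrac{2}{r}(p-p_T)\bigr)$ in the other and lets these cancel upon addition.
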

\begin{proof}
  The field equations \eqref{ee-red3} and \eqref{ee-red4}
  together with \eqref{Adef-red} imply that
  \[
  \partial_t\beta-\beta\,\partial_r\beta =
  \frac{4\pi}{r^2}
  \int_0^r\left(\rho - a \beta j\right)(t,s)\,s^2 ds + 4\pi r p,
  \]
  and
  \[
  \partial_t a -\beta\, \partial_r a =
  - 4\pi r a^2 j .
  \]
  If we now observe that $\frac{d}{dt} R = -\beta(t,R)$ it follows that 
  \begin{align*}
    &
    \frac{d}{dt} \partial_r \beta (t,R(t,0,r))
    =\left(\partial_t\partial_r \beta - \beta \partial_r^2\beta\right)
    (t,R(t,0,r))\\
    & \qquad
    =\left(\partial_r\left(\partial_t \beta - \beta \partial_r\beta\right) +
    (\partial_r\beta)^2 \right) (t,R(t,0,r))\\
    & \qquad
    = (\partial_r\beta)^2 (t,R(t,0,r))
    - \left(\frac{8\pi}{r^3}\int_0^r (\rho - a \beta j)(t,s)\,
    s^2 ds\right)(t,R(t,0,r))\\
    & \qquad\qquad
    {} + 4\pi \left(\rho - a \beta j + p + r \partial_r p\right)(t,R(t,0,r)), 
  \end{align*}
  and
  \begin{align*}
    &
    \frac{d}{dt}\left(4\pi r a j\right) (t,R(t,0,r))\\
    & =
    \left( -4\pi \beta a j - (4\pi r a j)^2
    + 4 \pi r a (\partial_t j - \beta \partial_r j)
    \right)(t,R(t,0,r))\\
    & =
     \left(- 4\pi \beta a j - (4\pi r a j)^2\right)(t,R(t,0,r))\\
    &\
    + \left(4 \pi r a \left(\frac{2\beta}{r} j
    - \frac{1}{a}(\partial_r p +\frac{2p-2p_T}{r})
    +2 (\partial_r\beta + 4\pi r a j) j\right) \right)(t,R(t,0,r));
  \end{align*}
  the last equality is obtained by inserting the second identity from
  Lemma~\ref{dtrhodtj} with $a$, $\beta$, and $j$
  instead of $\tilde a$, $\tilde b$,
  and $\tilde\jmath$. When we combine the previous
  two identities several terms cancel 
  and the assertion follows.
\end{proof}

We use this result to prove that $\partial_r\beta$ remains bounded;
notice that this is necessary in view of the continuation criterion
in our local existence result, Theorem~\ref{locex},
and that the bootstrap assumption (i) only provides a bound from above.

\begin{lemma}\label{betaprbound}
  There exists a constant $C>0$ such that for $\epsilon>0$ sufficiently small,
  $\|\partial_r \beta(t)\| \leq C$ for $t\in [0,T^\ast[$.
\end{lemma}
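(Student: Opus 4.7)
The plan is to integrate the identity of Lemma~\ref{betarepchar} along the characteristic curves $s\mapsto R(s,0,r)$ of $\dot r=-\beta$, and combine the resulting ODE with the one-sided bound supplied by bootstrap assumption~(i). Since $\beta(t,0)=0$ and $\beta>0$ on $]0,\infty[$ by Lemma~\ref{betapos}, the map $r\mapsto R(t,0,r)$ is a continuous, strictly monotone bijection of $[0,\infty[$ onto itself for each $t\in[0,T^\ast[$; it therefore suffices to control $\partial_r\beta(t,R(t,0,r))$ uniformly in $r\geq 0$.

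Fix $r\geq 0$ and set
\[
W(s)\coloneqq\partial_r\beta(s,R(s,0,r))+4\pi(raj)(s,R(s,0,r)),\quad s\in[0,t].
\]
Absorbing the total time derivative $\tfrac{d}{dt}(4\pi raj)$ that appears in Lemma~\ref{betarepchar} into $W$ transforms that identity into
\[
\dot W(s)=(\partial_r\beta)^2(s,R(s,0,r))+G(s),
\]
where $G$ gathers the remaining algebraic terms. Lemma~\ref{jandbetaest} gives $\|j\|,\|p\|,\|p_T\|=\mathrm{O}(\epsilon^{1/2})$ and $\|\rho\|\leq C$, while Lemma~\ref{charest} bounds $R(s,0,r)$ uniformly on the support of $j$. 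The mean-density contribution is controlled by $\frac{8\pi}{\tilde r^{3}}\int_{0}^{\tilde r}|\rho-a\beta j|\,s^{2}\,ds\leq \tfrac{8\pi}{3}\|\rho-a\beta j\|$, which stays bounded even as $\tilde r\to 0$. The only term that could grow with $\partial_r\beta$ is $8\pi raj\cdot\partial_r\beta$, but the $\epsilon^{1/2}$ smallness of $raj$ keeps this at most $C\epsilon^{1/2}|\partial_r\beta|$. Writing $\partial_r\beta=W+\mathrm{O}(\epsilon^{1/2})$, I end up with the pointwise inequality
\[
\dot W(s)\geq W(s)^{2}-C\bigl(1+\epsilon^{1/2}|W(s)|\bigr).
\]

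At $s=0$ we have $\mathring a=1$ and $\mathring\jmath=0$ (since $\mathring f$ is even in $v$), so $W(0)=\partial_r\mathring\beta(r)$; a direct computation based on \eqref{beta0def}, splitting the regions $r\in[0,1]$, $r\in[1,1+\epsilon]$, and $r>1+\epsilon$, yields $\|\partial_r\mathring\beta\|\leq C$ uniformly in $\epsilon$. Choose $M>0$, independent of $\epsilon$, $r$, and $t$, with $M\geq\|\partial_r\mathring\beta\|+1$ and $M^{2}>2(CM+C)$. If $W(s)=-M$ at a first time $s_\ast\in\,]0,t]$, then $\dot W(s_\ast)\leq 0$ by minimality, but the displayed inequality gives $\dot W(s_\ast)\geq M^{2}-CM-C>0$, a contradiction. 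Hence $W\geq -M$ on $[0,t]$, and consequently $\partial_r\beta(t,R(t,0,r))\geq -M-C\epsilon^{1/2}$. Together with the upper bound $\partial_r\beta<\partial_r\beta_D=\sigma_D\leq\sigma_D(8/9)=6$ coming from bootstrap~(i), this delivers $\|\partial_r\beta(t)\|\leq C$ uniformly on $[0,T^\ast[$.

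The principal obstacle is to verify that each summand in $G$ is either genuinely bounded (the $\rho$, $p$, $p_T$ and mean-density terms) or carries a factor of $\epsilon^{1/2}$ (the $raj$-containing ones), so that the manifestly nonnegative $(\partial_r\beta)^2$ on the right-hand side dominates for $|W|$ large. That sign is the restoring mechanism: without it, the $W^{2}$ term would threaten finite-time blow-up, but here it is precisely what rules out an unboundedly negative excursion and produces the missing lower bound.
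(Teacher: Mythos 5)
Your argument is correct and rests on the same backbone as the paper's proof: the identity of Lemma~\ref{betarepchar} along the characteristics $R(\cdot,0,r)$, the $\mathrm{O}(\epsilon^{1/2})$ smallness of $j$, $p$, $p_T$ and of $raj$ (via Lemma~\ref{jandbetaest} and the support bound of Lemma~\ref{charest}), the crude bound $\frac{8\pi}{r^3}\int_0^r|\rho-a\beta j|s^2ds\leq\frac{8\pi}{3}\|\rho-a\beta j\|$, and bootstrap assumption (i) for the upper bound. Where you differ is in how the differential inequality is closed. The paper simply discards the nonnegative term $(\partial_r\beta)^2$, integrates in time (the $\frac{d}{dt}(4\pi raj)$ term integrating to an $\mathrm{O}(\epsilon^{1/2})$ boundary contribution since $\mathring\jmath=0$), uses (i) to convert $-\partial_r\beta\leq C+\int_0^t|\partial_r\beta|$ into a bound on $|\partial_r\beta|$, and concludes with Gronwall. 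You instead absorb $4\pi raj$ into the unknown $W$ (the same device the paper uses later for Lemma~\ref{betaprdifference}), keep the quadratic term, and run a first-crossing barrier argument for the Riccati-type inequality $\dot W\geq W^2-C(1+\epsilon^{1/2}|W|)$; this avoids Gronwall altogether and produces the lower bound with a constant depending only on $\|\partial_r\mathring\beta\|$ and the matter bounds, with (i) entering only through the upper bound $\partial_r\beta<\partial_r\beta_D\leq 6$. Both routes are sound; yours exploits the sign of $(\partial_r\beta)^2$ actively as a restoring term, while the paper's is slightly shorter because it never has to track the absorbed $raj$ term or verify the barrier inequality. Two small points you gloss over but which do hold: the uniform-in-$\epsilon$ bound $\|\partial_r\mathring\beta\|\leq C$ also requires the lower bound $\partial_r\mathring\beta\geq-\sqrt{\mathring m/(2r^3)}\geq-C$ (the paper's Lemma~\ref{ba_start} only records the upper bound), and the surjectivity of $r\mapsto R(t,0,r)$ follows from invertibility of the flow (inverse $R(0,t,\cdot)$), which is exactly how the paper passes from the bound along characteristics to the bound at arbitrary $(t,r)$.
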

\begin{proof}
  Lemma~\ref{betarepchar} and the bounds established in
  Lemma~\ref{jandbetaest}~(a)
  imply that
  \[
  \frac{d}{dt} \partial_r \beta(t,R(t,0,r)) \geq
  - \frac{d}{dt}(4\pi r a j)(t,R(t,0,r)) - C -
  |\partial_r\beta(t,R(t,0,r))|;
  \]
  notice that the term which is quadratic in $\partial_r\beta$
  has the right sign.
  Integrating this estimate implies that
  \[
  - \partial_r \beta(t,R(t,0,r)) \leq C +
  \int_0^t |\partial_r\beta(s,R(s,0,r))|\, ds
  \]
  which together with the bootstrap assumption (i) and the boundedness of
  $\partial_r\beta_D$ on $[0,T^\ast[$ implies that
  \[
  |\partial_r \beta(t,R(t,0,r))|
  \leq C + \int_0^t |\partial_r\beta(s,R(s,0,r))|\, ds.
  \]
  By Gronwall's lemma,
  $|\partial_r \beta(t,r)|=|\partial_r \beta(t,R(t,0,R(0,t,r)))| \leq C$.
\end{proof}
  
We can use the lemma above to establish a sharper bound on the spatial support
of the solution. This is not strictly necessary,
but it will yield the nice feature
that all the mass will be contained in the black
hole for $t$ sufficiently large.

\begin{lemma}\label{rsuppest}
  There exists some constant $C>0$ such that
  at time $t\in [0,T^\ast[$ the spatial
  support of the solution is contained in the interval
  $[0,r^\ast(t) + C\epsilon^{1/5}]$; cf.~\eqref{rstar} for the definition of $r^\ast(t)$.
\end{lemma}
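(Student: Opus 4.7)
The plan is to interpose a Schwarzschild-type barrier $R(t)$ between the outer boundary of the matter support and $r^*(t)$, and then to control $R(t)-r^*(t)=\mathrm{O}(\epsilon^{1/5})$ by explicit integration. Let $M=\int\mathring{\rho}\,dx$ be the conserved ADM mass. From \eqref{rho0s} and $\alpha_\epsilon=(1-\epsilon^{1/5})/(6\pi)$ one computes $M=\tfrac{2}{9}(1-\epsilon^{1/5})+\mathrm{O}(\epsilon)$, so in particular $|M-M_D|\leq C\epsilon^{1/5}$ with $M_D=2/9$. For a constant $C_1>0$ to be fixed below, let $R$ solve
\[
\dot R=-\sqrt{2M/R}+C_1\epsilon^{1/2},\qquad R(0)=1+\epsilon,
\]
which stays in a compact subinterval of $\,]0,\infty[\,$ throughout $[0,8/9]$ for $\epsilon$ small.

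First I would show that $R_{\max}(t):=\sup\{|X(t,0,z)|:z\in\supp\mathring{f}\}\leq R(t)$ on $[0,T^*[$. Set $T_0=\sup\{t\in[0,T^*[\,:\,R_{\max}(s)\leq R(s)\text{ for all }s\leq t\}$ and suppose for contradiction that $T_0<T^*$. By continuity $R_{\max}(T_0)=R(T_0)$, and by compactness of $\supp\mathring{f}$ there is $z^*$ with $|X(T_0,0,z^*)|=R(T_0)$. For $s\leq T_0$ the support of $\rho(s)$ is contained in $[0,R(s)]$, so $m(T_0,R(T_0))=M$, and Lemma~\ref{simpleabeta} yields
\[
\beta(T_0,R(T_0))=\sqrt{2M/R(T_0)}+\mathrm{O}(\epsilon^{1/2}).
\]
Using the bootstrap bound $a>1/2$ together with $|v(T_0)|\leq C\epsilon^{1/2}$ from Lemma~\ref{charest} in the radial characteristic equation gives
\[
\dot{|X|}(T_0,0,z^*)\leq\frac{|v|}{a\langle v\rangle}-\beta(T_0,R(T_0))\leq-\sqrt{2M/R(T_0)}+C'\epsilon^{1/2},
\]
whereas $\dot R(T_0)=-\sqrt{2M/R(T_0)}+C_1\epsilon^{1/2}$. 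Choosing $C_1>C'$ forces $\dot{|X|}(T_0,0,z^*)<\dot R(T_0)$; the same inequality applies to any other $z$ with $|X(T_0,0,z)|=R(T_0)$, and by continuity every remaining characteristic lies strictly inside $R$. Hence $R_{\max}(t)<R(t)$ on some right-neighborhood of $T_0$, contradicting the definition of $T_0$.

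It remains to compare $R(t)$ with $r^*(t)$. Writing $\tfrac{d}{dt}R^{3/2}=-\tfrac{3}{2}\sqrt{2M}+\tfrac{3}{2}C_1\epsilon^{1/2}\sqrt{R}$ and integrating,
\[
R(t)^{3/2}=(1+\epsilon)^{3/2}-\tfrac{3}{2}\sqrt{2M}\,t+\mathrm{O}(\epsilon^{1/2})=1-t+\tfrac{t}{2}\epsilon^{1/5}+\mathrm{O}(\epsilon^{2/5}),
\]
where I used $\tfrac{3}{2}\sqrt{2M}=1-\tfrac{1}{2}\epsilon^{1/5}+\mathrm{O}(\epsilon^{2/5})$ from the mass expansion above. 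Direct Taylor expansion of \eqref{rstar} gives
\[
r^*(t)^{3/2}=\gamma_D(t)^{3/2}\bigl[1-3\epsilon^{1/4}(1-(1-t)^{1/3})\bigr]^{3/2}=(1-t)+\mathrm{O}(\epsilon^{1/4})
\]
uniformly on $[0,8/9]$. Subtracting and observing that $\epsilon^{1/4}=o(\epsilon^{1/5})$ as $\epsilon\downarrow 0$ produces $R(t)^{3/2}-r^*(t)^{3/2}=\mathrm{O}(\epsilon^{1/5})$. Since both $R$ and $r^*$ are bounded below by $r_*/2>0$ from \eqref{lowr} on $[0,T^*[\subset[0,8/9]$, the mean-value theorem converts this into $R(t)-r^*(t)\leq C\epsilon^{1/5}$, completing the proof.

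The main obstacle is the self-referential nature of the barrier step: invoking Lemma~\ref{simpleabeta} to obtain $\beta(T_0,R(T_0))\approx\sqrt{2M/R(T_0)}$ requires the mass identity $m(T_0,R(T_0))=M$, which is itself the inclusion one is trying to establish. Defining $T_0$ as the first breakdown time keeps the mass identity available precisely at the critical instant and lets the strict ordering $\dot{|X|}<\dot R$ close the bootstrap.
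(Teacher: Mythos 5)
Your barrier strategy is genuinely different from the paper's argument (which never introduces a Schwarzschild-type envelope: it compares matter characteristics directly with $r^\ast$ by using Lemma~\ref{homcore} to identify $\beta(s,r^\ast(s))=\beta_h(s,r^\ast(s))$, the bound on $\partial_r\beta$ from Lemma~\ref{betaprbound} plus the mean value theorem to control $\beta$ slightly outside the core, Lemma~\ref{gamma_comp} to replace $\beta_h$ by $\beta_D$ up to $C\epsilon^{1/5}$, and then Gronwall). Your route is viable in outline, but it has one genuine gap, and it is exactly at the point you flag as the ``main obstacle'': the identity $m(T_0,R(T_0))=M$. The first-breakdown-time definition of $T_0$ only gives you $\supp\rho(T_0)\subset[0,R(T_0)]$, hence $m(T_0,R(T_0))=m(T_0,\infty)$ with $m$ as in \eqref{mdef}; it does \emph{not} identify $m(T_0,\infty)$ with $M=\int\mathring\rho$. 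That identification presupposes conservation of $\int\rho\,dx$, which is false for this system in these variables: the conserved quantity read off at infinity is $4\pi\int_0^\infty(\rho-a\beta j)s^2\,ds$, and indeed the computation \eqref{pdem} in the proof of Lemma~\ref{massalongchar} shows that $m$ is only approximately transported, with errors of size $\epsilon^{1/2}$. So as stated the key input $\beta(T_0,R(T_0))=\sqrt{2M/R(T_0)}+\mathrm{O}(\epsilon^{1/2})$ is unproven.

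The gap is repairable, because your barrier has $\mathrm{O}(\epsilon^{1/2})$ slack and you only need $m(T_0,R(T_0))\geq M-C\epsilon^{1/2}$. One clean way: by Lemma~\ref{charest} all matter stays in $\{r\leq C\}$, so at any fixed $r_1>C$ the spacetime is vacuum for all times; \eqref{ee1} then gives $\partial_t\bigl(r_1(1-A)(t,r_1)\bigr)=0$, so $r_1(1-A)(t,r_1)=r_1(1-\mathring A(r_1))=2M$ (using $\mathring\jmath=0$), and by \eqref{Adef-red}, $m(t,r_1)=M+4\pi\int_0^{r_1}(a\beta j)(t,s)s^2\,ds=M+\mathrm{O}(\epsilon^{1/2})$ by Lemma~\ref{jandbetaest}; since $\supp\rho(T_0)\subset[0,R(T_0)]\subset[0,r_1]$ this gives $m(T_0,R(T_0))=M+\mathrm{O}(\epsilon^{1/2})$ (alternatively one can argue via Lemma~\ref{massalongchar} along an exterior $\beta$-characteristic). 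With this inserted, and citing Lemma~\ref{betapos} to take the positive root in Lemma~\ref{simpleabeta}, your first-crossing argument and the subsequent explicit integration comparing $R(t)^{3/2}$ with $r^\ast(t)^{3/2}$ from \eqref{rstar} (both bounded away from zero on $[0,\frac89]$, cf.~\eqref{lowr}) do close, since the leading discrepancy $\tfrac{t}{2}\epsilon^{1/5}$ dominates the $\mathrm{O}(\epsilon^{1/4})+\mathrm{O}(\epsilon^{2/5})+\mathrm{O}(\epsilon^{1/2})$ errors.
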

\begin{proof}
  Let $s \mapsto (x(s),v(s))$ denote a characteristic of
  the Vlasov equation which
  runs in $\supp f$, and let $r(s)=|x(s)|$. We aim to show that
  $r(s) \leq r^\ast(s) + C\epsilon^{1/5}$ for all $s$.
  If $r(t) \leq r^\ast(t)$ at some
  time $t$ the assertion holds trivially for $s=t$.
  If $r(t) > r^\ast(t)$ at some time $t$, then we define $t^\ast\in[0,t[$
  minimal with the property that $r(s) > r^\ast(s)$ on $]t^\ast,t]$.
  Since
  \[
  \dot r(s) \leq C \epsilon^{1/2} - \beta(s,r(s)),
  \]
  we need an estimate for $\beta(s,r)$ from below for $r\geq r^\ast(s)$. 
  By Lemma~\ref{betaprbound} and the mean value theorem,
  \[
  \beta(s,r) \geq \beta(s, r^\ast(s)) - C (r- r^\ast(s))
  = \beta_h(s, r^\ast(s)) - C (r- r^\ast(s))
  \]
  for  $r\geq r^\ast(s)$. Hence on $]t^\ast,t]$,
  \begin{align*}
    \frac{d}{ds} (r-r^\ast)(s)
    &
    \leq C \epsilon^{1/4} - \beta_h(s,r^\ast(s))
    + C (r- r^\ast)(s) + \beta_D(s,r^\ast(s))\\
    &
    \leq C \epsilon^{1/5} + C (r- r^\ast)(s);
  \end{align*}
  in the last estimate we used the fact that by Lemma~\ref{gamma_comp},
  \begin{align*}
    |\beta_D(s,r^\ast(s))-\beta_h(s, r^\ast(s))|
    &
    \leq |\sigma_D(s)-\sigma_{V,\epsilon}(s)|\\
    &
    \leq |\sigma_D(s)-\sigma_{D,\epsilon}(s)| + C \epsilon \leq C \epsilon^{1/5},
  \end{align*}
  where we recall the explicit form of $\sigma$ in the dust case,
  cf.~\eqref{sigma_dust},
  and choose $\epsilon$ sufficiently small. This implies that on $]t^\ast,t]$,
  \[
  (r-r^\ast)(s) \leq (r-r^\ast)(t^\ast)
  + C \epsilon^{1/5} + C \int_{t^\ast}^s (r- r^\ast)(\tau)\, d\tau.
  \]
  If $t^\ast=0$ then $(r-r^\ast)(t^\ast)\leq\epsilon$, cf.\ the assumptions
  on the data introduced in \eqref{fdata_ndust}, if
  $t^\ast>0$ then $(r-r^\ast)(t^\ast)=0$, and in both cases Gronwall's lemma
  completes the proof.
\end{proof}

Eventually we need to recover the bootstrap assumption (i) in an improved form.
To this end we now use Lemma~\ref{betarepchar} to analyze the difference
$\partial_r \beta - \partial_r\beta_D$.

\begin{lemma}\label{betaprdifference}
  There exists a constant $C>0$ such that for all $\epsilon$ sufficiently small,
  $t\in [0,T^\ast[$, and $r\geq 0$,
  \begin{align*}
    &\partial_r\beta(t,R(t,0,r)))
    \leq
    \partial_r\beta_D(t,R(t,0,r))) - C \epsilon^{1/5} \\
    &\qquad
    +
    \int_0^t
    \exp\left(\int_s^t(\partial_r\beta + \partial_r \beta_D)(\tau,R(\tau,0,r))
    \,d\tau\right)\,
    d(s,R(s,0,r))\,ds,
  \end{align*}
  where
  \[
  d(s,r)
  = \left(\frac{2 m_D}{r^3}-\frac{2 m}{r^3}
  - 4\pi \rho_D + 4\pi \rho\right)(s,r).
  \]
\end{lemma}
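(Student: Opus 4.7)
The plan is to derive and integrate a linear ODE for the difference
\[
F(t) \coloneqq \partial_r\beta(t,R(t,0,r)) - \partial_r\beta_D(t,R(t,0,r))
\]
along the characteristic $R$ of $\beta$. First I would apply Lemma~\ref{betarepchar} verbatim to obtain the evolution of $\partial_r\beta(t,R(t,0,r))$. For the dust side, I would exploit the fact that for the purely homogeneous, marginally bound dust solution of Section~\ref{subsec_homdust} one has $\beta_D(t,r)=\sigma_D(t)\,r$, $\rho_D=\rho_D(t)$, $m_D(t,r)=\frac{4\pi}{3}\rho_D(t)\,r^3$, and hence $\partial_r\beta_D\equiv\sigma_D(t)$ is independent of $r$. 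In particular $\frac{d}{dt}\partial_r\beta_D(t,R(t,0,r))=\dot\sigma_D(t)$, and the dust specialisation of Lemma~\ref{betarepchar} (formally set $a=1$, $j=p=p_T=0$) gives $\dot\sigma_D = \sigma_D^2 + 4\pi\rho_D - 2m_D/r^3$; this identity is $r$-independent because the constraint $\sigma_D^2 = 8\pi\rho_D/3 = 2m_D/r^3$ collapses two of its terms. This is precisely why the comparison must be made to the untruncated homogeneous dust rather than to the Oppenheimer--Snyder profile: an inhomogeneous background would leave a stray $(\beta_D-\beta)\partial_r^2\beta_D$ term from the mismatch between the characteristics of $\beta$ and of $\beta_D$.

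Subtracting the two identities, the quadratic piece $(\partial_r\beta)^2-\sigma_D^2$ factors as $(\partial_r\beta+\partial_r\beta_D)\,F$, the density/mass terms consolidate into precisely $d(t,R(t,0,r))$ as defined in the statement, and the remainder is
\[
\mathcal{E}(t) \coloneqq \Bigl[4\pi(2p_T-p) + (4\pi raj)^2 + 8\pi raj\,\partial_r\beta + \tfrac{8\pi}{r^3}\!\int_0^r a\beta j\,s^2 ds\Bigr](t,R(t,0,r))
\]
together with the total-derivative contribution $-\frac{d}{dt}(4\pi raj)(t,R(t,0,r))$. Each term collected in $\mathcal{E}$ is uniformly $O(\epsilon^{1/2})$ on $[0,T^\ast[$ by Lemma~\ref{jandbetaest}(a), the bound $|\beta|\leq Cr$ from Lemma~\ref{jandbetaest}(b), the spatial-support bound of Lemma~\ref{rsuppest}, and the upper bound on $|\partial_r\beta|$ of Lemma~\ref{betaprbound}.

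The resulting linear ODE is then solved by the integrating factor
\[
E(t,s)\coloneqq\exp\!\Bigl(\int_s^t(\partial_r\beta+\partial_r\beta_D)(\tau,R(\tau,0,r))\,d\tau\Bigr),
\]
giving
\[
F(t) = E(t,0)\,F(0) + \int_0^t E(t,s)\Bigl[d(s,R)+\mathcal{E}(s)-\tfrac{d}{ds}(4\pi raj)(s,R)\Bigr]\,ds.
\]
For the total-derivative term, one integration by parts produces boundary contributions at $s=0$ and $s=t$: the first vanishes because the data \eqref{fdata_ndust} is even in $v$, hence $\mathring{\jmath}\equiv 0$; the second equals $-(4\pi raj)(t,R)=O(\epsilon^{1/2})$; and the interior piece reduces, via $\partial_s E(t,s)=-E(t,s)(\partial_r\beta+\partial_r\beta_D)(s,R)$, to something of order $\epsilon^{1/2}$ as well. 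Combined with $\int_0^t E(t,s)\mathcal{E}(s)\,ds$, the full ``error'' contribution is $O(\epsilon^{1/2})$. On the other hand, $E(t,s)$ is bounded above and below by positive constants on $[0,T^\ast[\subset[0,8/9]$ (using Lemma~\ref{betaprbound} together with the explicit formula $\sigma_D(t)=2/(3(1-t))$), and the initial estimate \eqref{initbetadiff} gives $F(0)\leq-\tfrac{1}{3}\epsilon^{1/5}$ for \emph{all} $r\geq 0$ (trivially also where $\mathring{\rho}$ vanishes, since there $\partial_r\mathring\beta\leq 0$). Hence $E(t,0)F(0)\leq -c\,\epsilon^{1/5}$, and since $\epsilon^{1/2}=o(\epsilon^{1/5})$ the error is absorbed into the constant, delivering the claimed inequality.

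The principal obstacle I anticipate is bookkeeping: one must check that every term produced by subtracting Lemma~\ref{betarepchar} against the dust identity either (i) fits into the quadratic factor $(\partial_r\beta+\partial_r\beta_D)F$, (ii) is exactly the asserted $d(s,R)$, or (iii) carries a factor of $\epsilon^{1/2}$ from Lemma~\ref{jandbetaest}; and one must verify that the only term that is \emph{not} pointwise $O(\epsilon^{1/2})$---namely $\frac{d}{dt}(4\pi raj)(t,R)$---can be handled by a single integration by parts, a maneuver which succeeds precisely because the data are chosen even in $v$ so that $\mathring\jmath\equiv 0$.
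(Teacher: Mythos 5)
Your proposal is correct and follows essentially the same route as the paper: subtract a dust analogue of Lemma~\ref{betarepchar} (legitimate along the $\beta$-characteristic because $\partial_r^2\beta_D=0$, i.e.\ $\partial_r\beta_D$ is $r$-independent), identify the quadratic factor $(\partial_r\beta+\partial_r\beta_D)$ and the term $d$, solve by variation of constants, bound all remaining terms by $C\epsilon^{1/2}$ via Lemmas~\ref{jandbetaest}, \ref{betaprbound} and the support bounds, and use \eqref{initbetadiff} together with $\mathring\jmath=0$ to produce the $-C\epsilon^{1/5}$ term. The only difference is organizational and algebraically equivalent: the paper absorbs $4\pi r a j$ into the unknown $\partial_r\beta-\partial_r\beta_D+4\pi raj$ before integrating, whereas you keep the total derivative $-\frac{d}{ds}(4\pi raj)$ and remove it afterwards by an integration by parts.
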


\begin{proof}
  We derive a formula analogous to Lemma~\ref{betarepchar}
  for $\partial_r\beta_D$.
  The derivation is much simpler, since in the dust case, $j=p=p_T=0$. 
  It is important to note that we again use the characteristic $R(t,0,r)$
  belonging
  to $\beta$ instead of the corresponding dust characteristic. But since
  in addition $\partial_r^2\beta_D =0$ this causes no problem:
  \begin{align*}
    &
    \frac{d}{dt} \partial_r \beta_D (t,R(t,0,r))
    =\left(\partial_t\partial_r \beta_D - \beta \partial_r^2\beta_D\right)
    (t,R(t,0,r))\\
    & \qquad
    =\left(\partial_r\left(\partial_t \beta_D
    - \beta_D \partial_r\beta_D\right) +
    (\partial_r\beta_D)^2 \right) (t,R(t,0,r))\\
    & \qquad
    = (\partial_r\beta_D)^2 (t,R(t,0,r))
    - \left(\frac{8\pi}{r^3}\int_0^r \rho_D (t,s)\, s^2 ds
    - 4\pi \rho_D\right)(t,R(t,0,r)). 
  \end{align*}
  If we combine this with the formula from Lemma~\ref{betarepchar}
  it follows that
   \begin{align}\label{diffeqbetadiff}
    &
     \frac{d}{dt} (\partial_r \beta-\partial_r \beta_D +4\pi r a j)
     (t,R(t,0,r)) =
      d(t,R(t,0,r)) + e(t)\notag \\
     &\qquad
     {} + (\partial_r\beta + \partial_r \beta_D)
     (\partial_r\beta - \partial_r \beta_D +4\pi r a j)(t,R(t,0,r)),
   \end{align}
   where
   \begin{align*}
     e(t)
     & =   
     \left( 4 \pi (- p + 2 p_T) + (4 \pi r a j)^2 +
     4\pi r a j (\partial_r\beta-\partial_r\beta_D)\right)(t,R(t,0,r))\\
     & \quad
     {}+\left(\frac{8\pi}{r^3}\int_0^r (a \beta j)(t,s)\, s^2 ds\right)
     (t,R(t,0,r)),
   \end{align*}
   in particular, $|e(t)| \leq C \epsilon^{1/2}$,
   cf.\ Lemma~\ref{jandbetaest}~(a)
   and Lemma~\ref{betaprbound}.
   We solve \eqref{diffeqbetadiff} using variation of
   constants to find that
   \begin{align*}
    &
     (\partial_r \beta-\partial_r \beta_D +4\pi r a j) (t,R(t,0,r)) \\
     &
     = \exp\left(\int_0^t(\partial_r\beta + \partial_r \beta_D)(s,R(s,0,r))
     \, ds\right)
     (\partial_r\mathring\beta  - \partial_r \mathring\beta_D)(r)\\
     &\quad
     {}+ \int_0^t 
     \exp\left(\int_s^t(\partial_r\beta + \partial_r \beta_D)(\tau,R(\tau,0,r))
     \,d\tau\right)
     \left(d(s,R(s,0,r)) + e(s)\right)\,ds;
   \end{align*}
   note that $j$ vanishes initially.
   By Lemma~\ref{betaprbound} the argument of the
   exponential function is bounded,
   $e(s)$ and the term with $j$ on the left hand side are bounded by
   $C \epsilon^{1/2}$, but the initial data term is smaller that
   $-\frac{1}{3}\epsilon^{1/5}$,
   cf.~\eqref{initbetadiff}.
   Choosing $\epsilon$ small enough completes the proof.
\end{proof}

Notice that by replacing $r$ by $R(0,t,\tilde r)$ so that
$R(t,0,r)=\tilde r$ and
$R(\tau,0,r)=R(\tau,t,\tilde r)$ we obtain, after dropping the
tildes, an analogous formula
with $r$ instead of $R(t,0,r)$ on the left hand side and the
corresponding substitution
on the right hand side.
We will see below that the term $d$ in Lemma~\ref{betaprdifference} is negative
so that the bootstrap assumption can be recovered in improved form, but in order
to analyze $d$ we need information on the evolution of $\rho$ and $m$.
\subsection{The evolution of $\rho$ and $m$}
It turns out that in order to obtain the required information on $\rho$ and $m$
we must start with the latter quantity.
\begin{lemma}\label{massalongchar}
  For $t\in [0,T^\ast[$ and $r\geq 0$,
  \[
  m(t,R(t,0,r)) = \mathring{m}(r) + \mathrm{O}(\epsilon^{1/2}).
  \]
\end{lemma}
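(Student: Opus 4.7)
\noindent\textbf{Proof plan for Lemma~\ref{massalongchar}.}
My plan is to differentiate $m(t,R(t,0,r))$ with respect to $t$, use Lemma~\ref{dtrhodtj} to replace $\partial_t\rho$, integrate by parts in the radial variable so as to eliminate derivatives of $\rho$ and $j$, and observe that the ``dust-like'' terms cancel, leaving only an $O(\epsilon^{1/2})$ remainder which integrates to the claimed bound.

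More precisely, since $\frac{d}{dt}R(t,0,r) = -\beta(t,R(t,0,r))$ and $\partial_r m = 4\pi r^2\rho$, the chain rule yields
\begin{align*}
  \frac{d}{dt} m(t,R(t,0,r))
  &= 4\pi \int_0^{R(t,0,r)} \partial_t \rho(t,s)\, s^2\, ds
  - 4\pi R^2 \beta\rho\bigr|_{r=R(t,0,r)}.
\end{align*}
Into the integral I substitute the identity from Lemma~\ref{dtrhodtj} (applied with $\tilde a=a$, $\tilde\beta=\beta$, $\tilde\jmath=j$). Integration by parts transforms $\int_0^R\beta\,\partial_r\rho\, s^2\,ds$ into a boundary term $\beta(R)\rho(R)R^2$ (which cancels the $-4\pi R^2\beta\rho$ above) plus interior terms $-\int\rho\partial_r\beta\, s^2$ and $-2\int\rho\beta s$; similarly, integration by parts on $\int_0^R\frac{1}{a}\partial_r j\, s^2\,ds$, using $j(t,0)=0$, produces a boundary term and interior terms that partially combine with $-\int_0^R\frac{2}{ra}j\cdot s^2\,ds$. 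After collecting, all surviving contributions are of the form
\begin{equation*}
  \frac{d}{dt} m(t,R(t,0,r))
  = \mathcal{I}\bigl[p,p_T,j,\partial_r a,\partial_r\beta,a,\beta,R\bigr],
\end{equation*}
where \emph{every} term in $\mathcal{I}$ carries at least one factor from $\{j,\,p,\,p_T\}$.

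By Lemma~\ref{jandbetaest}\,(a) each such factor is bounded by $C\epsilon^{1/2}$, while the remaining factors ($\partial_r\beta$ by Lemma~\ref{betaprbound}, $a$ and $\beta$ by Lemma~\ref{simpleabeta} and Lemma~\ref{jandbetaest}\,(b), and $R(t,0,r)$ by Lemma~\ref{charest}) are uniformly bounded on $[0,T^\ast[$. Hence $\left|\frac{d}{dt}m(t,R(t,0,r))\right|\leq C\epsilon^{1/2}$, and since $t\leq 8/9$ on $[0,T^\ast[$, integration from $0$ to $t$ gives
\begin{equation*}
  m(t,R(t,0,r)) = m(0,R(0,0,r)) + O(\epsilon^{1/2})
  = \mathring m(r) + O(\epsilon^{1/2}).
\end{equation*}
The main obstacle is purely bookkeeping: one has to carry out the integration by parts carefully enough to verify that \emph{no} $O(1)$ term survives, i.e., that the $\beta\partial_r\rho$ contribution in Lemma~\ref{dtrhodtj} is exactly the one that cancels the boundary flux $-4\pi R^2\beta\rho$ coming from differentiating the upper limit of integration in $m$. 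This is the analogue of the identity $\partial_t m-\beta\,\partial_r m=0$ in the dust case (equation~\eqref{meq_dust}); all Vlasov corrections involve only the ``pressure'' and ``current'' terms, which the rescaling \eqref{fdata_ndust} has made of order $\epsilon^{1/2}$.
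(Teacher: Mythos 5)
Your opening computation is exactly the paper's starting point: writing $\frac{d}{dt}m(t,R(t,0,r))=(\partial_t-\beta\partial_r)m$ evaluated along the characteristic, substituting Lemma~\ref{dtrhodtj}, and integrating by parts so that the pure-$\rho$ terms ($\beta\partial_r\rho$, $\frac{2\beta}{s}\rho$, $\partial_r\beta\,\rho$) cancel against the boundary flux $-4\pi r^2\beta\rho$; this reproduces the paper's identity \eqref{pdem}. But your final step contains a genuine gap. After integrating the term $-\int_0^r\frac{1}{a}\partial_r j\,s^2\,ds$ by parts, the surviving interior term is $-\int_0^r\frac{\partial_r a}{a^2}\,j\,s^2\,ds$. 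You correctly note that it carries a factor $j=\mathrm{O}(\epsilon^{1/2})$, but to conclude it is $\mathrm{O}(\epsilon^{1/2})$ you also need $\partial_r a/a^2=\mathrm{O}(1)$ \emph{uniformly in $\epsilon$} (and in $T^\ast$), and none of the lemmas you cite provides this: Lemma~\ref{simpleabeta} controls only $\|a-1\|$, not $\partial_r a$, and any bound coming from the local existence theory would depend on $\epsilon$. Your list of ``uniformly bounded remaining factors'' silently omits $\partial_r a$, even though it appears in your own functional $\mathcal{I}$.

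In the paper this missing bound is the main technical content of the proof and occupies most of it. One cannot simply estimate $\partial_r a$ via $\partial_r j$ (hence $\partial_x f$), since derivatives of $f$ are not controlled with $\epsilon$-independent constants in the bootstrap. Instead one differentiates the representation \eqref{1overa1} of $1/a$ along $\beta$-characteristics to get $\partial_r a/a^2$ in terms of $\int_0^t\partial_r(rj)(s,R(s,t,r))\,\partial_r R(s,t,r)\,ds$, uses $\partial_r(rj)=r\int v\cdot\partial_x f\,dv-j$, rewrites $\int v\cdot\partial_x f\,dv$ via the Vlasov equation as a derivative of $f$ along the $\beta$-characteristic plus terms in $\partial_v f$, integrates by parts in $v$ and then in $s$ (using \eqref{ee4} to evaluate $\frac{d}{ds}a(s,R(s))$ and the representation \eqref{drRrep} for $\partial_r R$), and only then concludes $\partial_r a/a^2=\mathrm{O}(1)$. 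Without this maneuver (an analogue of the argument in Section~\ref{convader}), the estimate $\bigl|\frac{d}{dt}m(t,R(t,0,r))\bigr|\leq C\epsilon^{1/2}$ is not justified, so your proof as written does not close.
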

\begin{proof}
  Using Lemma~\ref{dtrhodtj}, integration by parts, Lemma~\ref{jandbetaest},
  and the fact that by Lemma~\ref{betaprbound}
  $\partial_r\beta$ and $\beta/r$
  are bounded we find that
  \begin{align}\label{pdem}
    &
    \frac{1}{4\pi} \left(\partial_t -\beta\partial_r\right) m = - r^2 \beta \rho
    \nonumber\\
    &
    \quad {}+
    \int_0^r\left[\beta \partial_r\rho
      -\frac{1}{a}\partial_r j - \frac{2}{s a} j
      + \frac{2\beta}{s}(\rho + p_T)
      + (\partial_r \beta + 4\pi s a j)(\rho + p)\right]
    \, s^2 ds
    \nonumber\\
    &
    =
    - \int_0^r\frac{\partial_r a}{a^2} j\, s^2 ds
    + \int_0^r \left[\frac{2\beta}{s} p_T + \partial_r \beta \, p
      + 4\pi s a j(\rho + p)\right]\, s^2 ds
    - \frac{r^2}{a} j
    \nonumber\\
    &
    =
    - \int_0^r\frac{\partial_r a}{a^2} j\, s^2 ds + \mathrm{O}(\epsilon^{1/2}).
  \end{align}
  The latter integral is $\mathrm{O}(\epsilon^{1/2})$ as well, provided
  $\partial_r a/a^2 = \mathrm{O}(1)$.
  Differentiating \eqref{1overa1} yields
  \begin{align}\label{dra}
    \frac{\partial_r a}{a^2}
    = -4\pi \int_0^t \partial_r(r j)(s,R(s,t,r)) \partial_r R(s,t,r)\, ds.
  \end{align}
  The argument which follows is similar to the one in
  Section~\ref{beta_nsection},
  following \eqref{pr}. By \cite[Lemma~6.6]{GB},
  \begin{align} \label{drrj}
    \partial_r(r j)
    &=
    j + r \int v\cdot \partial_x f dv -
    r \int \left[\frac{|v|^2}{r}\frac{x}{r}
    - \frac{x\cdot v}{r}\frac{v}{r}\right]\cdot
    \partial_v f dv\nonumber \\
    &=
    r  \int v\cdot \partial_x f dv - j.
  \end{align}
  The Cartesian version $X(s,t,x)$ of the $\beta$-characteristic $R(s,t,r)$
  is the solution to
  \begin{align}\label{X_beta}
    \dot x = - \beta(s,r)\frac{x}{r}
  \end{align}
  with $X(t,t,x) = x$;
  the right hand side of this differential equation
  is in $C^1([0,T^\ast[\times \R^3)$ and vanishes at the center $x=0$. In
  particular, the center is a characteristic which no other characteristic can cross.
  Clearly, $|X(s,t,x)|= R(s,t,r)$ where $r=|x|$.
  Moreover,
  \[
  \frac{X(s,t,x)}{|X(s,t,x)|} = \frac{x}{r}
  \]
  for $x\neq 0$; $\beta$-characteristics are strictly radial.
  Let
  \[
  D_t f(t,x,v) \coloneqq \partial_t f(t,x,v) -
  \beta(t,r) \frac{x}{r}\cdot \partial_x f (t,x,v).
  \]
  In what follows we often abbreviate
  $X(s) = X(s,t,x)$ and $R(s)=R(s,t,r)$.
  The chain rule and the Vlasov equation \eqref{Vl-red} imply that
  \begin{align*}
    \frac{d}{ds} f(s,X(s),v)
    &=
    (D_t f)(s,X(s),v) =
    -\left(\frac{v}{a \p0}\cdot \partial_x f\right)(s,X(s),v)\\
    &\quad {} + \left(F_{2} \cdot \partial_v f\right)(s,X(s),v),
  \end{align*}
  where we recall the definition of $F_2$ in \eqref{F2-def}. Hence
  \[
  \left(v \cdot \partial_x f\right)(s,X(s),v) =
  - \left(a \p0 D_t f\right)(s,X(s),v)
  + \left(a \p0 F_{2} \cdot \partial_v f\right)(s,X(s),v).
  \]
  Integration with respect to $v$ and integration by parts imply that
  \begin{align*}
    &
    \left(\int v \cdot \partial_x f dv\right) (s,R(s)) =
    - a(s,R(s)) \frac{d}{ds} \rho (s,R(s))\\
    &
    {}- 
    \left[4\pi \,a^2\, j\, (p+\rho)
      + a \partial_r\beta (p+\rho)
      + a \frac{\beta}{r} (2\rho-p)
      + a\frac{\beta}{r}\int \frac{|v|^2}{\p0}f dv \right] (s,R(s))\\
    &
    =- a(s,R(s)) \frac{d}{ds} \rho (s,R(s)) + \mathrm{O}(1).
  \end{align*}
  In view of \eqref{dra} and \eqref{drrj} we need to show that the following
  integrals are $\mathrm{O}(1)$:
  \[
  \int_0^t j(s,R(s))\,\partial_r R(s)\, ds,
  \]
  and
  \begin{align*}
    &
    \int_0^t R(s) a(s,R(s)) \frac{d}{ds} \rho (s,R(s))\, \partial_r R(s)\, ds\\
    &\qquad\quad =
    r a(t,r) \rho(t,r) - R(0,t,r)
    \mathring{\rho}(R(0,t,r))\,\partial_r R(0,t,r)\\
    &
    \qquad\quad\quad{}
    - \int_0^t \biggl[\dot R(s) a(s,R(s)) \rho (s,R(s))\, \partial_r R(s)\\
    &
    \qquad\qquad\qquad\quad 
    + R(s) \frac{d}{ds} a(s,R(s)) \rho (s,R(s))\, \partial_r R(s)\\
    &
    \qquad\qquad\qquad\quad 
    + R(s) a(s,R(s)) \rho (s,R(s))\, \partial_r \dot R(s)\biggr]\, ds .
  \end{align*}
  By \eqref{ee4} and the analogue of \eqref{drRrep}, 
  \begin{align*}
    \frac{d}{ds} a(s,R(s))
    &=
    -4\pi (r a^2 j)(s,R(s)),\\
    \partial_r R(s,t,r)
    &=
    \exp\left(\int_s^t \partial_r \beta(\tau,R(\tau,t,r))\,d\tau\right),\\
    \partial_r \dot R(s,t,r)
    &=
    - \partial_r \beta(s,R(s,t,r))
    \exp\left(\int_s^t \partial_r \beta(\tau,R(\tau,t,r))\,d\tau\right),
  \end{align*}
  all these terms are $\mathrm{O}(1)$,
  and hence the same is true for $\partial_r a/a^2$.
  Thus \eqref{pdem} turns into the equation
  \[
  \partial_t m -\beta\,\partial_r m = \mathrm{O}(\epsilon^{1/2}),
  \]
  which upon integration along characteristics gives the assertion.
\end{proof}

In order to obtain sufficiently sharp information on $\rho$ it turns out that
we need to investigate the characteristic system in coordinates which are
adapted to spherical symmetry. For $x,v\in\R^3$ with $x\neq 0$ we define
\[
r=|x|,\ w=\frac{x\cdot v}{r},\ L=|x\times v|^2.
\]
In these coordinates the characteristic system takes the form
\begin{align}
  \dot r
  &=
  \frac{w}{a(s,r) \sqrt{1+w^2 +L/r^2}} - \beta(s,r),\label{dotr}\\
  \dot w
  &=
  4\pi (r a j)(s,r)\, w + \partial_r \beta(s,r)\, w +
  \frac{L}{r^3 a(s,r) \sqrt{1+w^2 +L/r^2}},\label{dotw}\\
  \dot L
  &=
  0.\label{dotL}
\end{align}
The variable $w$ is a radial momentum variable
while $L$ is the modulus of the particle angular momentum squared
and is conserved due to spherical symmetry. In addition,
\[
|v|^2 = w^2 + \frac{L}{r^2} .
\]
The source terms can also be rewritten in these variables, in particular,
\[
\rho(t,r) = \frac{\pi}{r^2}\int_{-\infty}^\infty\int_0^\infty \sqrt{1+w^2 +\frac{L}{r^2}}
f(t,r,w,L)\, dL\,dw.
\]
We obtain the following refined estimate.
\begin{lemma}\label{ref_rho_est}
  For $\epsilon$ sufficiently small, $t\in [0,T^\ast[$ and $r\geq 0$,
  \[
  \rho(t,r)\leq \frac{\alpha_\epsilon}{\gamma_D(t) \gamma^2_{V,\epsilon}(t)}
  + C \epsilon^{1/2}.
  \]
\end{lemma}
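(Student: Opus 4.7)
The plan is to bound the number density $n(t,r)\coloneqq(\pi/r^2)\iint f(t,r,w,L)\,dw\,dL$ by $\alpha_\epsilon/(\gamma_D\gamma_{V,\epsilon}^2)+O(\epsilon^{1/2})$ and then convert to a bound on $\rho$. Since Lemma~\ref{charest} gives $|v(t)|\le C\epsilon^{1/2}/\gamma_D(t)$ uniformly on $\supp f(t)$ for $t\in[0,\tfrac{8}{9}]$, the factor $\sqrt{1+w^2+L/r^2}$ appearing in the formula for $\rho$ equals $1+O(\epsilon)$ on $\supp f$, so $\rho\le(1+O(\epsilon))n$ and the reduction to $n$ is justified. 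I would carry out the bound on $n$ via a Liouville-type change of variables on the mass-shell fiber: for each fixed $L\ge 0$, the reduced flow $\Phi_t^L:(r_0,w_0)\mapsto(r(t),w(t))$ from \eqref{dotr}--\eqref{dotw} is a planar diffeomorphism whose Jacobian $J^L=\det\partial(r,w)/\partial(r_0,w_0)$ satisfies $\dot J^L/J^L=-w\,\partial_r a/(a^2\langle v\rangle)+4\pi r a j$, which by the bounds of Lemmas~\ref{jandbetaest} and~\ref{charest} and the argument producing \eqref{dra} is $1+O(\epsilon^{1/2})$ along characteristics in $\supp f$.

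Inverting $\Phi_t^L$ at fixed final $r$ expresses the fiber $f(t,r,\cdot,L)$ via the initial radial momentum $w_0$, with $r_0=r_0(w_0;t,r,L)$ determined by $R(t,0,r_0,w_0,L)=r$. Changing variables $(w_0,L)\to v_0\in\R^3$ through $d^2v_{0\perp}=(\pi/r_0^2)\,dL$, the normalization $\int\epsilon^{-3/2}H(|v_0|^2/\epsilon)\,dv_0=1$ from \eqref{Hdef} together with the pointwise bound $\mathring f\le\alpha_\epsilon\epsilon^{-3/2}H(|v_0|^2/\epsilon)$ yield
\[
n(t,r)\le\frac{\alpha_\epsilon}{r^2}\cdot\frac{r_0^2}{\bigl|\det\partial X/\partial x_0\bigr|_{v_0=0}}\cdot\bigl(1+O(\epsilon^{1/2})\bigr).
\]
By spherical symmetry and the narrow momentum support, $|\det\partial X/\partial x_0|_{v_0=0}=\partial_{r_0}R\cdot(R/r_0)^2$ up to $O(\epsilon^{1/2})$, where $R(s,0,r_0)$ denotes the $\beta$-characteristic \eqref{charactbeta}.

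To close the estimate I bound each factor separately. By \eqref{drRrep} and bootstrap (i), $\partial_{r_0}R=\exp\bigl(-\int_0^t\partial_r\beta(s,R(s))\,ds\bigr)\ge\exp\bigl(-\int_0^t\sigma_D(s)\,ds\bigr)=\gamma_D(t)$. For $R/r_0$, combining mass conservation along $\beta$-characteristics (Lemma~\ref{massalongchar}) with $\beta^2=2m/r+O(\epsilon^{1/2})$ (Lemma~\ref{simpleabeta}) and the homogeneous relations $\beta_h^2=\sigma_{V,\epsilon}^2 r^2$, $\sigma_{V,\epsilon}^2=(8\pi/3)\rho_h$, $\rho_h=\alpha_\epsilon/\gamma_{V,\epsilon}^3+O(\epsilon)$, one sees that the condition $\beta(s,R)\le\beta_h(s,R)$ on matter-carrying characteristics is, up to $O(\epsilon^{1/5})$, equivalent to $R/r_0\ge\gamma_{V,\epsilon}$; a continuity/bootstrap argument anchored at $R(0)/r_0=1=\gamma_{V,\epsilon}(0)$ then yields $R(t,0,r_0)/r_0\ge\gamma_{V,\epsilon}(t)(1-C\epsilon^{1/5})$, and simultaneously $r_0/r\le\gamma_{V,\epsilon}^{-1}(1+C\epsilon^{1/5})$. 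Combining these bounds produces $n(t,r)\le\alpha_\epsilon/(\gamma_D\gamma_{V,\epsilon}^2)\cdot(1+C\epsilon^{1/5})$, and the lemma follows after absorbing the multiplicative $O(\epsilon^{1/5})$ into the additive $C\epsilon^{1/2}$ error using the uniform lower bound $\gamma_D(t)\gamma_{V,\epsilon}(t)^2\ge c>0$ on $[0,\tfrac{8}{9}]$. The hardest part will be closing the bootstrap $R/r_0\ge\gamma_{V,\epsilon}(1-o(1))$ outside the homogeneous core of Lemma~\ref{homcore}, since the weaker estimate $R/r_0\ge\gamma_D$ implied by $\beta\le\beta_D$ alone is insufficient (because $\gamma_D\le\gamma_{V,\epsilon}$); mass conservation and the specific structure of the initial data must be fully exploited.
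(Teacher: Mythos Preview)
Your Jacobian/Liouville approach is a genuine alternative to the paper's route and is sound in outline. The paper instead proves the anisotropic momentum bound $\gamma_D^2(s)\,w^2(s) + \gamma_{V,\epsilon}^2(s)\,L/r^2(s) \le |v(0)|^2 + \epsilon^{3/2}$ along characteristics outside the homogeneous core, uses the monotonicity of $H$ to bound $f(t,r,w,L)$ pointwise, and then integrates explicitly in $(w,L)$. Your approach trades the explicit integration and the monotonicity hypothesis for a change of variables, with the anisotropy entering through the split of $\det\partial X/\partial x_0$ into a radial factor $\partial_{r_0}R$ (controlled by $\sigma_D$ via bootstrap~(i)) and an angular factor $(R/r_0)^2$ (controlled by $\sigma_{V,\epsilon}$). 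Both routes ultimately rest on the same comparison $\beta(s,R(s))/R(s) \le \sigma_{V,\epsilon}(s) + O(\epsilon^{1/2})$ along $\beta$-characteristics in the matter region, which follows from Lemmas~\ref{simpleabeta} and~\ref{massalongchar} together with $\mathring m \le \mathring m_h$.

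There is, however, a concrete error in your final step. The claim that a multiplicative $(1+C\epsilon^{1/5})$ can be ``absorbed into the additive $C\epsilon^{1/2}$ error'' is false, since $\epsilon^{1/5}\gg\epsilon^{1/2}$; as written you would only obtain $\rho \le \alpha_\epsilon/(\gamma_D\gamma_{V,\epsilon}^2) + C\epsilon^{1/5}$, which is too weak for the sign analysis in Lemma~\ref{iback}. The fix is that your intermediate error should already be $O(\epsilon^{1/2})$: the lemmas you invoke all produce $\epsilon^{1/2}$-errors, and a direct (non-bootstrap) integration of $\tfrac{d}{ds}R^{3/2} \ge -\tfrac{3}{2}\sqrt{2\mathring m(r_0)} - C\epsilon^{1/2} \ge \tfrac{d}{ds}R_h^{3/2} - C\epsilon^{1/2}$ gives $R/r_0 \ge \gamma_{V,\epsilon}(1-C\epsilon^{1/2})$ on characteristics bounded away from the center. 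Two smaller points: your displayed inequality for $n$ carries a spurious factor $r_0^2/r^2$ --- the correct bound is $n \le \alpha_\epsilon J/\lvert\det\partial X/\partial x_0\rvert_{v_0=0}\cdot(1+O(\epsilon^{1/2})) = \alpha_\epsilon\,(r_0/r)^2/\partial_{r_0}R\cdot(1+O(\epsilon^{1/2}))$, which is what actually yields $\alpha_\epsilon/(\gamma_D\gamma_{V,\epsilon}^2)$ --- and replacing the Jacobian by its value at $v_0=0$ uniformly over the $\epsilon^{1/2}$-ball in momentum requires control on second derivatives of the characteristic flow, which should be made explicit.
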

\begin{proof}
  On the homogeneous core established in Lemma~\ref{homcore},
  \begin{align}\label{rho_h_est}
  \rho(t,r) =
  \rho_h(t) = \frac{1}{\gamma^4_{V,\epsilon}(t)}\phi_\epsilon(\gamma_{V,\epsilon}(t))
  \leq
  \frac{\alpha_\epsilon}{\gamma^3_{V,\epsilon}(t)} + C \epsilon,
  \end{align}
  where we used \eqref{rho_h} and an obvious estimate for the function
  $\phi_\epsilon$ defined in \eqref{phi_eps}.
  By Lemma~\ref{gamma_comp}~(a) the assertion follows on the homogeneous core.
  
  Hence it remains to estimate $\rho(t,r)$ for $r\geq r^\ast(t) \geq r_\ast$,
  cf.~\eqref{lowr}. Consider a characteristic $s\mapsto (x(s),v(s))$ of $f$
  in $\supp f$ with $r(t)\geq r_\ast$. Since $\beta$ is positive if follows
  from \eqref{dotr} and the bootstrap assumption (iv) that
  $\dot r \leq C \epsilon^{1/2}$ and hence $r(s)\geq \frac{1}{2}r_\ast$ on
  $[0,t]$ for $\epsilon$ sufficiently small. We will prove that on  $[0,t]$,
  \begin{align}\label{ref_vbound}
    \gamma_D^2(s) w^2(s) + \gamma^2_{V,\epsilon}(s) \frac{L}{r^2(s)}
    \leq |v(0)|^2 + \epsilon^{3/2}.
  \end{align}
  Let us for the moment assume that \eqref{ref_vbound} is already established.
  Then the estimate for $\rho(t,r)$ for the remaining case $r\geq r^\ast(t)$
  works as follows.
  
  Since $f$ is constant along characteristics and
  $\rho_0\leq \alpha_\epsilon$,
  \begin{align*}
    f(t,r,w,L)
    &= \mathring f (R(0,t,r,w,L),W(0,t,r,w,L),L)\\
    &\leq
    \epsilon^{-3/2} \alpha_\epsilon
    H\left(\frac{1}{\epsilon}\left(W^2(0,t,r,w,L)+\frac{L}{R^2(0,t,r,w,L)}
    \right)\right).
  \end{align*}
  By \eqref{ref_vbound},
  \[
  W^2(0,t,r,w,L)+\frac{L}{R^2(0,t,r,w,L)} \geq
  \gamma_D^2(t) w^2 + \gamma^2_{V,\epsilon}(t) \frac{L}{r^2}-\epsilon^{3/2},
  \]
  and since $H$ is decreasing by assumption,
  \[
  f(t,r,w,L)  \leq 
  \epsilon^{-3/2} \alpha_\epsilon
  H\left(\frac{1}{\epsilon}
  \left(\gamma_D^2(t) w^2 + \gamma^2_{V,\epsilon}(t) \frac{L}{r^2}
  -\epsilon^{3/2}\right)_+\right),
  \]
  where $(\cdot)_+$ denotes the positive part of the argument. Hence
  \begin{align*}
    \rho(t,r)
    &
    \leq 
    \frac{2 \pi \alpha_\epsilon}{r^2}\epsilon^{-3/2}
    \int_0^\infty\int_0^\infty \sqrt{1+w^2 +\frac{L}{r^2}}\\
    &
    \qquad\qquad\qquad\qquad\qquad
    H\left(\frac{1}{\epsilon}
    \left(\gamma_D^2(t) w^2 + \gamma^2_{V,\epsilon}(t) \frac{L}{r^2}
    -\epsilon^{3/2}\right)_+\right)
    \,dL\,dw\\
    &
    = 
    \frac{2 \pi \alpha_\epsilon}{\gamma^2_{V,\epsilon}(t)}\epsilon^{-3/2}
    \int_0^\infty\int_0^\infty \sqrt{1+w^2 +\frac{L}{\gamma^2_{V,\epsilon}(t)}}\\
    &
    \qquad\qquad\qquad\qquad\qquad
    H\left(\frac{1}{\epsilon}
    \left(\gamma_D^2(t) w^2 +  L -\epsilon^{3/2}\right)_+\right)
    \,dL\,dw\\
    &
    = 
    \frac{\pi \alpha_\epsilon}{\gamma^2_{V,\epsilon}(t) \gamma_D(t)}\epsilon^{-3/2}
    \int_0^\infty\int_0^\eta
    \sqrt{1+\frac{1}{\gamma_D^2(t)} (\eta - L) +\frac{L}{\gamma^2_{V,\epsilon}(t)}}
    \frac{dL}{\sqrt{\eta-L}}\\
    &
    \qquad\qquad\qquad\qquad\qquad\qquad\qquad
    H\left(\frac{1}{\epsilon}
    \left(\eta -\epsilon^{3/2}\right)_+ \right)\, d\eta;
  \end{align*}
  in the last step we changed variables via
  $(\eta,L)=(\gamma_D^2(t) w^2 +  L,L)$.
  By Lemma~\ref{gamma_comp},
  \begin{align*}
    \int_0^\eta
    \sqrt{1+\frac{1}{\gamma_D^2(t)} (\eta - L) +\frac{L}{\gamma^2_{V,\epsilon}(t)}}
    \frac{dL}{\sqrt{\eta-L}}
    &\leq
    \sqrt{1+\frac{1}{\gamma_D^2(t)} \eta} \int_0^\eta
    \frac{dL}{\sqrt{\eta-L}}\\
    &=
    2  \sqrt{\eta} \sqrt{1+\frac{1}{\gamma_D^2(t)} \eta},
  \end{align*}
  and hence
  \begin{align*}
    \rho(t,r)
    &
    \leq \frac{2 \pi \alpha_\epsilon }{\gamma^2_{V,\epsilon}(t) \gamma_D(t)}
    \epsilon^{-3/2}
    \int_0^\infty \sqrt{\eta} \sqrt{1+\frac{1}{\gamma_D^2(t)} \eta}
    H\left(\frac{1}{\epsilon} \left(\eta -\epsilon^{3/2}\right)_+ \right)
    \, d\eta\\
    &
    = \frac{2 \pi \alpha_\epsilon}{\gamma^2_{V,\epsilon}(t) \gamma_D(t)}
    \int_{-\epsilon^{1/2}}^1 \sqrt{\eta+\epsilon^{1/2}}
    \sqrt{1+\frac{1}{\gamma_D^2(t)} (\epsilon \eta+\epsilon^{3/2})}
    H\left(\eta_+ \right)\, d\eta\\
    &
    = \frac{2 \pi \alpha_\epsilon}{\gamma^2_{V,\epsilon}(t) \gamma_D(t)}
    \int_{-\epsilon^{1/2}}^0 \sqrt{\eta+\epsilon^{1/2}}
    \sqrt{1+\frac{1}{\gamma_D^2(t)} (\epsilon \eta+\epsilon^{3/2})}
    H (0)\, d\eta\\
    &
    \quad {}+ \frac{2 \pi \alpha_\epsilon}{\gamma^2_{V,\epsilon}(t) \gamma_D(t)}
    \int_0^1 \sqrt{\eta+\epsilon^{1/2}}
    \sqrt{1+\frac{1}{\gamma_D^2(t)} (\epsilon \eta+\epsilon^{3/2})}
    H(\eta)\, d\eta\\
    &
    \leq
    \frac{2 \pi \alpha_\epsilon}{\gamma^2_{V,\epsilon}(t) \gamma_D(t)}
    \int_0^1 \sqrt{\eta} H(\eta)\, d\eta + C \epsilon^{1/2}\\
    &
    = \frac{\alpha_\epsilon}{\gamma^2_{V,\epsilon}(t) \gamma_D(t)} + C \epsilon^{1/2}
  \end{align*}
  as claimed.

  It remains to prove \eqref{ref_vbound}.
  By the characteristic system \eqref{dotr}--\eqref{dotL},
  \begin{align*}
    &
    \frac{1}{2}\frac{d}{ds}\left(\gamma_D^2 w^2
    + \gamma^2_{V,\epsilon} \frac{L}{r^2}\right)\\
    &
    \quad = \gamma_D \dot\gamma_D w^2 + \gamma_D^2 w \dot w
    + \gamma_{V,\epsilon} \dot\gamma_{V,\epsilon} \frac{L}{r^2}
    - \gamma^2_{V,\epsilon} \frac{L}{r^3}\dot r \\
    &
    \quad = \gamma_D \dot\gamma_D w^2 +
    \gamma_D^2 (4\pi r a j + \partial_r\beta)\,w^2
    + \gamma_D^2 \frac{w L}{r^3 a \sqrt{1+w^2 +L/r^2}}\\
    &\qquad
    {}+ \gamma_{V,\epsilon} \dot\gamma_{V,\epsilon} \frac{L}{r^2}
    - \gamma^2_{V,\epsilon} \frac{L}{r^3}
    \left(\frac{w}{a \sqrt{1+w^2 +L/r^2}} - \beta\right)  \\
    &
    \quad \leq \gamma_D^2 4\pi r a j \,w^2  + \gamma^2_{V,\epsilon}\frac{L}{r^2}
    \left(\frac{\beta}{r}+\frac{\dot\gamma_{V,\epsilon}}{\gamma_{V,\epsilon}}\right)
    +\frac{L w}{r^3 a \sqrt{1+w^2 +L/r^2}}(\gamma_D^2- \gamma^2_{V,\epsilon}). 
  \end{align*}
  In the estimate we used the bootstrap assumption (i) and
  the explicit form of $\beta_D$ and $\gamma_D$,
  cf.~\eqref{beta_dust} and \eqref{gamma_dust}, which imply
  \[
  \gamma_D^2\partial_r\beta < \gamma_D^2\partial_r\beta_D =
  - \gamma_D \dot\gamma_D .
  \]
  Since we are looking at characteristics in $\supp f$,
  \[
  w^2, \frac{L}{r^2}\leq |v|^2 \leq C \epsilon
  \]
  and $|4\pi r a j|\leq C \epsilon^{1/2}$. In addition, we may use the lower
  bound $r=r(s)\geq \frac{1}{2}r_\ast$ as explained above.
  Thus
  \begin{align}\label{wLest}
  \frac{d}{ds}\left(\gamma_D^2 w^2 + \gamma^2_{V,\epsilon} \frac{L}{r^2}\right)
  \leq C \epsilon^{3/2} + 2\gamma^2_{V,\epsilon}\frac{L}{r^2}
  \left(\frac{\beta}{r}+\frac{\dot\gamma_{V,\epsilon}}{\gamma_{V,\epsilon}}\right).
  \end{align}
  Hence the assertion follows, provided we can suitably estimate the term
  in parenthesis on the right hand side.

  In order to do the latter we first observe that
  $\dot\gamma_{V,\epsilon}(s)/\gamma_{V,\epsilon}(s)= - \beta_h(s,r)/r$.
  In addition to the characteristic $s\mapsto (r(s),w(s),L)$ of $f$,
  for which we may assume that $r(s)\geq \frac{1}{2}r_\ast$ on
  $[0,t]$, we consider the characteristic $s\mapsto R(s,0,r) =R(s)$ of $\beta$
  where $r=|x(0)|$.
  By the bound on $\partial_r\beta$ from Lemma~\ref{betaprbound}, the mean value
  theorem, and the bootstrap assumption~(iv),
  \[
  |\dot R(s) - \dot r(s)| \leq C \epsilon^{1/2} + C |R(s) - r(s)|
  \]
  so that by Gronwall's lemma
  \begin{align}\label{rminR}
    |R(s) - r(s)| \leq C \epsilon^{1/2}.
  \end{align}
  Using Lemma~\ref{betaprbound}, Lemma~\ref{simpleabeta},
  and Lemma~\ref{massalongchar},
  \begin{align} \label{betasqest}
    \beta^2(s,r(s))
    &
    \leq \beta^2(s,R(s)) + C \epsilon^{1/2}
    \leq \frac{2 \mathring m (r)}{R(s)} + C \epsilon^{1/2} \notag \\
    &
    \leq \frac{2 \mathring m_h (r)}{R(s)} + C \epsilon^{1/2}
    = \beta_h^2 (s,R_h(s)) \frac{R_h(s)}{R(s)} + C \epsilon^{1/2}.
  \end{align}
  On the other hand,
  \begin{align*}
    \frac{d}{ds} R^{3/2}(s)
    &\geq
    - \frac{3}{2} \sqrt{2m(s,R(s))} - C \epsilon^{1/2}
    \geq
    - \frac{3}{2} \sqrt{2\mathring m(r)} - C \epsilon^{1/2}\\
    &\geq
    - \frac{3}{2} \sqrt{2\mathring m_h(r)} - C \epsilon^{1/2}
     \geq
    \frac{d}{ds} R_h^{3/2}(s) - C \epsilon^{1/2},
  \end{align*}
  and since all this happens strictly away from zero,
  \[
  R(s,0,r) \geq R_h (s,0,r)- C \epsilon^{1/2}.
  \]
  Together with \eqref{rminR} and \eqref{betasqest},
  \[
  \frac{\beta(s,r(s))}{r(s)} \leq
  \frac{\beta_h(s,R_h(s))}{r(s)} + C \epsilon^{1/2}
  \leq 
  \frac{\beta_h(s,r(s))}{r(s)} + C \epsilon^{1/2}.
  \]
  If we substitute this into the estimate \eqref{wLest} and recall the fact that
  $L\leq C \epsilon$ the assertion~\eqref{ref_vbound} follows.  
\end{proof}

We are now ready to recover the bootstrap assumption (i) in improved form.

\begin{lemma}\label{iback}
  There exists a constant $C>0$ such that for $\epsilon>0$ sufficiently small
  and for all $t\in [0,T^\ast[$ and $r\geq 0$,
  \[
  \partial_r\beta(t,r)\leq\partial_r\beta_D(t,r) - C \epsilon^{1/5}.
  \]
\end{lemma}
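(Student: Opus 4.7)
The plan is to apply Lemma~\ref{betaprdifference} and show that the integral correction does not destroy the $-C\epsilon^{1/5}$ on the right-hand side. Using the substitution $r\mapsto R(0,t,r)$ noted after that lemma, I would reduce the claim to bounding, uniformly in $r$,
\[
I(t,r) := \int_0^t \exp\!\left(\int_s^t(\partial_r\beta+\partial_r\beta_D)(\tau,R(\tau,t,r))\,d\tau\right) d(s,R(s,t,r))\,ds
\]
by a quantity whose contribution can be absorbed by the initial-data term. The exponential weight is uniformly bounded on $[0,\tfrac{8}{9}]$ since $\partial_r\beta_D(t)=\tfrac{2}{3(1-t)}$ is bounded there and $|\partial_r\beta|$ is bounded by Lemma~\ref{betaprbound}.

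To bound the integrand, note that $\rho_D$ is spatially constant (the reference is the homogeneous, unbounded dust), so $m_D(s,r)=\tfrac{4\pi}{3}\rho_D(s)\,r^3$ and, with $\tilde\rho:=\rho_D-\rho$,
\[
d(s,r) = -4\pi\tilde\rho(s,r) + \frac{8\pi}{r^3}\int_0^r \tilde\rho(s,u)\,u^2\,du.
\]
Lemma~\ref{ref_rho_est} combined with Lemma~\ref{gamma_comp}(a) yields the uniform pointwise lower bound $\tilde\rho(s,r)\ge c\epsilon^{1/5}$ for $\epsilon$ small. I would then split according to where the characteristic $R(s,t,r)$ lies at time $s$. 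In the homogeneous core, by Lemma~\ref{homcore}, $\rho=\rho_h$ is spatially constant so $\tilde\rho$ is constant and $d=-\tfrac{4\pi}{3}\tilde\rho_{\rm core}\le -c'\epsilon^{1/5}$ exactly; outside the Vlasov support (which by Lemma~\ref{rsuppest} lies in $[0,r^\ast(s)+C\epsilon^{1/5}]$), one has $\rho=0$, whence $d=-\tfrac{4\pi}{3}\rho_D-\tfrac{2m}{r^3}$ is strictly negative of order one; in the thin annulus between these regions I would split the integral $\int_0^r \tilde\rho\,u^2\,du$ at $r^\ast(s)$, estimating the interior piece by the core value and the annular piece via $\tilde\rho\le\rho_D$ on a set of width $O(\epsilon^{1/5})$.

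Putting the three regimes together yields $d(s,R(s,t,r))\le -c''\epsilon^{1/5}$, so $I(t,r)\le -c'''\epsilon^{1/5}$, and combining with the $-C\epsilon^{1/5}$ in Lemma~\ref{betaprdifference} gives the claim once $\epsilon$ is small. The hard part will be the annulus estimate: the term $\tfrac{8\pi}{r^3}\int_0^r \tilde\rho u^2\,du$ produces a positive contribution of size comparable to $\epsilon^{1/5}$ because $\tilde\rho_{\rm core}$ itself is only $O(\epsilon^{1/5})$, and one has to show that its coefficient is strictly smaller than $4\pi$ times the lower bound on $\tilde\rho(r)$. Controlling this requires the precise $O(\epsilon^{1/5})$ bound on $\gamma_{V,\epsilon}-\gamma_D$ from Lemma~\ref{gamma_comp}, the width bound from Lemma~\ref{rsuppest}, and, to compare the Vlasov mass function along the $\beta$-characteristic with the dust mass function in the annulus, the input from Lemma~\ref{massalongchar}.
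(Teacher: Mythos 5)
Your reduction via Lemma~\ref{betaprdifference}, the boundedness of the exponential weight (Lemma~\ref{betaprbound}), and the treatment of the easy regimes (evaluation point in the homogeneous core, or in vacuum) all match the paper. The genuine gap is in the regime you yourself flag as hard: $\rho(s,R(s,0,r))>0$ with $R(s,0,r)$ outside the core. There your plan is to bound the averaged term $\frac{8\pi}{r^3}\int_0^r\tilde\rho\,u^2du$ by the core value plus a thin-annulus contribution estimated via $\tilde\rho\le\rho_D$ on a set of width $O(\epsilon^{1/5})$ from Lemma~\ref{rsuppest}. This does not close: the width bound carries an uncontrolled Gronwall-type constant, so the annulus piece is only $\le C'\epsilon^{1/5}\rho_D/r$ with $C'$ large, whereas the available negative margin, $-4\pi\tilde\rho(r)+\frac{8\pi}{3}\tilde\rho_{\mathrm{core}}$, is of size roughly $\frac{2}{9}\,\epsilon^{1/5}(1-s)^{-2}$ with an \emph{explicit small} constant (this is exactly the $-\frac{2}{9}$ that appears in the paper's $\epsilon^{1/5}$-derivative computation). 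An error of the same order $\epsilon^{1/5}$ with an uncontrolled constant cannot be absorbed by a margin of order $\epsilon^{1/5}$ with a fixed small constant; only $O(\epsilon^{1/2})$ errors are harmless here.

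The paper avoids estimating the annulus mass altogether. It bounds $m(s,R(s,0,r))$ from below in Lagrangian form, $m(s,R(s,0,r))\ge\mathring m(r)-C\epsilon^{1/2}$ (Lemma~\ref{massalongchar}), shows by a characteristics-comparison argument that the \emph{initial} radius satisfies $r\le 1+C\epsilon^{1/2}$ whenever $\rho(s,R(s,0,r))>0$, so that $\mathring m(r)\ge\frac{2}{9}(1-\epsilon^{1/5})r^3-C\epsilon^{1/2}$, and then controls the ratio $r^3/R(s,0,r)^3$ through the estimate \eqref{Rest}, obtained by integrating $\frac{d}{d\tau}R^{3/2}\le-\sqrt{2\mathring m(r)}+C\epsilon^{1/2}$ (using Lemma~\ref{simpleabeta} and Lemma~\ref{massalongchar}). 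Combining this with the refined bound of Lemma~\ref{ref_rho_est} yields an explicit upper bound for $d$ whose $\epsilon=0$ part vanishes identically and whose derivative with respect to $\epsilon^{1/5}$ at $\epsilon=0$ is $<-\frac{2}{9}$ uniformly on $[0,\frac{8}{9}]$; all terms with uncontrolled constants enter only at order $\epsilon^{1/2}$. Your final sentence gestures at Lemma~\ref{massalongchar}, which is indeed the right replacement for the annulus estimate, but the essential ingredients \eqref{Rest}, the localization $r\le 1+C\epsilon^{1/2}$, and the exact cancellation structure are missing from your argument, and without them the step fails.
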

\begin{proof}
  In view of Lemma~\ref{betaprdifference}
  it is sufficient to show that the quantity
  $d(s,R(s,0,r))$ is negative for all relevant arguments. By \eqref{m_dust} and
  the corresponding formula for $\rho_D$,
  \[
  \frac{2 m_D(s,r)}{r^3} - 4\pi \rho_D(s,r) = \frac{4}{9}\frac{1}{(1-s)^2} -
  \frac{2}{3}\frac{1}{(1-s)^2} = -  \frac{2}{9}\frac{1}{(1-s)^2}.
  \]
  Thus if $\rho(s,R(s,0,r))=0$, then $d(s,R(s,0,r)) < 0$ as desired.

  Next we observe that if $R(s,0,r)$ lies in the homogeneous core introduced
  in Lemma~\ref{homcore}, then
  \begin{align*}
    d(s,R(s,0,r))
    &=
    \left(\frac{2 m_D}{r^3} - 4\pi \rho_D
    - \frac{2 m_h}{r^3} + 4\pi \rho_h\right)
    (s,R(s,0,r))\\
    &= \frac{4\pi}{3}
    \left(\rho_h(s) - \rho_D(s) \right)\\
    &\leq
    \frac{4\pi}{3}
    \left(\frac{\alpha_\epsilon}{\gamma^3_{V,\epsilon}(t)}
    - \frac{\alpha}{\gamma^3_{D}(t)}\right)+ C \epsilon
    <-C \epsilon^{1/5} < 0
  \end{align*}
  as desired, where we observe \eqref{rho_h} and Lemma~\ref{gamma_comp}~(a).

  It remains to consider the case $\rho(s,R(s,0,r))>0$ with $R(s,0,r)$
  outside the homogeneous core. We want to use the estimate
  \begin{align}\label{mest1}
  m(s,R(s,0,r)) \geq \mathring m(r) - C\epsilon^{1/2}
  \end{align}
  from Lemma~\ref{massalongchar} and compare the resulting term in $d$ with the
  dust contribution. However, this only works if have control on where $r$
  ranges.
  Consider a characteristic
  $s\mapsto (X,V)(s,0,\tilde x,\tilde v)$
  of the Vlasov equation with the property that
  $|X(s,0,\tilde x,\tilde v)| = R(s,0,r)$ and
  $(X,V)(s,0,\tilde x,\tilde v)\in \supp f(s)$; such a characteristic
  exists since $\rho(s,R(s,0,r))>0$. It then follows that
  $(\tilde x,\tilde v)\in \supp\mathring f$,
  in particular, $\tilde r = |\tilde x|< 1+\epsilon$.
  On the other hand we recall that
  $R(s,0,r)=|X(s,0,x,0)|$ and $V(s,0,x,0)=0$. The radial component of
  the characteristic system together with the bound for $\partial_r \beta$
  from Lemma~\ref{betaprbound} and the bootstrap assumption (iv) imply that
  for $0\leq t\leq s$,
  \[
  |R(t,0,r) - R(t,0,\tilde x,\tilde v)| \leq C \epsilon^{1/2}
  + C \int_t^s |R(\tau,0,r) - R(\tau,0,\tilde x,\tilde v)|\, d\tau,
  \]
  so that by Gronwall's lemma in particular
  $|r-\tilde r|\leq  C \epsilon^{1/2}$ which
  in view of the estimate for $\tilde r$ above implies that
  $r \leq 1 + C \epsilon^{1/2}$. The initial data are such that
  $\mathring\rho\geq\alpha_\epsilon$
  on the interval $[0,1]$ which
  implies that
  \[
  \mathring m(r) \geq \mathring m_{D,\epsilon}(r) - C\epsilon^{1/2}
  =\frac{2}{9} (1-\epsilon^{1/5}) r^3 - C\epsilon^{1/2},\
  0\leq r\leq 1+C \epsilon^{1/2}
  \]
  provided $\epsilon>0$ is sufficiently small. We can now continue
  the estimate \eqref{mest1} to find that
  \begin{align}\label{mest2}
    m(s,R(s,0,r)) \geq \frac{2}{9} (1-\epsilon^{1/5}) r^3 - C\epsilon^{1/2} .   
  \end{align}
  If we combine \eqref{mest2} with
  Lemma~\ref{ref_rho_est} it follows that for the relevant arguments,
  \begin{align*}
    d(s,R(s,0,r))
    &
    \leq
    -  \frac{2}{9}\frac{1}{(1-s)^2}
    - \frac{4}{9} (1-\epsilon^{1/5}) \frac{r^3}{R(s,0,r)^3} \\
    &
    {}+ \frac{2}{3}(1-\epsilon^{1/5})\frac{1}{\gamma_D(s) \gamma^2_{V,\epsilon}(s)}
    + C \epsilon^{1/2}.
  \end{align*}
  We need to be able to compare the various terms on the right hand side.
  First we recall that $\gamma_D(s) = (1-s)^{2/3}$, cf.~\eqref{gamma_dust}
  with $\alpha=1/6\pi$.
  If we combine Lemma~\ref{gamma_comp}~(c) with \eqref{gamma_dust}
  and $\alpha_\epsilon$ instead of $\alpha$, it follows that
  \[
  \gamma_{V,\epsilon} (s) \geq (1-\sqrt{1-\epsilon^{1/5}}s)^{2/3} - C \epsilon.
  \]
  Let us for the moment assume that for $\epsilon>0$ sufficiently small,
  \begin{align}\label{Rest}
    R(s,0,r) \leq (1-\sqrt{1-\epsilon^{1/5} - C \epsilon^{1/2}}s)^{2/3}\, r.
  \end{align}
  Then altogether we get the estimate
   \begin{align*}
    d(s,R(s,0,r))
    &
    \leq
    -  \frac{2}{9}\frac{1}{(1-s)^2}
    - \frac{4}{9} 
    \frac{1-\epsilon^{1/5}}{(1-\sqrt{1-\epsilon^{1/5} - C \epsilon^{1/2}}s)^2} \\
    & \quad 
    {}+ \frac{2}{3} \frac{1-\epsilon^{1/5}}{(1-s)^{2/3}
    (1-\sqrt{1-\epsilon^{1/5}}s)^{4/3}}
    + C \epsilon^{1/2}.
  \end{align*}
   We claim that the right hand side is negative for all $s\in [0,\frac{8}{9}]$
   and $\epsilon>0$ sufficiently small which can be seen as follows.
   For $\epsilon=0$
   the right hand side vanishes. Its derivative with respect to
   $\tilde\epsilon=\epsilon^{1/5}$ at $\epsilon=0$ equals
   \[
   \frac{4}{9}\frac{1}{(1-s)^2} + \frac{4}{9}\frac{s}{(1-s)^3}
   - \frac{2}{3}\frac{1}{(1-s)^2} - \frac{4}{9}\frac{s}{(1-s)^3}
   < - \frac{2}{9}
   \]
   which proves the assertion.

   Hence it remains to prove \eqref{Rest},
   where we recall that this is needed only
   when $R(s,0,r)$ is outside the homogeneous core and $\rho(s,R(s,0,r))>0$.
   We abbreviate $R(\tau)=R(\tau,0,r)$ for $\tau \in [0,s]$.
   Then by Lemma~\ref{simpleabeta} and Lemma~\ref{massalongchar},
   \begin{align*}
     \frac{d}{d\tau} R^{3/2}
     &=
     - \frac{3}{2} R^{1/2} \beta(\tau,R) \leq
     - \frac{3}{2}\left(2 m(\tau,R) - C \epsilon^{1/2} R\right)^{1/2}\\
     &\leq
     - \frac{3}{2}\left(2 \mathring m(r) - C \epsilon^{1/2} r^3\right)^{1/2}\\
     &\leq
     - \frac{3}{2}\left(\frac{4}{9} (1-\epsilon^{1/5}) r^3
     - C \epsilon^{1/2} r^3\right)^{1/2}\\
     &=
     -\left(1-\epsilon^{1/5} - C \epsilon^{1/2} \right)^{1/2} r^{3/2}
   \end{align*}
   and hence
   \[
   R(s)^{3/2} \leq (1-\sqrt{1-\epsilon^{1/5} - C \epsilon^{1/2}} s) r^{3/2}
   \]
   as desired; note that in these estimates
   $R(\tau)$ is bounded and bounded away from zero.
\end{proof}
\subsection{Proof of Theorem~\ref{maine} and Proposition~\ref{bh}}
\begin{proof}[Proof of Theorem~\ref{maine}]
  In order to finally prove Theorem~\ref{maine} we
  choose some $\epsilon>0$ sufficiently small so that all the above
  lemmas apply. Lemma~\ref{ba_start} tells us that the solution exists
  on some time interval $[0,T^\ast[\subset[0,\frac{8}{9}]$ and satisfies
  the bootstrap assumptions (i)--(iv) there; $T^\ast$ is chosen maximal
  with these properties.
  If we now apply Lemma~\ref{charest}, Lemma~\ref{simpleabeta},
  Lemma~\ref{betapos}, and finally Lemma~\ref{iback} we see that
  $T^\ast < \frac{8}{9}$ is only possible, if the solution blows
  up at $t=T^\ast$. But if we recall Theorem~\ref{locex} and
  Lemma~\ref{betaprbound} we see that this is not possible,
  and indeed the solution exists on $[0,\frac{8}{9}]$ and satisfies
  all the estimates there.
  
  Since $\mathring \beta(r)<\mathring \beta_D(r) < 1$ for $r< 3/2$ and
  $\mathring \beta(r)
  \leq \frac{2}{3} (1+\epsilon)^2 <1$
  for $r>1$ and $\epsilon$ small, cf.~\eqref{rho0s},
  the initial data contain no trapped surface and
  the existence of $t_0$ follows.

  Next we note that $\beta_D(t,\frac{1}{3}) >1$ for all $t\in]\frac{7}{9},1[$,
  and $\gamma_D(\frac{7}{9}) > \frac{1}{3}$ so that the asymptotically
  flat dust solution given by \eqref{OSdustbeta} with $\alpha=\frac{1}{6\pi}$
  has a trapped surface at such $t$ and $r=\frac{1}{3}$.
  Using Lemma~\ref{homcore} together
  with \eqref{rstarest} and Lemma~\ref{gamma_comp} we see that
  for some fixed $t_1\in ]\frac{7}{9},\frac{8}{9}[$ and $\epsilon$
  sufficiently small
  the surface corresponding to $t=t_1$ and $r=\frac{1}{3}$ is trapped
  for the solution of the Einstein-Vlasov system as well.
  We show that the surface with this radius stays trapped for $t>t_1$.
  To see this we cannot assume that $a=1$ and observe that in the general
  case a surface with coordinates $(t_T,r_T)$ is trapped if
  \[
  \beta(t_T,r_T) > \frac{1}{a(t_T,r_T)}.
  \]
  As we showed above, this condition holds for $t_T=t_1$ and $r_T=\frac{1}{3}$.
  It can be rewritten as
  \[
  A(t_T,r_T) =\left(\frac{1}{a}-\beta\right)
  \left(\frac{1}{a}+\beta\right)(t_T,r_T) < 0.
  \]
  But for $r=r_T=\frac{1}{3}$,
  \begin{align*}
    \partial_t (r(1-A))
    &=
    8 \pi r^2
    \beta \left( \rho + p - \left(a \beta + \frac{1}{a\beta}\right) j\right)\\
    &\geq
    8 \pi r^2 \beta \left(\rho - C \epsilon^{1/2} \rho\right) \geq 0
  \end{align*}
  by the bootstrap assumptions (iii) and (iv), the fact that
  $\beta(t,r_T)+1/\beta(t,r_T) \leq C$ via (i) and Lemma~\ref{massalongchar},
  and for $\epsilon$ sufficiently small.
  But this implies that $\partial_t A \leq 0$,
  and the trapped surface condition remains
  valid for $t>t_T$ and $r=r_T$.
  The assertion on the matter support for large times
  follows from Lemma~\ref{rsuppest} together with \eqref{rstarest}
  and by, if necessary, making $\epsilon$ again smaller one last time.
\end{proof}

The proof of Theorem~\ref{maine} being complete, we turn to the proof
of Proposition~\ref{bh}.

\begin{proof}[Proof of Proposition~\ref{bh}]
  First we consider any characteristic
  $s\mapsto (x(s),v(s))$ with $(x(s),v(s)) \in O$ for some $s$.
  Since for $r=\frac{1}{3}$ and $s>t_2$,
  \[
  \dot r \leq \frac{|w|}{a(s,r) \sqrt{1+w^2+L/r^2}} - \beta(s,r)
  < \frac{1}{a(s,r)} - \beta(s,r) < 0,
  \]
  such a characteristic cannot hit the line $r = \frac{1}{3}$ before it hits
  $t=t_2$. Since  $f=0$ for $t=t_2$ and $r \geq \frac{1}{3}$
  and $f$ is constant along characteristics, $f=0$ on the region $O$.
  
  Next we recall that by Lemma~\ref{rsuppest}
  we have vacuum for $t\in [0,\frac{8}{9}]$
  and $r\geq r^\ast (t) + C \epsilon^{1/5}$. If we recall \eqref{rstarest}
  and the explicit form of $\gamma_D$, it follows that we have
  vacuum for $t\in [0,\frac{8}{9}]$ and
  \[
  r^{3/2} \geq 1-t + C_1 \epsilon^{1/5}
  \]
  with some constant $C_1>0$. Denote by
  $R^\ast$ the solution to the initial value problem
  \[
  \dot r = -\sqrt{\frac{2M}{r}},\ r(0)=\mathring r,
  \]
  which can be computed explicitly:
  \[
  R^\ast(t)^{3/2} = \mathring r^{3/2} - \sqrt{\frac{9}{2} M} t,
  \]
  where the parameters must be such that the right hand side is positive.
  We recall that $M=\frac{2}{9} + \mathrm{O}(\epsilon^{1/5})$ so that
  there exists a constant $C_2>0$ such that for $\epsilon$ sufficiently small,
  \[
  R^\ast(t)^{3/2} \geq \mathring r^{3/2} - t - C_2 \epsilon^{1/5}.
  \]
  We now pick $\mathring r$ such that
  \[
  \mathring r^{3/2} = 1 + (C_1+C_2)\epsilon^{1/5}.
  \]
  Then
  \[
  R^\ast(t)^{3/2} \geq 1-t + C_1 \epsilon^{1/5},\ t\in[0,\frac{8}{9}],
  \]
  and this implies that the curve $R^\ast$ runs in the vacuum region.
  The field equations \eqref{ee1} and \eqref{ee2} imply that
  for $t\in[0,\frac{8}{9}]$ and $r\geq R^\ast(t)$
  the quantity $r(1-A)$ is constant,
  and hence by \eqref{Adef-red} and the fact that initially $j$
  vanishes,
  \[
  1-A(t,r) = \frac{2 M}{r}
  \]
  with $M=\int\mathring\rho$. In addition, the field equation \eqref{ee4}
  implies that along $\beta$-characteristics starting at the initial
  hypersurface $a$ remains constant and hence equal to $1$, provided these
  characteristics do not intersect the matter support. A simple bootstrap
  argument thus shows that the curve $R^\ast$ is indeed such a
  $\beta$-characteristic, and
  \begin{align}\label{vacmetric}
    a(t,r)= 1, \beta(t,r)= \sqrt{\frac{2 M}{r}} \ \mbox{for}\
    t\in [0,\frac{8}{9}], r\geq R^\ast(t).
  \end{align}
  Since
  \[
  R^\ast (t) \leq (1-t)^{2/3} + C \epsilon^{1/5}
  \]
  it follows that for $t_2$
  close to $\frac{8}{9}$ and $\epsilon$ sufficiently small,
  $R^\ast (t_2) < \frac{1}{3}$. In view of \eqref{vacmetric}
  this implies that the assertion on the metric remains valid
  on the region $O$ as claimed.
  
  The assertion on the generator of the event horizon follows from the
  explicit form of the metric coefficients on the region $O$. 
\end{proof}

\end{document}